\documentclass[a4paper,USenglish]{lipics-v2021}
\usepackage{amsthm}
\usepackage[utf8]{inputenc}
\usepackage[dvipsnames]{xcolor}
\usepackage[ruled,vlined]{algorithm2e}
\usepackage{tikz}
\usetikzlibrary{decorations.pathreplacing}
\usepackage{stmaryrd}
\definecolor{colorcolette}{rgb}{0.54, 0.17, 0.89}
\definecolor{colormikael}{rgb}{0.98, 0.3, 0.0}
\definecolor{colortistou}{rgb}{0.1, 0.1, 0.9}

\newcommand{\Tistou}[1]{\noindent{\color{colortistou} {\\Tistou: #1}}}

\newcommand{\p}{\operatorname{p}}
\newcommand{\InAlg}{\mathcal A}
\newcommand{\sleepingDuration}{\operatorname{sleepDuration}}
\newcommand{\countVar}{\operatorname{ect}}
\newcommand{\maxCount}{\operatorname{MSPect}}
\newcommand{\ectMinStrictPositif}{\text{GMSPect}}
\newcommand{\correct}{\operatorname{correct}}
\newcommand{\pseudoCorrect}{\operatorname{pseudoCorrect}}

\title{Introducing the Self-Stabilizing SLEEPING Model}

\author{Tistou Fages}{LIP, ENS de Lyon, France}{TBD}{}{}

\author{Colette Johnen}{LaBRI, CNRS, Bordeaux, France}{colette.johnen@labri.fr}{https://orcid.org/0000-0001-7170-4521}{}

\author{Mikaël Rabie}{Université Paris Cité, IRIF, CNRS, Paris, France}{mikael.rabie@irif.fr}{https://orcid.org/0000-0001-6782-7625}{}

\Copyright{Colette Johnen and Mokaël Rabie} 

\ccsdesc[500]{Theory of computation~Distributed algorithms}
\ccsdesc[500]{Mathematics of computing~Discrete mathematics}

\keywords{Distributed Computing Energy Complexity Sleeping Model, fault-tolerance, transient faults self-stabilization } 

\funding{This research was funded, in whole or in part, by the Agence Nationale de la Recherche (ANR), grant ANR-24-CE48-7768-01 (project ENEDISC) and  grant ANR-22-CE48-0001 (projet TEMPOGRAL)}

\hideLIPIcs

\nolinenumbers
  
\authorrunning{Fages, Johnen and Rabie}  
  
\acknowledgements{The authors are thankful to Lélia Blin and Sylvain Gay for the preliminary discussions about the model, as well as Johanne Cohen and Laurence Pilard for the discussions on the algorithms.}

\EventEditors{John Q. Open and Joan R. Access}
\EventNoEds{2}
\EventLongTitle{OPODIS 2026}
\EventShortTitle{OPODIS 2026}
\EventAcronym{OPODIS}
\EventYear{2026}
\EventDate{November 9--13, 2026}
\EventLocation{Rome, Italy}
\EventLogo{}
\SeriesVolume{42}
\ArticleNo{23}

\begin{document}

\maketitle

\begin{abstract}
The SLEEPING LOCAL model introduces a new complexity parameter, the awake complexity, to make distributed algorithms energy-efficient. 
In the synchronous LOCAL model, nodes can now decide to be awake or asleep in each round. 
In a round, only awake nodes can communicate to share information, which consumes energy.
The awake complexity is the maximum number of times a node is activated to produce an output. 
In particular, it often comes at the cost of the total number of rounds required to solve a problem, compared with algorithms in which every node is awake in every round.

In this article, we adapt the notion of awaken rounds to the context of self-stabilization, introducing the Self-Stabilizing SLEEPING model. 
Nodes are no longer required to remain awake at all times. 
However, in self-stabilization, nodes must be activated infinitely often to detect any issue in the system's current state. 
In this model, the complexities are:
\begin{itemize}
    \item How many synchronous rounds are needed to reach a legitimate configuration?
    \item How many times does a node need to be awake to reach this configuration?
    \item How often does a node need to be awake once this configuration is reached?
\end{itemize}
The goal is to minimize those three metrics, and we can expect different trade-offs.

We present energy-efficient algorithms to solve the problems of finding a $(\Delta+1)$-coloring, a Maximal Independent Set, and a Maximal Matching, thanks to new ad hoc sleeping techniques that reduce the awake complexity (i.e., energy consumption) during the convergence phase.

We also propose two transformers that adapt silent self-stabilizing algorithms to the SLEEPING setup. 
The first transformer is pretty simple and deals with low-complexity algorithms. The second is more elaborate and is more energy-efficient when it transforms slow self-stabilizing algorithms.

\end{abstract}

\keywords{Distributed Computing, Self-Stabilization, SLEEPING Model, Energy Efficient Distributed Computing
}

\maketitle
\newpage
\setcounter{page}{1}
\section{Introduction}

\subsection{Motivation}

In a long-lived system that runs forever, one should expect some failures to occur (e.g., memory corruption or network modification). To address these failures, one solution is to run self-stabilizing algorithms that can recover from any problematic configuration. However, detecting one of those failures entails a high energy cost: all nodes must run their detection systems continuously.

In the synchronous LOCAL model, the SLEEPING variant was recently introduced. In this variant, nodes decide when they are actually awake. In each round, only awake nodes are allowed to communicate with their (awake) neighbors. It allows for saving energy and permits nodes to operate only when needed. In particular, we obtain algorithms in which nodes have fewer awake rounds than the total number of rounds in the classical LOCAL model. However, it usually comes at a time cost: the total number of rounds needed before the system reaches its final state is usually considerably larger.

What if we introduce the notion of SLEEPING in Self-Stabilizing algorithms? In particular, it would permit nodes to be non-continuously awake for detecting system failures, thereby saving a considerable amount of energy. On the other hand, we would want the system to be fast enough to recover from a failure once it is detected.

A considerable challenge is to determine what constitutes a good definition of a SLEEPING version of self-stabilizing algorithms. The first reflex is to add, as in the LOCAL version, an internal clock to determine when the node should be activated next. Do we want the internal clock to be public or private? Making it public allows two neighbors to synchronise their clocks. However, it increases the number of public states and requires each node to communicate its updated clock in each round. In this paper, we consider the case where the clock is internal: a node can report how many rounds it will be sleeping, but its neighbors cannot deduce the exact time at which they will be awake again.

Some algorithms operate in an asynchronous scheduler in which a daemon determines which nodes are activated in each round. This becomes challenging in the case of sleeping nodes: should their clock be updated when they are scheduled, or in each round? In the SLEEPING LOCAL model, the advantage of being sleeping is to wait for some event to happen in the meantime. However, in an asynchronous setting, we lose this guarantee. For that reason, we consider a synchronous setting in this article and leave open the question of how to adapt the model to an asynchronous setting.

In the SLEEPING LOCAL model, information can only go through an edge if both endpoints are awake. Several techniques are developed in this model to coordinate activation among neighbors to be energy-efficient while passing information at the appropriate time. In particular, having access to a common clock and starting computation synchronously in round one facilitates coordination. In an asynchronous setting, if both endpoints of an edge have to be awake at the same time to communicate, it would force nodes to be awake at the same time infinitely often to detect an issue. It would impose an additional layer of constraints on the internal clocks. In this article, we consider the state model, in which nodes have access to their neighbors' latest state when they are awake. This allows balancing, given that the public state does not allow nodes to know when their neighbor will next activate (unless the neighbor is awake in every round).

For a classical self-stabilizing algorithm, the usual complexity metric is the number of rounds needed for the system to reach, from any initial configuration, a good configuration. However, unlike the LOCAL model, nodes cannot reach a state in which they are done and stop updating themselves during self-stabilization. One important metric is how often a node is awake over time, particularly when a legitimate configuration is reached.

Note that this frequency directly affects the total number of rounds required for the system to converge to a legitimate configuration. A node can start in this low-frequency state, even though some update must be performed. As a consequence, the later the nodes wake up for the first time, the longer the system takes to detect and broadcast that a configuration update needs to be computed. On the other hand, we want the system to update itself fast enough. As in the SLEEPING LOCAL model, this will introduce interesting trade-offs in complexity.

\subsection{Related Work}

The LOCAL model, introduced in 1987~\cite{linial1987distributive,peleg2000distributed}, aims to capture computation on graphs, in which nodes (or processors) coordinate synchronously to solve a problem locally. In particular, LCL problems~\cite{naor1993can} - problems where a solution can be checked locally, such as coloring, maximal independent set, and maximal matching - are intensively studied. 
Distributed graph coloring is one of the most fundamental and extensively studied problems in the LOCAL model~\cite{BEG22, FYZ25}.
In~\cite{BEG22} is presented a locally-iterative $(\Delta + 1)$-coloring algorithm converging in  $O(\Delta + \log^* n)$ synchronous rounds.
A general challenge is to find the best complexity to solve a given problem with regard to $\Delta$ (the maximal degree of the graph) and $n$ (the size of the network). See~\cite{suomela2013survey} for a survey on algorithms in the LOCAL model.

In 2020, the SLEEPING LOCAL model was introduced in~\cite{chatterjee2020sleeping} to account for energy consumption. 
Nodes decide when to be active (or awake) and when to sleep, with the aim of minimizing the energy consumption (i.e., the number of awake rounds for each node). 
Thanks to the results of~\cite{augustine2024awake,barenboim2021deterministic}, it is known that any global problem has awake complexity $\Theta(\log n)$, compared to $\Theta(n)$ in the LOCAL model. 
In~\cite{BFOR25}, an algorithm is presented to solve $(\Delta+1)$-coloring and Maximal Independent set, deterministically, with an awake complexity of $O(\log^*n\sqrt{\log n})$. 
Randomized algorithms with constant average complexity to compute a Maximal Matching or a Maximal Independent Set are provided in~\cite{dufoulon2023distributed,ghaffari2022average,ghaffari2023distributed}. 
One of the current challenges in this model is studying the awake complexity of the deterministic maximal matching algorithm.
Does the $\Omega(\Delta)$ lower bound in the LOCAL model~\cite{balliu2021lower} hold for the awake complexity?
or an algorithm in $o(\Delta)$ awake rounds can be designed? This is one of the current main open questions in this model.

Energy consumption through choosing when to be awake has been considered in wireless sensor networks. In~\cite{ye2002energy,van2003adaptive,herman2007temporal}, nodes are periodically active. During a round, a node can only communicate (and know about) neighbor nodes that are also active in that round.
In this context, the main challenge is to ensure that each node identifies all of its neighbors and that they coordinate their active and inactive cycles to enable periodic communication in their local neighborhood.
See~\cite{sun2014energy} for a survey.

There are bridges between the LOCAL model, LCL problems, and self-stabilization.
Various self-stabilizing algorithms to solve LCL problems have been proposed (see~\cite{GK10} for a survey). Some are designed for specific graph topologies, such as planar graphs~\cite{GK93} and bipartite graphs~\cite{KK06}. Greedy problems have also been considered~\cite{cohen2023making,HJS03,GT00}.

More generally, any (silent) algorithm in the LOCAL model converging into $T$ synchronous rounds can be transformed into 
a self-stabilizing algorithm that converges  into $3T+2$ rounds
~\cite{LSW09,JDMI24} at the expense of the memory space.

\subsection{Contribution}

We propose three self-stabilizing SLEEPING algorithms that solve well-known distributed problems: $(\Delta+1)$-coloring (Section~\ref{sec:col}), maximal independent set (Section~\ref{sec:mis}), and maximal matching (Section~\ref{sec:mm}). We work in networks where nodes have unique identifiers and share knowledge of an upper bound $M$ on the network size.
Figure~\ref{fig:algs} summarizes these algorithms by presenting the round complexity, the awake complexity before convergence, and the awake frequency after convergence. 
Nodes share an upper bound $M$ on the larger identifier in the network, and $\Delta$ is the maximum degree of the graph.
The round complexity is, of course, important for the three algorithms, but their awake complexities, which depend only on $\Delta$, are close to the step complexity of the best self-stabilizing algorithms without any sleeping nodes~\cite{Turau07, MMPT09, GK10}.

For $(\Delta+1)$-coloring and the maximal independent set problem, a node's wake-up frequency depends on whether it has a conflict. On the other hand, for maximal matching, the frequency remains the same, as a node might need to change its state depending on information at distance 2 (for example, if itself and another node are matched to the same node). The awake frequency before convergence, in the three algorithms, is a unique prime number for each node, depending on their identifier, $\Delta$, and even $M$ for the maximal matching algorithm. Those prime numbers ensure that a mutual exclusion will happen every $(\Delta+1)$ awake round of a node, allowing them to do a step without any interference from one of their neighbor.

The round complexities, which depend on $M$, are based on the Prime Number Theorem~\cite{hadamard1896distribution}, which allows us to state that the $k$-th prime number is $O(k\log k)$.

\begin{figure}[ht!]
    \begin{tabular}{|c|c|c|c|}
      \hline
SS SLEEPING &  & awake & awake  \\
 Algorithm  & round & complexity  & frequency  \\
 & complexity & before & after \\
 &  &  convergence & convergence \\
 \hline \hline
$\Delta+1$ coloring - Algorithm~\ref{alg:coloring}  &
$O(\Delta M\log M)$ & $\Delta+O(1)$ & $O(1/(\Delta M\log M))$ \\
\hline
Maximal independent set - Algorithm~\ref{alg:mis}  &
$O(\Delta M\log M)$ & $2\Delta+O(1)$ & $O(1/(\Delta M\log M))$ \\
\hline
Maximal matching - Algorithm~\ref{alg:mm}  &
$O(\Delta^2 M\log M)$ & $O(\Delta^2)$ & $O(1/(M\log M))$ \\
\hline
\end{tabular}
\caption{Our Self-Stabilizing SLEEPING Algorithms.}
\label{fig:algs}
\end{figure}

We also propose two transformers (Section~\ref{sec:trans}) that build, from
a silent self-stabilizing algorithm, called $\InAlg$,
a silent self-stabilizing SLEEPING algorithm solving the same problem as $\InAlg$. 
In both algorithms, the round complexity depends on $T$, the convergence time of $\InAlg$, and some arbitrary constant $SP$.
The first transformer (that is very simple) requires that $\InAlg$
converges under the weakly fair distributed scheduler.
The second only requires that $\InAlg$ converges under the synchronous scheduler.
Both transformers have the same awake frequency after convergence; however, the convergence time differs. In particular, using the second transformer becomes more interesting if $\InAlg$ has a complexity super-linear in $D$, the diameter of the graph.

 \begin{figure}[ht!]
    \begin{tabular}{|c| c|c|c|c|}
      \hline
SS SLEEPING & round complexity & awake complexity   & awake frequency & Scheduler \\
    transformer   &    & before convergence  &  after convergence & requirement \\ 
     \hline \hline
Algorithm~\ref{alg-sleeping-simple} & $T.SP$ & $T$ &  $1/SP$ & weakly \\
& & & & fair distributed
 \\ 
\hline
Algorithm~\ref{alg-sleeping-fast}  & $3T+(3D+2).SP$ & $3T+(3D+2).SP$ &  $1/SP$ & synchronous\\
\hline
\end{tabular}
\caption{Our Self-Stabilizing SLEEPING tranformers.}
\label{fig:transformers}
\end{figure}
Note that, in~\cite{DIJM24}, there is a transformer that turns any silent synchronous algorithm into an asynchronous self-stabilizing one with the same complexity.
By applying either of the two transformers in Figure~\ref{fig:transformers}, we can turn any silent synchronous algorithm into a self-stabilizing SLEEPING algorithm.


\section{Model}
\label{sec:Model}

\subsection{Self-Stabilizing Algorithms in Static Graphs}
\paragraph*{Communication and computation models}
We consider a distributed system modeled by a non-oriented connected graph $G$.
Two nodes are neighbors in $G (V, E)$ if and only if they can communicate with each other. The number of nodes is denoted by $n$, $D$ denotes the diameter of $G$, and $\Delta$ is the maximal degree of $G$.
The set of neighbors of a node $u\in V$ is denoted by $N(u)$; the closed neighborhood of $u$ (i.e., $N(u) \cup \{u\}$ is denoted by $N[u]$.
In that paper, we study \emph{identified} networks, meaning that every node $u$ has a unique identifier, denoted by $\operatorname{id}_u$.

Every node has a finite set of \textit{variables}.
Some variables, such as identifier, are \textit{constant} inputs provided by the system.
Each node can read its own variables and variables owned by the neighboring nodes, but can write only to its own (non-constant) variables.
The \textit{state} of a node $u$ designates the current value of all its local variables.
An \textit{action} by a node $u$ is the updating of some non-constant variables of $u$ according to its algorithm, its state, and the state of its neighbors.

\paragraph*{Configuration, Execution, Asynchrony.}
A {\em configuration} of an algorithm ${\mathcal A}$ for $V$ is a vector $(s_1, \ldots,s_n)$, where $s_1$ to $s_n$ represent the states of all nodes of $G$.
A node is said to be \textit{enabled} in a configuration $\gamma$ if and only if it can perform an action in $\gamma$. 
If no node is enabled in a configuration $\gamma$, then $\gamma$ is a \textit{terminal} configuration.

We model the asynchrony of the system by a schedule: a function that takes as an input the algorithm and a sequence of configurations $\gamma_0,\dots,\gamma_t$, and returns the empty set if $\gamma_t$ is terminal, and a non-empty subset of the enabled nodes in $\gamma_t$ otherwise.
A scheduler is a class of schedules that share common properties, such as the level of fairness they guarantee regarding when nodes are scheduled.
A taxonomy of schedulers has been proposed in~\cite{DT11}.
In this paper, we only consider two schedulers.
The \textit{synchronous scheduler} ensures that every enabled node updates its variables at each computation step.
The \textit{weakly fair distributed scheduler} ensures that a node cannot be enabled in every configuration of an execution without eventually being scheduled; there exists a computation step in which the node is either scheduled or neutralized.

Given a schedule $\mathcal D$ and an algorithm ${\mathcal A}$, an \emph{execution} of $\mathcal A$ under $\mathcal D$ is a finite or infinite sequence of configuration $\gamma_0,\gamma_1,\dots$ such that $\forall i > 0, \gamma_i$ is the configuration obtained from $\gamma_{i-1}$ after the execution of one simultaneous action by all nodes of $\mathcal D(\mathcal A, \gamma_0,\gamma_1,\dots,\gamma_{i-1})$, the scheduled nodes.
An execution is \textit{maximal} if it is infinite or the last configuration is terminal.
A \emph{computing step} of an execution $\gamma_0,\gamma_1,\dots$ is any tuple $\gamma_i\gamma_{i+1}$ for $i\geq 0$.
If a node $u$ has a variable $\operatorname{var}$, we note $\operatorname{var}_u^r$ its value in configuration $\gamma_r$, or just $\operatorname{var}_u$ if it is unambiguous.

\paragraph*{Self-stabilization.}
Let $\mathcal A$ be a distributed algorithm, $\mathcal{S}$ be a specification (i.e., a predicate over graphs and configuration sequences), and $\mathcal D$ be a scheduler.
\begin{definition}[Self-stabilization]
  \label{def:ss-static}
  Algorithm $\mathcal A$ is self-stabilizing for $\mathcal{S}$ under $\mathcal D$ 
  if for any graph $G$, there exists a subset of configurations $\mathcal L(G)$, 
  called legitimate configurations, such that
  \begin{enumerate}
    \item
      for every configuration $\gamma$ of $\mathcal A$ in $G$, any maximal execution of $\mathcal A$ in $G$ under $\mathcal D$ starting from $\gamma$ contains a legitimate configuration $\gamma' \in \mathcal L(G)$ (Convergence), and
    \item
      for every legitimate configuration $\gamma \in \mathcal L(G)$, any execution $\epsilon$ of $\mathcal A$ in $G$ under $\mathcal D$ starting from $\gamma$ satisfies $\mathcal{S}(G, \epsilon)$ (Correctness).
\end{enumerate}
\end{definition}
If $\mathcal L(G)$ is the set of terminal configurations
 then Algorithm $\mathcal A$ is a \textit{silent} algorithm.

\paragraph*{Convergence time.}
A node $u$ is \emph{neutralized\/} in the computing step $\gamma \rightarrow \gamma'$ if $u$ is enabled in $\gamma$ and not enabled in $\gamma'$, but does not execute any action between these two configurations.
Neutralization occurs when some neighbors of $u$ change their state between $\gamma$ and $\gamma'$, and this change brings $u$ into a situation where $\mathcal{A}$ applied to $u$ does not change its state.
To evaluate the stabilization time, the classical time unit used in the literature is the \emph{Asynchronous round}~\cite{CDDPV16}.
The first asynchronous round of an execution $\epsilon$, noted $\epsilon^{\prime}$, 
is the minimal prefix of $\epsilon$ in which every node that is enabled in the initial configuration either executes an action or becomes neutralized.
The second asynchronous round of $\epsilon$ is the first round of the suffix of $\epsilon$ starting from the last configuration of $\epsilon^{\prime}$, and so on. 
In the case of the synchronous scheduler, each computation step corresponds to a new round.

\subsection{SLEEPING Self-Stabilizing Algorithm}

In the SLEEPING Self-Stabilizing model, nodes still have access to their neighbors' current states, even when sleeping. It differs from the SLEEPING LOCAL model, in which both nodes must be awake to share their current state. In that model, knowledge of a shared round 0, in which all nodes can be awake and aware of their neighbors, allows for the development of elaborate algorithms that cannot be easily adapted to self-stabilization. Self-stabilizing algorithms, in which both nodes must be awake simultaneously to share their state (or even be aware of each other), have been extensively studied in the wireless sensor network model.

To be able to have a notion of awake rounds in our model, we add the following elements:

\paragraph*{Internal clock.}
Each node has a local variable $cl$, called the internal clock, that cannot be read by its neighbors. It is a natural number representing the number of rounds the node waits before activating itself. In each round, the node will first decrement $cl$ by 1. Then, if $cl>0$, the node does not update any of its variables and is not enabled. Otherwise, it is awake. So it executes an action if enabled, based on the current state of its neighborhood. 
It then updates $cl$ to the number of rounds the node must sleep before its next activation 
($cl$ is set to 1 if the node gets to be awake in the next round).

Note that to describe a configuration, we need to know the value of each internal clock. 
An algorithm is considered \textit{silent} if, after a legitimate configuration has been reached, every node variable except the clock remains unchanged.

\paragraph*{Synchrony.}
 In this model, we consider synchronous steps. In each step, each node updates itself, first by decreasing its clock and then by applying an action if that clock hits 0. 
In particular, it means that each step is an actual (asynchronous) round in the Self-Stabilizing model. 
In the rest of the paper, we will directly talk about rounds for each synchronous step the system goes through. 
Configuration $\gamma_r$ corresponds to the states of the system after round $r$. 
We say that a node is awake or activated in round $r$ if, in $\gamma_{r-1}$, its clock has value $1$, and will execute the algorithm.

\paragraph*{Complexities.}
In this model, for a given algorithm $\mathcal{A}$, we are considering three complexities:
\begin{itemize}
    \item Round complexity: the maximal number of (synchronous) rounds needed, from any configuration, to reach a legitimate configuration. 
    \item Awake complexity before convergence: From a given execution $\epsilon= \gamma_0, \gamma_1 \cdots$, given a node $v$, $a^\epsilon_v$ corresponds to how many times $v$ is awake before a legitimate configuration $\gamma _l$  is reached for the first time during $\epsilon$.
    The awake complexity before convergence of $\mathcal{A}$ is the maximal value among all executions and all nodes of $a^\epsilon_v$.
    \item Frequency after convergence: let $\epsilon$ be an execution starting from a legitimate configuration. During $\epsilon$, the frequency of activation of a node $v$ is the average duration between two consecutive rounds where $v$ is awake. 
     The frequency after convergence of $\mathcal{A}$ is the maximal value among all executions starting from a legitimate configuration and all nodes of their frequencies of activation.
\end{itemize}


\section{$(\Delta+1)$-Coloring}\label{sec:col}

In this section, we introduce a SLEEPING Self-Stabilizing algorithm that computes a $(\Delta+1)$-coloring in $O(\Delta)$ awake rounds.

\subsection{Algorithm}

The idea of Algorithm~\ref{alg:coloring} is, for each node, to find a round in which it will be the only one to change its color. 
To that end, every node $u$ picks a unique prime number $\operatorname{p}(\operatorname{id_u})$ large enough, based on its identifier $\operatorname{id_u}$ and $\Delta$. 
As long as a node $u$ is in conflict with one of its neighbors, it will wake up every $\operatorname{p}(\operatorname{id_u})$ rounds to select the smallest available color. 
As each node is using a different prime number, we prove that, during $\Delta +1$ consecutive attempts of $u$ to set up a proper color, there are not 2 rounds in which a neighbor $v$ of $u$ changes also its color. 
Hence, after at most $\Delta+1$ attempts, $u$ will change its color while all of its neighbors are sleeping. 
From then on, $u$ keeps its color forever. 
From that point on, $u$ will wake up every $SP$ rounds, for some sufficiently large shared constant $SP$.

\begin{algorithm}
\caption{Energy-efficient self-stabilizing $(\Delta+1)$-coloring algorithm}\label{alg:coloring}
\textbf{Input:}\\
\quad $\operatorname{id}_u$: identifier of $u$ \\
\quad $M$: a strict upper bound on the largest identifier\\
\quad $\Delta$: the maximum degree of the graph\\
\quad $SP$: an integer being the awake frequency, strictly greater than $\Delta \operatorname{p}(M)$\\

\textbf{Macro:}\\
\quad $\operatorname{p}(x)$: the $x$-th prime number strictly greater than $\Delta + 1$\\

\textbf{Shared variables:}\\
\quad $\operatorname{color}_u$: a color in $[ 0, \Delta]$\\

\textbf{Predicates:}\\
\quad $\operatorname{conflict}(u) \equiv \exists v \in N(u), \operatorname{color}_u =  \operatorname{color}_v$\\

\SetKwBlock{UponWaking}{Upon waking:}{}
\UponWaking{
\nl	\If{$\operatorname{conflict}(u)$}{
\nl\label{acol:l2}        $\operatorname{color}_u:=\min [ 0, \Delta ] \setminus \{ \operatorname{color}_v | v \in N(u) \}$\\
\nl\label{acol:l3}		Sleep for $\operatorname{p}(\operatorname{id}_u)$ rounds.
    }\nl \Else {
\nl		Sleep for $SP$ rounds.
	}
}
\end{algorithm}

\subsection{Analysis}

To prove that Algorithm~\ref{alg:coloring} is correct and its complexity, we use the following key lemma that ensures that if a node has a conflict, it will be solved:

\begin{lemma}\label{lem:color-conflict}
Let $\gamma_0$ be a configuration in which some node $u$ is such that predicate $\operatorname{conflict(u)}$ is satisfied. In the rounds numbered from $1$ to $ r = SP + \Delta\operatorname{p}(\operatorname{id}_u)$, node $u$ is awake at least once with predicate $\operatorname{conflict(u)}$ unverified, or being the only node of its neighborhood to be awake.
\end{lemma}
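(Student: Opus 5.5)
The argument has two phases: first bound when $u$ wakes for the first time, then show that once $u$ is waking periodically with a conflict, some wake-up among its next $\Delta+1$ has no awake neighbor.

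\textbf{First wake-up.} In $\gamma_0$ the clock $cl_u$ is arbitrary. Every assignment of the clock is either $\operatorname{p}(\operatorname{id}_u)$ or $SP$, so we take the clock domain to be bounded by $SP$. Then $u$ wakes in some round $t_0\le SP$. If $\operatorname{conflict}(u)$ does not hold at $t_0$, the lemma holds.

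\textbf{Periodic wake-ups.} Otherwise $u$ recolors at $t_0$ and sleeps $\operatorname{p}(\operatorname{id}_u)$ rounds. As long as $u$ keeps finding a conflict, it wakes in rounds $t_k=t_0+k\operatorname{p}(\operatorname{id}_u)$. The last of these, $t_\Delta$, is at most $SP+\Delta\operatorname{p}(\operatorname{id}_u)=r$. Assume, for contradiction, that at each of the $\Delta+1$ rounds $t_0,\dots,t_\Delta$ the node $u$ has a conflict and some neighbor of $u$ is also awake. By pigeonhole over the at most $\Delta$ neighbors, some neighbor $v$ is awake in two of these rounds, $t_i<t_j$. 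Their distance is $(j-i)\operatorname{p}(\operatorname{id}_u)$ with $1\le j-i\le\Delta$.

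\textbf{Ruling out a neighbor awake twice.} I will prove the stronger claim that each neighbor $v$ is awake in at most one of the rounds $t_0,\dots,t_\Delta$, which is what the pigeonhole needs. Between $t_i$ and $t_j$ node $v$ wakes several times. Each sleep duration of $v$ is either $\operatorname{p}(\operatorname{id}_v)$ (conflict) or $SP$ (no conflict).
\begin{itemize}
\item If any of those sleeps is $SP$, then since $SP>\Delta\operatorname{p}(M)\ge\Delta\operatorname{p}(\operatorname{id}_u)\ge (j-i)\operatorname{p}(\operatorname{id}_u)$, node $v$ cannot wake again by $t_j$.
\item Otherwise all sleeps are $\operatorname{p}(\operatorname{id}_v)$, so $\operatorname{p}(\operatorname{id}_v)\cdot m=(j-i)\operatorname{p}(\operatorname{id}_u)$ for some $m\ge1$. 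Since $\operatorname{p}(\operatorname{id}_v)\neq\operatorname{p}(\operatorname{id}_u)$ are distinct primes, $\operatorname{p}(\operatorname{id}_v)$ divides $j-i\le\Delta$. This is impossible because $\operatorname{p}(\operatorname{id}_v)>\Delta+1$.
\end{itemize}
So $v$ is awake in at most one of the $t_k$, the $\Delta$ neighbors cover at most $\Delta$ of the $\Delta+1$ rounds, and some $t_k\le r$ has $u$ as the only awake node of its neighborhood. A subtlety is that $u$ might lose its conflict before $t_\Delta$. That case is fine, because then $u$ is awake with $\operatorname{conflict}(u)$ unverified, which also satisfies the lemma.

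\textbf{Expected obstacle.} The delicate point is the initial clock values. A neighbor's first wake-up comes from an arbitrary initial clock, so its first gap is not a multiple of its prime. I would handle this by anchoring at $v$'s first awake round inside $[t_0,t_\Delta]$: every later gap of $v$ is an algorithm-assigned duration, so the divisibility and $SP$ arguments above apply from that anchor onward. I would also need to state explicitly that clocks lie in $[1,SP]$ to get the first wake-up bound.
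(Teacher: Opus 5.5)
Your proof is correct and follows essentially the same route as the paper: a first wake-up by round $SP$, then $\Delta+1$ wake-ups spaced by $\operatorname{p}(\operatorname{id}_u)$, pigeonhole over the at most $\Delta$ neighbors, and a contradiction because a neighbor cannot sleep $SP$ rounds inside a window of length $\Delta\operatorname{p}(\operatorname{id}_u)$, while two shared wake-ups would force a gap divisible by both distinct primes, which is larger than $\Delta\operatorname{p}(\operatorname{id}_u)$ since $\operatorname{p}(\operatorname{id}_v)>\Delta+1$. Your explicit bound on the initial clock is the same assumption the paper makes implicitly, and your anchoring concern resolves on its own: $v$ is awake at $t_i$, so all of its later gaps are assigned by the algorithm.
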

\begin{proof} 
Suppose that every time $u$ is awake between rounds $1$ and $r$, $\operatorname{conflict}(u)$ is True, and at least one of its neighbors is awake.

First, observe that $u$ will be awake at least once before or at round $SP$, as it sleeps each time for at most $SP$ rounds. 
Since $\operatorname{conflict}(u)$ is at True each time it is awake, it will each time go to sleep for  $\operatorname{p}(\operatorname{id}_u)$ rounds. 
Hence, it will be awake at least $\Delta+1$ times. 

Since $u$ has at most $\Delta$ neighbors, one of them, say $v$, is awake at least twice at the same round as $u$ among the $\Delta + 1$ rounds in which $u$ is awake. Let's call $r_1$ the first round in which both $u$ and $v$ are awake at the same time, and $r_2$ the next. Note that $r_1 \leq r_2 \leq r_1 + \Delta \operatorname{p}(\operatorname{id}_u)$.

Since $\Delta \operatorname{p}(\operatorname{id}_u) < T$, node $v$ also does not go sleeping for $T$ rounds between $r_1$ and $r_2$, so it is awake every $\operatorname{p}(\operatorname{id}_v)$ rounds.
Hence, $r_1 = r_2 + \operatorname{LCM} ( \operatorname{p}(\operatorname{id}_u),  \operatorname{p}(\operatorname{id}_v))$, where $\operatorname{LCM}(x,y)$ is the \textit{Lowest Common Multiple} of $x$ and $y$.

However, $\operatorname{p}(\operatorname{id}_u)$ and $\operatorname{p}(\operatorname{id}_u)$ are two different prime numbers, so $\operatorname{LCM} ( \operatorname{p}(\operatorname{id}_u), \operatorname{p}(\operatorname{id}_v)) =  \operatorname{p}(\operatorname{id}_u) \operatorname{p}(\operatorname{id}_v)$. Since $\operatorname{p}(\operatorname{id}_v) > \Delta + 1$, this leads to a contradiction with $r_2 \leq r_1 +  \Delta \operatorname{p}(\operatorname{id}_u) $.
\end{proof}

\begin{lemma}\label{lem:no-more-conflict}
Let $\gamma_0$ be a configuration and $u$ such that $\operatorname{conflict}(u) = \operatorname{False}$. We get that $\operatorname{conflict}(u) = \operatorname{False}$ on any configuration $\gamma_r$ with $r\in\mathbb{N}$.
\end{lemma}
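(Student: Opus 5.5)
We argue by induction on $r$: if $\operatorname{conflict}(u)$ is False in $\gamma_r$, then it is False in $\gamma_{r+1}$. The base case $r=0$ is the hypothesis.

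The predicate $\operatorname{conflict}(u)$ depends only on $\operatorname{color}_u$ and on the colors $\operatorname{color}_v$ for $v \in N(u)$. So it suffices to look at which of these colors can change in round $r+1$. In Algorithm~\ref{alg:coloring}, a node modifies its color only at line~\ref{acol:l2}. That line is executed only by a node that is awake and sees a conflict.

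Now examine the nodes of $N[u]$ one by one, using the state in $\gamma_r$.
\begin{itemize}
\item Node $u$ has no conflict in $\gamma_r$. Whether or not it is awake in round $r+1$, it does not execute line~\ref{acol:l2}, so $\operatorname{color}_u^{r+1}=\operatorname{color}_u^r$.
\item A neighbor $v$ that does not change its color keeps $\operatorname{color}_v^{r+1}=\operatorname{color}_v^r\neq \operatorname{color}_u^r$.
\item A neighbor $v$ that changes its color is awake in round $r+1$ and executes line~\ref{acol:l2}, reading the current state $\gamma_r$ of its neighborhood. Its new color is chosen in $[0,\Delta]\setminus\{\operatorname{color}_w^r \mid w\in N(v)\}$. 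Since $u\in N(v)$, we get $\operatorname{color}_v^{r+1}\neq\operatorname{color}_u^r=\operatorname{color}_u^{r+1}$.
\end{itemize}
Hence no neighbor of $u$ shares its color in $\gamma_{r+1}$, and the induction goes through.

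The main point to check is the third case. Several neighbors of $u$ may recolor simultaneously, possibly into each other's colors, and so create conflicts among themselves. Such conflicts are irrelevant here, because only edges incident to $u$ matter, and $u$'s color is fixed during the round.

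Two further points should be made explicit. First, the available set is nonempty, since $|N(v)|\le\Delta<\Delta+1$. Second, the synchronous semantics has every awake node read its neighbors' states in $\gamma_r$.
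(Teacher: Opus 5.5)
Your proof is correct and follows essentially the same argument as the paper. The paper phrases it as a minimal-counterexample contradiction rather than an induction: $u$ cannot recolor since it has no conflict, and any neighbor that recolors avoids $\operatorname{color}_u^{r-1}$, which is still $u$'s color. Your explicit case split on neighbors that do or do not recolor, and your nonemptiness remark, make precise a step the paper leaves implicit.
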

\begin{proof} 
Assume by contradiction that there exists some $r$ such that $\operatorname{conflict}(u) = \operatorname{True}$ in $\gamma_r$. Let us choose the smallest $r$. It means that in round $r$, a node $v\in N[u]$ has changed its color, and we have $\operatorname{color}_u^r=\operatorname{color}_v^r$.

We cannot have $v=u$, as $u$ was not in conflict in $\gamma_{r-1}$, by the choice of $r$. In round $r$, $v$ has applied line~\ref{acol:l2} of the algorithm. It means that $v$ took a color different from $\operatorname{color}_u^{r-1}$. As $u$ did not change its color, we get that $\operatorname{color}_u^r=\operatorname{color}_u^{r-1}$, which leads to a contradiction.
\end{proof}

The following lemma allows us to get the complexity of $\p(M)$:

\begin{lemma}\label{lem:mlogm}
For any $k\in\mathbb{N}$ large enough, the $k$th prime number is $O(k\log k)$.
\end{lemma}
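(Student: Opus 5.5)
The plan is to derive the bound from the Prime Number Theorem, which the introduction cites (Hadamard). Write $p_k$ for the $k$th prime and $\pi(x)$ for the number of primes at most $x$. The Prime Number Theorem says $\pi(x)\sim x/\ln x$. Concretely, for every $\varepsilon>0$ there is $x_0$ such that $\pi(x)\ge (1-\varepsilon)\,x/\ln x$ for all $x\ge x_0$. We fix $\varepsilon=1/2$.

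The key step is to exhibit an explicit $x$ of order $k\log k$ with $\pi(x)\ge k$. Since $\pi$ counts primes up to $x$, this forces $p_k\le x$. We take $x=c\,k\ln k$ with $c=4$. Then
\[
\ln x=\ln 4+\ln k+\ln\ln k\le 2\ln k
\]
for $k$ large enough, because $\ln 4+\ln\ln k\le \ln k$ eventually. This gives
\[
\pi(x)\ge \tfrac12\cdot\frac{4k\ln k}{2\ln k}=k,
\]
provided also $x\ge x_0$, which again holds for $k$ large. Hence $p_k\le 4k\ln k=O(k\log k)$. This is exactly the "for $k$ large enough" clause of the statement.

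The main obstacle is being honest about which form of the Prime Number Theorem is used. Only a lower bound on $\pi(x)$ of order $x/\log x$ is needed. If one prefers to avoid the full theorem, Chebyshev's elementary estimate $\pi(x)\ge c_0\,x/\ln x$ for some constant $c_0>0$ suffices. One then takes $x=C\,k\ln k$ with $C=2/c_0$, and the computation above goes through unchanged with the constants adjusted.

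Finally, we note how the lemma will be used. The macro $\p(x)$ is the $x$th prime strictly greater than $\Delta+1$, so $\p(x)\le p_{x+\Delta+1}$. Applying the bound gives $\p(M)=O((M+\Delta)\log(M+\Delta))=O(M\log M)$, using $\Delta<n\le M$. This is the quantity that enters the round complexity.
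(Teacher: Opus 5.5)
Your proof is correct and follows essentially the same route as the paper: use the Prime Number Theorem to show that $\pi(ck\log k)\ge k$ for all large $k$, which forces the $k$th prime to be at most $ck\log k$. The paper takes $c=2$ and silently replaces $\log(2k\log k)$ by $\log k$. Your choice $c=4$, together with the explicit bound $\ln x\le 2\ln k$, makes that step fully rigorous.
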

\begin{proof}
Let $\pi(x)$ be the number of prime numbers smaller than or equal to $x$. 
By the Prime Number Theorem~\cite{hadamard1896distribution}, we have:
$$\pi(x)=\frac{x}{\log x}+o\left(\frac{x}{\log x}\right)$$

In particular, $\pi(2k\log k)=\frac{2k\log k}{\log k}+o\left(\frac{2k\log k}{\log k}\right)=2k+o(k)$. It implies that for $k$ large enough, the $k$th prime number is at most $2k\log k$.
\end{proof}

\begin{theorem}
Algorithm~\ref{alg:coloring} computes a $(\Delta+1)$-coloring in $SP+\Delta\operatorname{p}(M)= SP+ O(\Delta M \log M) $ rounds. Each node in those rounds is awake at most $\Delta+2=\Delta+O(1)$ times. The awake frequency after convergence is $1/SP$.
\end{theorem}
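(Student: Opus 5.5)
The plan is to combine Lemma~\ref{lem:color-conflict} and Lemma~\ref{lem:no-more-conflict} node by node, then use Lemma~\ref{lem:mlogm} for the asymptotic bound. The legitimate configurations are those in which no node satisfies $\operatorname{conflict}$, i.e.\ proper $(\Delta+1)$-colorings. By Lemma~\ref{lem:no-more-conflict} this set is closed: once $\operatorname{conflict}(u)$ is false it stays false forever. Hence it suffices to show that every node $u$ reaches a configuration without conflict within $SP+\Delta\operatorname{p}(\operatorname{id}_u)\le SP+\Delta\operatorname{p}(M)$ rounds, since $\operatorname{p}$ is increasing and $\operatorname{id}_u<M$.

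\textbf{Convergence.} Fix a node $u$ in conflict in $\gamma_0$ and apply Lemma~\ref{lem:color-conflict}. There are two cases.
\begin{itemize}
\item In some round $t\le r$, $u$ is awake with $\operatorname{conflict}(u)$ false. Then it was already false in $\gamma_{t-1}$.
\item In some round $t\le r$, $u$ is the only awake node of $N[u]$ and is in conflict. Then $u$ applies line~\ref{acol:l2}. The minimum exists because $u$ has at most $\Delta$ neighbors and $\Delta+1$ colors are available. Since no neighbor changes color in that round, $\operatorname{color}_u^t$ differs from all $\operatorname{color}_v^t$, so $\operatorname{conflict}(u)$ is false in $\gamma_t$.
\end{itemize}
In both cases Lemma~\ref{lem:no-more-conflict} keeps $u$ conflict-free afterwards. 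Taking the maximum over all nodes gives convergence by round $SP+\Delta\operatorname{p}(M)$. Lemma~\ref{lem:mlogm} gives $\operatorname{p}(M)\le$ the $(M+\Delta+1)$-th prime $=O((M+\Delta)\log(M+\Delta))$. Since $\Delta<n\le M$, this is $O(M\log M)$, which yields the $O(\Delta M\log M)$ term.

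\textbf{Awake count.} This is the delicate step. Before its first wake-up, $u$ may sleep up to $SP$ rounds, since the initial clock is arbitrary but must be bounded by $SP$; I will assume this constraint on initial clocks. That first wake-up costs one awake round. From then on, every wake-up while in conflict is followed by a sleep of exactly $\operatorname{p}(\operatorname{id}_u)$ rounds. The proof of Lemma~\ref{lem:color-conflict} shows that among any $\Delta+1$ consecutive conflicting wake-ups, at least one is a solitary one, which resolves the conflict. So $u$ has at most $\Delta+1$ conflicting wake-ups, plus possibly one more wake-up before global convergence at which it finds no conflict and starts sleeping for $SP$.

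I expect the main obstacle to be the following interleaving. After $u$ becomes conflict-free, it sleeps $SP$ rounds, and global convergence (bounded by $SP+\Delta\operatorname{p}(M)$) may still be pending. I will argue that $u$ wakes at most once more in this window: it resolves its conflict at or after its first wake-up, which happens in some round $t_0\ge 1$, and then sleeps $SP$ rounds. Bounding the remaining window to rule out a second extra wake-up requires comparing $SP$ with $\Delta\operatorname{p}(M)$, using the hypothesis $SP>\Delta\operatorname{p}(M)$. If that comparison is slightly off, I will absorb the discrepancy into the $O(1)$ term, giving $\Delta+2=\Delta+O(1)$ awake rounds in total.

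\textbf{Frequency.} In a legitimate configuration no node is in conflict. By Lemma~\ref{lem:no-more-conflict} this persists, so every wake-up takes the else branch and sleeps exactly $SP$ rounds. Hence the gap between consecutive activations is $SP$, and the awake frequency after convergence is $1/SP$.
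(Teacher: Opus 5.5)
Your convergence and frequency parts follow the paper's proof: Lemma~\ref{lem:color-conflict} and Lemma~\ref{lem:no-more-conflict} applied node by node, then Lemma~\ref{lem:mlogm}. In places you are more careful than the paper. You bound $\operatorname{p}(M)$ by the $(M+\Delta+1)$-th prime using $\Delta<n\le M$, and you state explicitly that initial clocks are at most $SP$, which Lemma~\ref{lem:color-conflict} uses silently.

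The awake-count step fails at ``it resolves its conflict \dots\ and then sleeps $SP$ rounds''. The solitary wake-up at which $u$ recolors is still a conflicting wake-up, so $u$ executes line~\ref{acol:l3} and sleeps $\operatorname{p}(\operatorname{id}_u)$. Only at the following wake-up, at $t_0+k\operatorname{p}(\operatorname{id}_u)$ with $k\le\Delta+1$ conflicting wake-ups, does it see no conflict and switch to $SP$. The wake-up after that, at $t_0+k\operatorname{p}(\operatorname{id}_u)+SP$, can still precede global convergence. This is because $t_0+(\Delta+1)\operatorname{p}(\operatorname{id}_u)$ can be far below $\Delta\operatorname{p}(M)$ while a distant node with a large prime is still in conflict. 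The hypothesis $SP>\Delta\operatorname{p}(M)$ only rules out the wake-up $2SP$ later. So the correct count is $(\Delta+1)+1+1=\Delta+3$, not $\Delta+2$.

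This bound is attained, so no repair of your argument can give the literal $\Delta+2$. Take $\Delta=2$, so $\operatorname{p}$ lists $5,7,11,13,17,19,\dots$, and the path $a'\,a\,u\,b\,b'\,z\,y\,w\,x$ with the following data.
\begin{itemize}
\item Identifiers: $\operatorname{id}_u=1$, $\operatorname{id}_a=3$, $\operatorname{id}_b=4$, $\operatorname{id}_w=5$, $\operatorname{id}_x=6$, the others in $\{2,7,8,9\}$, and $M=10$.
\item Parameters: $SP=100>\Delta\operatorname{p}(10)=74$.
\item Initial colors along the path: $2,0,0,2,2,0,2,0,0$.
\item Clocks: $u$ and $a$ first wake at round $1$, $b$ at round $6$, every other node at round $100$.
\end{itemize}
Then $u$ wakes in conflict at round $1$ together with $a$ (both move to color $1$) and at round $6$ together with $b$ (both move to $0$). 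It wakes alone at round $11$ and takes color $2$. It then wakes conflict-free at round $16$ and again at round $116$. Meanwhile $w$ and $x$ both move to color $1$ at round $100$, and the last conflict disappears only at round $117$. Thus $u$ is awake $5=\Delta+3$ times before the first legitimate configuration, all within $SP+\Delta\operatorname{p}(M)$ rounds.

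The paper's proof asserts $\Delta+2$ from ``applies line~\ref{acol:l3} at most $\Delta+1$ times and otherwise sleeps $SP$'' and misses the same extra wake-up. What is actually provable is $\Delta+3=\Delta+O(1)$. That matches your fallback, but you should derive it from the explicit count above rather than present it as absorbing an unidentified discrepancy into $\Delta+2$.
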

\begin{proof}
The computation of a $(\Delta+1)$-coloring comes from the fact that each node will reach a configuration in which it has no conflict (Lemma~\ref{lem:color-conflict}), from which it will never be in conflict again (Lemma~\ref{lem:no-more-conflict}). Lemma~\ref{lem:color-conflict} tells us that a node will apply lines~\ref{acol:l3} at most $\Delta+1$ times each and otherwise sleep each time for $T$ rounds, which gives the awake complexity and awake frequency after convergence. 

By choosing $SP=\Delta\operatorname{p}(M)$, we get a convergence time of $O(\Delta \operatorname{p}(M))$ and an awake frequency after convergence of $1/(\Delta \operatorname{p}(M))$.
Let $M>  \Delta +1$ be an upper bound on the larger identifier of a node in the network.
Lemma~\ref{lem:mlogm} allows us to conclude that $\p(M)=O(M\log M)$, giving us the complexities of Figure~\ref{fig:algs}.
\end{proof}

\section{Maximal Independent Set}\label{sec:mis}

In this section, we introduce a SLEEPING Self-Stabilizing algorithm that computes a Maximal Independent Set in $O(\Delta)$ awake rounds.

\subsection{Algorithm}

To build a maximal independent set, nodes must be either in or out of it. In our algorithm, when both a node $u$ and one of its neighbors are in the independent set, $u$ can freely leave the set. 
However, in the case where neither $u$ nor any of its neighbors is in the independent set, we must ensure that $u$ joins the set only if none of its neighbors will do so at the same time, as otherwise we could create a loop where a node will alternately join and leave the set. 
To that end, the action to join the independent set takes two rounds. 
First, the node sets the variable $\operatorname{justWokeUp}$ to True. 
If during the next round, none of its neighbors has this variable set to True, it joins the independent set; otherwise, $u$ will sleep for $\operatorname{p}(\operatorname{id}_u-1)$ before considering joining the independent set again.
This technique ensures that nodes join the independent set via a \emph{mutual exclusion} process: no pair of neighbors can join simultaneously.

The idea of Algorithm~\ref{alg:mis} is similar to the $(\Delta+1)$-coloring algorithm: we use prime numbers to ensure that during a round, a node is the only one in its neighborhood to set the variable $\operatorname{justWokeUp}$ to True.

\begin{algorithm}
\caption{Energy-efficient self-stabilizing maximal independent set algorithm}\label{alg:mis}
\textbf{Input:}\\
\quad $\operatorname{id}_u$: identifier of $u$ \\
\quad $M$: a strict upper bound on the largest identifier\\
\quad $\Delta$: the maximum degree of the graph\\
\quad $SP$: an integer being the awake frequency, strictly greater than $\Delta \operatorname{p}(M)$\\

\textbf{Macro:}\\
\quad $\operatorname{p}(x)$: the $x$-th prime number strictly greater than $\Delta + 1$\\

\textbf{Shared variables:}\\
\quad $\operatorname{justWokeUp}_u$: True if $u$ just woke up on this round, else False\\
\quad $\operatorname{inMIS}_u$: True if $u$ is in the maximal independent set, else False\\

\textbf{Predicates:}\\
\quad $\operatorname{mustLeave}(u) \equiv \operatorname{inMIS}_u \land \exists v \in N(u), \operatorname{inMIS}_v$\\
\quad $\operatorname{canJoin}(u) \equiv \lnot \operatorname{inMIS}_u \land \forall v \in N(u), \lnot \operatorname{inMIS}_v$\\
\quad $\operatorname{awakeAlone}(u) \equiv \operatorname{justWokeUp}_u \land \forall v \in N(u), \lnot \operatorname{justWokeUp}_v$\\

\SetKwBlock{Begin}{Upon waking:}{}
\Begin{
\nl	\If{$\operatorname{canJoin}(u)$}{	
\nl		\If{$\operatorname{justWokeUp}_u$}{
\nl        	$\operatorname{justWokeUp}_u:=\operatorname{False}$\\
\nl			\If{$\operatorname{awakeAlone}(u)$}{
\nl    			$\operatorname{inMIS}_u:=\operatorname{True}$
    		}
\nl\label{aMIS:l6}    		Sleep for $ \operatorname{p}(\operatorname{id}_u) - 1$ rounds.
    	}\nl \Else {
\nl\label{aMIS:l9} $\operatorname{justWokeUp}_u:=\operatorname{True}$\\
\nl    		Sleep for 1 round.
    	}
    }
\nl     \Else {
\nl        \If{$\operatorname{justWokeUp}(u)$}{	
\nl            $\operatorname{justWokeUp}_u:=\operatorname{False}$
        }
\nl\label{aMIS:l14}        \If{$\operatorname{mustLeave}(u)$}{	
\nl           $\operatorname{inMIS}_u:=\operatorname{False}$
        }
\nl\label{aMIS:l15}    	Sleep for $SP$ rounds.
    }
}
\end{algorithm}

\subsection{Analysis}
If a node $u$ wakes up with the variable $\operatorname{justWokeUp}$ at True, we say that \textbf{$u$ is actively awake}.\\

\textbf{Observation}: Let $\gamma$ be a configuration, and let $u$ and $v$ be two neighboring nodes. We cannot have both $u$ and $v$ setting their respective variable $\operatorname{inMIS}$ to True during the round from $\gamma$. 
Indeed, to do so, we would have $\operatorname{awakeAlone}(u)=\operatorname{awakeAlone}(v)=\operatorname{True}$, which is a contradiction of the definition of the predicate $\operatorname{awakeAlone}$.

\begin{lemma}\label{lem:activelyawake}
Let $\gamma_0$ be a configuration in which some node $u$ verifies the predicate $\operatorname{canJoin(u)}$. 
In the rounds numbered from $1$ to $ r = SP + \Delta \operatorname{p}(\operatorname{id}_u)$, node $u$ is awake at least once with predicate $\operatorname{canJoin(u)}$ unverified, or being the only node of its neighborhood to be actively awake.
\end{lemma}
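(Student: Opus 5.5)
We argue by contradiction, mirroring the proof of Lemma~\ref{lem:color-conflict}. Suppose that every time $u$ is awake in rounds $1$ to $r$, $\operatorname{canJoin}(u)$ holds, and whenever $u$ is actively awake, at least one neighbor of $u$ is also actively awake.

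\textbf{Step 1: $u$ has a periodic rhythm.} Since any sleep instruction lasts at most $SP$ rounds, $u$ wakes up at some round $r_0\le SP$. From then on $\operatorname{canJoin}(u)$ holds at each wake-up, so only lines~\ref{aMIS:l9} and~\ref{aMIS:l6} are executed. If $\operatorname{justWokeUp}_u$ is False, $u$ sets it to True and sleeps one round, so it is actively awake in the next round. There, by the contradiction hypothesis, $\operatorname{awakeAlone}(u)$ fails. Hence $u$ resets the flag and sleeps $\operatorname{p}(\operatorname{id}_u)-1$ rounds. Consequently, $u$ is actively awake at rounds $a, a+\operatorname{p}(\operatorname{id}_u), a+2\operatorname{p}(\operatorname{id}_u),\dots$, with $a\le r_0+1$. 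Within the window $[1,r]$ this yields at least $\Delta+1$ actively awake rounds; the one spare round is absorbed by the slack between $SP$ and $\Delta\operatorname{p}(M)$, and I would check this boundary carefully. Moreover, $\operatorname{justWokeUp}_u$ is True only in the round right after a line~\ref{aMIS:l9} step.

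\textbf{Step 2: a neighbor collides twice.} By pigeonhole over at most $\Delta$ neighbors, some neighbor $v$ is actively awake in two of these rounds, $r_1<r_2$, with $r_2-r_1\le \Delta\operatorname{p}(\operatorname{id}_u)$ and $r_2-r_1$ a multiple of $\operatorname{p}(\operatorname{id}_u)$.

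\textbf{Step 3: $v$'s behaviour between $r_1$ and $r_2$.} This is the main obstacle, because unlike the coloring case $v$ need not have $\operatorname{canJoin}(v)$ true throughout. At $r_1$, $v$ has $\operatorname{justWokeUp}_v$ True. First, $\operatorname{canJoin}(v)$ must hold at $r_1$. If it did not, $v$ would take the else branch and sleep $SP>\Delta\operatorname{p}(M)\ge r_2-r_1$ rounds, so $v$ would not be awake at $r_2$. Even if the flag had been True from corruption without $v$ waking, $v$ does not change it while asleep, so I would argue that the flag stays True only across rounds where $v$ is asleep. For this reason, the second collision must instead be counted through the rounds in which $v$ \emph{sets} the flag.

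To handle this cleanly, I would redefine the collision: $u$ fails at round $t$ only if some neighbor has $\operatorname{justWokeUp}$ True in $\gamma_{t-1}$. Except possibly for an initial corrupted value, which can cause at most one failure per neighbor before $v$'s first awakening, this requires that $v$ executed line~\ref{aMIS:l9} in round $t-1$. I would therefore sharpen the pigeonhole to get two genuine line~\ref{aMIS:l9} executions, which is where the margin $SP>\Delta\operatorname{p}(M)$ and possibly one extra period are needed. Between two such executions within fewer than $SP$ rounds, $v$ never took the $SP$ branch, so $v$ follows the same rhythm with period $\operatorname{p}(\operatorname{id}_v)$. Then $r_2-r_1$ is a common multiple of $\operatorname{p}(\operatorname{id}_u)$ and $\operatorname{p}(\operatorname{id}_v)$, two distinct primes. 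Hence $r_2-r_1\ge \operatorname{p}(\operatorname{id}_u)\operatorname{p}(\operatorname{id}_v) > (\Delta+1)\operatorname{p}(\operatorname{id}_u)$, contradicting Step 2. The delicate points are the initial corrupted $\operatorname{justWokeUp}$ values and the off-by-one in counting $\Delta+1$ active wakes, and I expect those bookkeeping issues to be the hardest part.
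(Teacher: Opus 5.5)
Your skeleton is the paper's: periodic active wakes of $u$, pigeonhole over at most $\Delta$ neighbors, periodicity of $v$, and the $\operatorname{LCM}$ of two distinct primes. For the lemma as stated, it is already complete after the first half of your Step~3. If a neighbor $v$ is actively awake at $r_1$ and $r_2$, then $\operatorname{canJoin}(v)$ holds at $r_1$, since otherwise $v$ sleeps $SP>r_2-r_1$ rounds. From then on, $v$'s active wakes are exactly the rounds $r_1+k\operatorname{p}(\operatorname{id}_v)$ until it leaves that regime, and leaving it always costs a sleep of $SP$ rounds. This holds even when $v$'s flag at $r_1$ is a corrupted initial value: $v$ then resets it, sleeps $\operatorname{p}(\operatorname{id}_v)-1$ rounds and executes line~\ref{aMIS:l9}. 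So no restriction to ``genuine'' line~\ref{aMIS:l9} executions is needed.

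What breaks is the second half of Step~3. There you switch to the flag-based notion of collision ($\operatorname{justWokeUp}_v$ True in $\gamma_{t-1}$) and claim that a corrupted value causes at most one failure per neighbor. That is false. Suppose $\operatorname{justWokeUp}_v$ is True in $\gamma_0$ and $v$ first wakes at round $SP$. Then $\operatorname{awakeAlone}(u)$ fails at every active wake of $u$ before round $SP$, roughly $SP/\operatorname{p}(\operatorname{id}_u)$ times, all charged to $v$. This can be arbitrarily many, since $SP$ is only bounded below. If every neighbor does this, only the active wakes in $[SP+1, SP+\Delta\operatorname{p}(\operatorname{id}_u)]$ can be genuine collisions, and there are exactly $\Delta$ of them. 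The pigeonhole then gives nothing, and this route would need a window of about $SP+(\Delta+1)\operatorname{p}(\operatorname{id}_u)$, which is not the stated $r$.

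The stale-flag issue is real, but it concerns the use of the lemma rather than the lemma itself. Case~2 of Lemma~\ref{lem:MIS} needs $\operatorname{awakeAlone}(u)$. That follows from ``only actively awake node'' once every neighbor has woken at least once, e.g.\ from $\gamma_{SP}$ on, as in the $2SP+\Delta\operatorname{p}(M)$ bound.

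On the boundary, the slack $SP>\Delta\operatorname{p}(M)$ does not absorb the spare round. $u$ may first wake at round $SP$ with $\operatorname{justWokeUp}_u$ False. Its active wakes in $[1,r]$ are then $SP+1,\dots,SP+1+(\Delta-1)\operatorname{p}(\operatorname{id}_u)$, only $\Delta$ of them. You need $r=SP+1+\Delta\operatorname{p}(\operatorname{id}_u)$, or a first wake by round $SP-1$. The paper's ``awake at least once before round $SP$'' makes the same slip, which is harmless asymptotically.
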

\begin{proof} 
Suppose that every time $u$ is awake between rounds $1$ and $r$, $\operatorname{canJoin}(u)$ is at True, and $\operatorname{awakeAlone}(u)$ is at False in the rounds in which $u$ is actively awake (i.e., in each configuration where $\operatorname{justWokeUp}_u=\operatorname{True}$, $u$ has a neighbor $v$ such that $\operatorname{justWokeUp}_v=\operatorname{True}$).

First, observe that $u$ will be awake at least once before round $SP$. Since $\operatorname{canJoin}(u)$ is at True each time it is awake, it never goes sleeping for $SP$ rounds and will be actively awake every $\operatorname{p}(\operatorname{id}_u)$ rounds (by construction of the algorithm). By the choice of $r$, this will happen at least $\Delta+1$ times. 

Since $u$ has at most $\Delta$ neighbors, one of them, say $v$, is actively awake at least twice at the same round as $u$ among the $\Delta + 1$ rounds in which $u$ is actively awake. Let's call $r_1$ the first round in which both $u$ and $v$ are awake at the same time, and $r_2$ the next. Note that $r_1 \leq r_2 \leq r_1 + \Delta  \operatorname{p}(\operatorname{id}_u) $.

Since $\Delta \operatorname{p}(\operatorname{id}_u) < SP$, node $v$ also does not go sleeping for $SP$ rounds between $r_1$ and $r_2$, so it is actively awake up every $\operatorname{p}(\operatorname{id}_v)$ rounds.
Hence, $r_1 = r_2 + \operatorname{LCM} ( \operatorname{p}(\operatorname{id}_u),  \operatorname{p}(\operatorname{id}_v))$.

However, $\operatorname{p}(\operatorname{id}_u)$ and $\operatorname{p}(\operatorname{id}_u)$ are two different prime numbers, so $\operatorname{LCM} ( \operatorname{p}(\operatorname{id}_u), \operatorname{p}(\operatorname{id}_v)) =  \operatorname{p}(\operatorname{id}_u) \operatorname{p}(\operatorname{id}_v)$. Since $\operatorname{p}(\operatorname{id}_v) > \Delta + 1$, this leads to a contradiction with $r_2 \leq r_1 +  \Delta \operatorname{p}(\operatorname{id}_u) $.
\end{proof}

\begin{lemma}\label{lem:mustLeaveFalse}
Let $u$ be a node.
Let  $\gamma_0, \gamma_1, \cdots$ be an execution starting from a configuration where $\operatorname{inMIS}_u = \operatorname{False}$. 
 We have $\operatorname{mustLeave}(u) = \operatorname{False}$ on any configuration $\gamma_r$ with $r\geq 0$.
\end{lemma}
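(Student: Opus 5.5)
The plan is to prove, by induction on $r$, an invariant that is stronger than the statement. The invariant $I_r$ says that in $\gamma_r$, either $\operatorname{inMIS}_u = \operatorname{False}$, or $\operatorname{inMIS}_u = \operatorname{True}$ and $\operatorname{inMIS}_v = \operatorname{False}$ for every $v \in N(u)$. In both cases $\operatorname{mustLeave}(u) = \operatorname{False}$ in $\gamma_r$, so $I_r$ for all $r$ gives the lemma. The base case $I_0$ holds because $\operatorname{inMIS}_u = \operatorname{False}$ in $\gamma_0$ by hypothesis.

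The inductive step rests on one observation about Algorithm~\ref{alg:mis}. A node $w$ can set $\operatorname{inMIS}_w$ to True during the round from $\gamma_r$ to $\gamma_{r+1}$ only if it is awake and both $\operatorname{canJoin}(w)$ and $\operatorname{awakeAlone}(w)$ hold in $\gamma_r$. In particular, no neighbor of $w$ is in the set in $\gamma_r$, and $\operatorname{justWokeUp}_w = \operatorname{True}$ in $\gamma_r$. All other transitions of $\operatorname{inMIS}$ go from True to False. Assuming $I_r$, we split into two cases.

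\emph{Case 1: $\operatorname{inMIS}_u = \operatorname{False}$ in $\gamma_r$.} If $u$ stays out in $\gamma_{r+1}$, then $I_{r+1}$ holds trivially. If $u$ joins, then $\operatorname{canJoin}(u)$ held in $\gamma_r$, so every neighbor $v$ has $\operatorname{inMIS}_v = \operatorname{False}$ in $\gamma_r$. Such a $v$ could only be in the set in $\gamma_{r+1}$ by joining in the same round. That would require $\operatorname{awakeAlone}(v)$ while $\operatorname{justWokeUp}_u = \operatorname{True}$, which the Observation preceding Lemma~\ref{lem:activelyawake} rules out. So every neighbor is out in $\gamma_{r+1}$, and $I_{r+1}$ holds.

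\emph{Case 2: $\operatorname{inMIS}_u = \operatorname{True}$ and all neighbors are out in $\gamma_r$.} Then $\operatorname{mustLeave}(u)$ is False, so $u$ keeps $\operatorname{inMIS}_u = \operatorname{True}$ (and leaving would not hurt the invariant anyway). No neighbor $v$ can join, since $\operatorname{canJoin}(v)$ fails in $\gamma_r$ because $u$ is in the set. Hence $I_{r+1}$ holds.

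The main subtlety is the simultaneous-join situation in Case 1. It is exactly why the invariant must control the neighbors and not just $\operatorname{inMIS}_u$. It is handled by the mutual-exclusion Observation, that is, by the fact that joining requires $\operatorname{awakeAlone}$, which two adjacent nodes cannot both satisfy in the same configuration.
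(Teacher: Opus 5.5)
Your proof is correct and uses the same two facts as the paper's: a node joins only when $\operatorname{canJoin}$ and $\operatorname{awakeAlone}$ hold in the current configuration, so no neighbor can join while $u$ is in the set, and no neighbor can join in the same round as $u$. The paper argues by contradiction from the last configuration before $r$ with $\operatorname{inMIS}_u=\operatorname{False}$ rather than by your forward invariant. Your version has the small advantage of handling the simultaneous-join case explicitly, which the paper covers only implicitly through ``$u$ is the only node in its neighborhood to be actively awake''.
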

\begin{proof} 
Assume that in $\gamma_r$ we have $\operatorname{mustLeave}(u) = \operatorname{True}$. Then, in $\gamma_r$, $u$ has a neighbor $v$ such that $\operatorname{inMIS}_v = \operatorname{True}$.

By definition of $\operatorname{mustLeave}(u)$, the variable $\operatorname{inMIS}_u= \operatorname{True}$ in $\gamma_r$. 
Let's call $r_1$ the last round before $r$ in which $\operatorname{inMIS}_u = \operatorname{False}$. Note that $r_1\geq0$.

On round $r_1$, $u$ sets the value of $\operatorname{inMIS}_u$ to True, so $u$ is the only node in its neighborhood to be actively awake, and every neighbor of $u$ has its variable $\operatorname{inMIS}$ set to False. 
This is in particular the case for $v$. 
Hence, there is a round $r_2$ between $r_1$ and $r$ where $v$ changes the value $\operatorname{inMIS}_v$ from False to True. 
In the configuration before that round, we have $\operatorname{canJoin}(v) = \operatorname{True}$.
However, in that same configuration, we have $\operatorname{inMIS}_u = \operatorname{True}$, so $\operatorname{canJoin}(v) = \operatorname{False}$. This leads to a contradiction.
\end{proof}

\begin{lemma}\label{lem:misremainsmis} 
From any configuration $\gamma_0$, for any $r\ge SP$, the set $\{u | \operatorname{inMIS}_u=\operatorname{True} \}$ in $\gamma_r$ is independent.
\end{lemma}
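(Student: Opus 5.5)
The plan is to show two things: after round $SP$ every node that was initially in a conflicting pair has left the set, and, from round $SP$ on, no conflicting pair can be created again.

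\textbf{No new conflicts.} Take an adjacent pair $u,v$ with both $\operatorname{inMIS}$ values True in $\gamma_r$. Look at the last round in which one of them, say $u$, switched $\operatorname{inMIS}_u$ from False to True. The only line that sets $\operatorname{inMIS}_u$ to True requires $\operatorname{canJoin}(u)$ in the previous configuration, so $\operatorname{inMIS}_v$ was False then. Hence $v$ also switched to True at some later or equal round.
\begin{itemize}
\item If the switches are simultaneous, this contradicts the Observation: neighbours cannot both join in the same round, since $\operatorname{awakeAlone}$ cannot hold for both.
\item If $v$ switches strictly later, then $\operatorname{inMIS}_u$ was already True in the configuration before $v$'s switch, so $\operatorname{canJoin}(v)$ was False. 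This is a contradiction.
\end{itemize}
This is exactly the reasoning of Lemma~\ref{lem:mustLeaveFalse}. The conclusion is that any adjacent pair both in the set at time $r$ must have had both nodes in the set continuously since $\gamma_0$.

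\textbf{Such pairs disappear by round $SP$.} Suppose $u,v$ are adjacent with $\operatorname{inMIS}$ True from $\gamma_0$ through $\gamma_r$, with $r\ge SP$. The clock of $u$ in $\gamma_0$ is arbitrary, and that is the main obstacle. It is resolved by assuming, as the paper implicitly does when it says ``$u$ will be awake at least once before round $SP$'', that clock values are bounded by $SP$. (If the clock were unbounded, one would cap it at $SP$ in the model.) Under that assumption, $u$ wakes at some round $t\le SP$.

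At that wake-up $\operatorname{inMIS}_u$ is still True, so $\operatorname{canJoin}(u)$ is False. The algorithm therefore takes the else branch, where $\operatorname{mustLeave}(u)$ holds because $\operatorname{inMIS}_v$ is True. So $u$ sets $\operatorname{inMIS}_u$ to False at round $t\le SP\le r$, contradicting the assumption that $\operatorname{inMIS}_u$ stayed True through $\gamma_r$.

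\textbf{Putting it together.} Combining the two steps, in $\gamma_r$ with $r\ge SP$ no adjacent pair is both in the set, so the set is independent. The only delicate points are these:
\begin{itemize}
\item In the first step, one must choose the \emph{last} switching round, so that neither node leaves and rejoins in between; leaving and rejoining would simply move that last switch later.
\item One must handle the case where a node's switch happened before round $1$, i.e.\ it was already in the set at $\gamma_0$. That case is exactly the one covered by the second step.
\end{itemize}
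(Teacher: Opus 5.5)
Your proof is correct and is essentially the paper's argument. The paper invokes Lemma~\ref{lem:mustLeaveFalse} to get that both endpoints of a conflicting edge at round $r$ have been in the set continuously since $\gamma_0$. It then uses that every node wakes within the first $SP$ rounds, takes the else branch with $\operatorname{mustLeave}$ true, and sets $\operatorname{inMIS}$ to False. These are exactly your two steps. Your inline re-derivation of the first step, with the simultaneous-join case settled by the Observation, and your explicit note that initial clock values must be bounded by $SP$ make precise what the paper leaves implicit.
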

\begin{proof}
Suppose there exist two adjacent nodes $u$ and $v$ such that $\operatorname{inMIS}_u = \operatorname{inMIS}_v = \operatorname{True}$ on a round $r \ge SP$. Then $\operatorname{mustLeave}(u) = \operatorname{mustLeave}(v) = \operatorname{True}$ on round $r$.

By Lemma~\ref{lem:mustLeaveFalse}, the value of $\operatorname{inMIS}$ is True for both nodes on every configuration before round $r$. 
However, both $u$ and $v$ were awake at least once before round $r$, as a node can sleep for at most $SP$ rounds. Out of generality, assume that $u$ was awake first in some round $r_1<r$. In that round, it must have applied the action from line~\ref{aMIS:l14} and set its variable $\operatorname{inMIS}_u$ to False. This contradicts the fact that $\operatorname{mustLeave}(u)$ remains True up to round $r$.

This implies that from round $SP$ on, there is no pair of adjacent nodes with both of them having their variable $\operatorname{inMIS}$ at True, which concludes the proof.
\end{proof}

\begin{lemma}\label{lem:IS-increases}
Let $\gamma_0$ be a round in which $\{u | \operatorname{inMIS}_u=\operatorname{True} \}$ is an independent set, and let $u$ be a node such that $\operatorname{inMIS}_u=\operatorname{True}$ in that configuration. For any $r >0$, we have $\operatorname{inMIS}_u^r=\operatorname{True}$.
\end{lemma}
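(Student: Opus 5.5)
We argue by induction on $r$ that two invariants hold together in every $\gamma_r$: (i) the set $\{w \mid \operatorname{inMIS}_w=\operatorname{True}\}$ is independent, and (ii) every node $u$ with $\operatorname{inMIS}_u=\operatorname{True}$ in $\gamma_0$ still has $\operatorname{inMIS}_u=\operatorname{True}$. Both hold in $\gamma_0$ by hypothesis, so we assume them for $\gamma_{r-1}$ and prove them for $\gamma_r$.

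\textbf{Step 1: no node leaves the set.} In Algorithm~\ref{alg:mis}, a node changes $\operatorname{inMIS}$ from True to False only on line~\ref{aMIS:l14}. That line requires $\operatorname{mustLeave}$ to hold in the configuration where the node wakes. Let $w$ have $\operatorname{inMIS}_w=\operatorname{True}$ in $\gamma_{r-1}$. By (i), no neighbor of $w$ is in the set, so $\operatorname{mustLeave}(w)$ is False in $\gamma_{r-1}$ and $w$ keeps $\operatorname{inMIS}_w=\operatorname{True}$ in $\gamma_r$. Note also that $\operatorname{canJoin}(w)$ is False for such a $w$, so it never executes the joining branch.

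\textbf{Step 2: no two adjacent nodes are both in the set.} A node switches $\operatorname{inMIS}$ from False to True only by executing the joining branch, which requires $\operatorname{canJoin}$ to hold in $\gamma_{r-1}$. Hence it has no neighbor in the set in $\gamma_{r-1}$. Step 1 shows that nodes already in the set stay in it, so the only way a new adjacent pair could appear is if two neighbors join in the same round. The Observation preceding Lemma~\ref{lem:activelyawake} rules this out, because $\operatorname{awakeAlone}$ cannot hold for two neighbors in the same configuration. Therefore every edge inside the set in $\gamma_r$ would join two nodes that were both in the set in $\gamma_{r-1}$, which (i) forbids. So (i) holds in $\gamma_r$.

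Invariant (ii) then follows directly from Step 1, which proves the lemma.

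\textbf{Main point of care.} The delicate part is the simultaneity in Step 2. A joiner and an existing member cannot become adjacent, since $\operatorname{canJoin}$ excludes this. Two simultaneous joiners are excluded only by the Observation. It must also be checked that $\operatorname{mustLeave}$ is evaluated on the pre-round configuration $\gamma_{r-1}$, which is how the synchronous semantics defines it.
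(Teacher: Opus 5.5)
Your proof is correct, and its central step is the paper's: a member can leave only via $\operatorname{mustLeave}$, which requires a neighbor in the set in $\gamma_{r-1}$. That contradicts the independence of $\gamma_{r-1}$. The difference is in how you secure that independence. The paper takes a minimal counterexample $r$ and cites Lemma~\ref{lem:misremainsmis}. That lemma only asserts independence for configurations of index at least $SP$, so for $r-1<SP$ the citation does not literally apply. What is really needed is that independence is closed under the algorithm. This could also be extracted from Lemma~\ref{lem:mustLeaveFalse}: two adjacent members of $\gamma_{r-1}$ would have had to be members since $\gamma_0$. Your joint induction proves exactly this closure from the join rule: $\operatorname{canJoin}$ rules out a joiner adjacent to a current member, and the $\operatorname{awakeAlone}$ Observation rules out two adjacent simultaneous joiners. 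Your argument is therefore self-contained and covers all $r>0$ uniformly, at the cost of a few extra lines. It also yields closure of independence as a by-product, which the paper implicitly relies on again in Lemma~\ref{lem:MIS}.
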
 
\begin{proof}
 Suppose there exists $r>0$ such that $\operatorname{inMIS}_u = \operatorname{False}$.
 
Without loss of generality, let's say $r$ is the smallest possible choice. Then $\operatorname{inMIS}_u = \operatorname{True}$ on any round before, and in particular, node $u$ changes the value of $\operatorname{inMIS}_u$ to False on round $r$. It implies that $u$ has a neighbor $v$ with $\operatorname{inMIS}_v = \operatorname{True}$ on this round.

By Lemma~\ref{lem:misremainsmis}, $\{u | \operatorname{inMIS}_u=\operatorname{True} \}$ is an independent set in configuration $\gamma_{r-1}$, but we have $\operatorname{inMIS}_u = \operatorname{inMIS}_v = \operatorname{True}$. This is a contradiction.
\end{proof}

\begin{lemma}\label{lem:MIS}
Let $\gamma_0$ be a round in which $\{u | \operatorname{inMIS}_u=\operatorname{True} \}$ is an independent set. For any  $r > SP + \Delta  \operatorname{p}(M)$, the set $\{u | \operatorname{inMIS}_u=\operatorname{True} \}$ in $\gamma_r$ is maximal independent.
\end{lemma}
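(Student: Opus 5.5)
Since $\{u \mid \operatorname{inMIS}_u=\operatorname{True}\}$ is independent in $\gamma_0$, Lemma~\ref{lem:IS-increases} shows that no node ever leaves the set. Combined with the Observation (two neighbors never join in the same round) and the fact that a node joins only when $\operatorname{canJoin}$ holds, the set stays independent in every $\gamma_r$. So it remains to prove maximality for $r > SP + \Delta\operatorname{p}(M)$. We argue by contradiction. Suppose some node $u$ has $\operatorname{inMIS}_u=\operatorname{False}$ and no neighbor in the set in $\gamma_r$, so $\operatorname{canJoin}(u)$ holds in $\gamma_r$.

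The first step is monotonicity of $\operatorname{canJoin}$ backwards in time. Because the set only grows, $\operatorname{canJoin}(u)$ holding in $\gamma_r$ implies it held in every $\gamma_{r'}$ with $r'\le r$. Indeed, if $u$ or a neighbor were in the set at time $r'$, it would still be in it at time $r$ by Lemma~\ref{lem:IS-increases}. Hence $\operatorname{canJoin}(u)$ holds throughout rounds $0,\dots,r$. In particular $u$ is never awake with $\operatorname{canJoin}(u)$ unverified during rounds $1$ to $SP+\Delta\operatorname{p}(\operatorname{id}_u)\le SP+\Delta\operatorname{p}(M) < r$, using that $\operatorname{p}$ is increasing and $\operatorname{id}_u<M$.

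Next we apply Lemma~\ref{lem:activelyawake} to $\gamma_0$ and $u$. Since the first alternative is excluded, there is a round $r_1\le SP+\Delta\operatorname{p}(\operatorname{id}_u)$ in which $u$ is actively awake with no neighbor having $\operatorname{justWokeUp}=\operatorname{True}$, i.e.\ $\operatorname{awakeAlone}(u)$ holds while $\operatorname{canJoin}(u)$ holds. By the algorithm, $u$ then sets $\operatorname{inMIS}_u:=\operatorname{True}$ in round $r_1<r$. By Lemma~\ref{lem:IS-increases} applied from $\gamma_{r_1}$, where the set is independent as established above, $u$ stays in the set up to $\gamma_r$. This contradicts $\operatorname{inMIS}_u^r=\operatorname{False}$.

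The main obstacle is making sure Lemma~\ref{lem:activelyawake}'s conclusion really yields a join. "Only node of its neighborhood to be actively awake" must translate into $\operatorname{awakeAlone}(u)$ being evaluated in the configuration where $u$ reads $\operatorname{justWokeUp}_u=\operatorname{True}$ and is in the branch guarded by $\operatorname{canJoin}$. I would check this against the code: in that wake, $u$ resets $\operatorname{justWokeUp}_u$ and tests $\operatorname{awakeAlone}$ on the neighbors' current values, so the join indeed happens. A secondary point is the strictness $r > SP+\Delta\operatorname{p}(M)$, which gives $r_1<r$ and lets the set in $\gamma_r$ already contain $u$.
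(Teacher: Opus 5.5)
Your proof is correct and follows essentially the same route as the paper's: you use Lemma~\ref{lem:IS-increases} to show that $\operatorname{canJoin}(u)$ must have held from $\gamma_0$ through $\gamma_r$, then invoke Lemma~\ref{lem:activelyawake} to get a round $r_1 < r$ in which $u$ joins and, again by Lemma~\ref{lem:IS-increases}, stays in the set. The one small difference is how you get independence in every $\gamma_i$: you argue by induction from the independence of $\gamma_0$ (no member leaves, joiners satisfy the $\operatorname{canJoin}$ guard, and the Observation), whereas the paper cites Lemma~\ref{lem:misremainsmis}, which as stated only covers rounds $i \ge SP$, so your justification is the tighter one.
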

\begin{proof}
Let $r > SP+\Delta \operatorname{p}(M)$. By Lemma~\ref{lem:misremainsmis}, the set $\{u | \operatorname{inMIS}_u=\operatorname{True} \}$ is independent in $\gamma_i$ with $i\leq r$.
Suppose this set is not maximal. 
It implies that there exists a node $u$ such that $\operatorname{canJoin}(u) = \operatorname{True}$ in this configuration.

Then, each node $v$ in the closed neighborhood of $u$ verifies $\operatorname{inMIS}_v^r = \operatorname{False}$. 
Lemma~\ref{lem:IS-increases} implies directly that $\operatorname{inMIS}_v = \operatorname{False}$ in any configuration from $\gamma_{0}$ to $\gamma_r$. 
Hence, in $\gamma_{0}$, $\operatorname{canJoin(u)}$ is verified.

By Lemma~\ref{lem:activelyawake}, there exists some round $r_1\le r$ in which either $\operatorname{canJoin(u)}$ is unverified (case 1) or $u$ was the only actively awake node (case 2).

In case 1, it implies that one node $v\in N[u]$ (which can be $u$) has  $\operatorname{inMIS}_v^{r_1}
= \operatorname{True}$. 
Lemma~\ref{lem:IS-increases} ensures that $\operatorname{inMIS}_v$ remains True forever. This implies that currently, in $\gamma_{r}$, $\operatorname{canJoin}(u) = \operatorname{False}$.

In case 2, in round $r_1$, $u$ sets $\operatorname{inMIS}_u$ to True, and by the same reasoning, this variable will never change its value and $\operatorname{canJoin}(u)^r = \operatorname{False}$.
\end{proof}

\begin{theorem}
Algorithm~\ref{alg:mis} computes a Maximal Independent Set in $2SP+\Delta\operatorname{p}(M)$ rounds. Each node in those rounds is awake at most $2\Delta+4=O(\Delta)$ times. The awake frequency after convergence is $1/SP$.
\end{theorem}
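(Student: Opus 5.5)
The plan is to split the execution into two phases and chain the lemmas above. For correctness and round complexity: Lemma~\ref{lem:misremainsmis} shows that from round $SP$ on, the set $\{u \mid \operatorname{inMIS}_u=\operatorname{True}\}$ is independent. Taking $\gamma_{SP}$ as a new initial configuration, Lemma~\ref{lem:MIS} then gives that in every configuration $\gamma_r$ with $r > 2SP+\Delta\operatorname{p}(M)$ (counted from the original start) this set is a maximal independent set. Lemma~\ref{lem:IS-increases} says no node ever leaves it afterwards, and maximality means no node ever satisfies $\operatorname{canJoin}$ again, so the set never changes. To get silence I will also show that $\operatorname{justWokeUp}_u$ stabilizes to False. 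Once $\operatorname{canJoin}(u)$ is false forever, the next awakening of $u$ goes through the else branch, which resets $\operatorname{justWokeUp}_u$ and sleeps $SP$ rounds. Since every node wakes at least once every $SP$ rounds, this adds at most one more $SP$ period, which I would absorb into the stated bound or handle by folding it into the legitimate-configuration definition.

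For the frequency after convergence: in a legitimate configuration every node has $\operatorname{canJoin}$ false and $\operatorname{mustLeave}$ false. Hence every awakening takes line~\ref{aMIS:l15} and sleeps exactly $SP$ rounds, giving frequency $1/SP$.

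For awake complexity, I would count a node $u$'s awakenings by type.
\begin{itemize}
\item Awakenings where $\operatorname{canJoin}(u)$ is false sleep $SP$ rounds. Since convergence occurs within about $2SP+\Delta\operatorname{p}(M) < 3SP$ rounds, there are at most a constant number of these (about $3$–$4$).
\item Awakenings where $\operatorname{canJoin}(u)$ is true come in pairs: an activation setting $\operatorname{justWokeUp}$, then an actively awake round one round later that sleeps $\operatorname{p}(\operatorname{id}_u)-1$. So each attempt costs two awakenings, which explains the factor $2$ in $2\Delta+4$.
\end{itemize}
The key claim is that $u$ makes at most $\Delta+1$ consecutive attempts before $\operatorname{canJoin}(u)$ becomes false or $u$ joins. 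This is the counting inside Lemma~\ref{lem:activelyawake}: among $\Delta+1$ actively awake rounds spaced $\operatorname{p}(\operatorname{id}_u)$ apart, no neighbor can collide twice, because $\operatorname{LCM}(\operatorname{p}(\operatorname{id}_u),\operatorname{p}(\operatorname{id}_v)) > \Delta\operatorname{p}(\operatorname{id}_u)$. Once $\operatorname{canJoin}(u)$ is false after round $SP$, it stays false by Lemma~\ref{lem:IS-increases}, so there is only one such streak of attempts after round $SP$.

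The main obstacle is the attempts made before round $SP$, when nodes may still leave the set and $\operatorname{canJoin}(u)$ may toggle. I would argue that attempt streaks only arise while $\operatorname{canJoin}(u)$ holds. A streak that is interrupted ends with a long $SP$ sleep, and only $O(1)$ such sleeps fit in the convergence window. I would then try to show that the total number of attempts across all streaks is still bounded by $\Delta+1$ plus a constant. The route is to apply the collision argument globally over the whole window of length $\Delta\operatorname{p}(\operatorname{id}_u)$, rather than per streak. If this tightening fails, the fallback gives $O(\Delta)$ with a slightly larger constant, which still matches the $O(\Delta)$ claim.
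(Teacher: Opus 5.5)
Your treatment of correctness, of the $2SP+\Delta\operatorname{p}(M)$ round bound (Lemma~\ref{lem:misremainsmis} up to round $SP$, then Lemma~\ref{lem:MIS} restarted from $\gamma_{SP}$), and of the $1/SP$ frequency is the paper's own argument. There is a genuine gap in the awake-complexity count, and it is not where you place it.

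Toggling of $\operatorname{canJoin}(u)$ is harmless. A node that sets $\operatorname{inMIS}$ through $\operatorname{awakeAlone}$ never leaves again; Lemma~\ref{lem:mustLeaveFalse} shows this from any configuration. So once $\operatorname{canJoin}(u)$ switches from true to false it stays false, and $u$ has a single streak of attempts in the whole execution.

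What actually fails is your key claim that a streak has at most $\Delta+1$ attempts. The LCM/pigeonhole argument of Lemma~\ref{lem:activelyawake} only controls neighbours that raise $\operatorname{justWokeUp}$ periodically. It says nothing about a neighbour $v$ that starts \emph{asleep} with a stale $\operatorname{justWokeUp}_v=\operatorname{True}$ and a clock close to $SP$. Such a $v$ falsifies $\operatorname{awakeAlone}(u)$ at every attempt of $u$ until $v$ first wakes.

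Here is a concrete instance. Take a single edge $u$--$v$, so $\Delta=1$, and $\operatorname{id}_u=1$, so $\operatorname{p}(\operatorname{id}_u)=3$. Both $\operatorname{inMIS}$ are false, $\operatorname{justWokeUp}_v=\operatorname{True}$ with $v$'s clock equal to $SP$, and $u$'s clock equals $1$. Then $u$ is awake in rounds $1,2,4,5,7,8,\dots$, and its attempts at rounds $2,5,8,\dots$ all fail. The independent set stays empty, hence not maximal, until round $SP$. So $u$ is awake about $2SP/3>2\operatorname{p}(M)/3$ times before convergence, which grows like $M\log M$.

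In general the count is of order $SP/\operatorname{p}(\operatorname{id}_u)$. With $SP>\Delta\operatorname{p}(M)$ this is neither $2\Delta+4$ nor $O(\Delta)$. Hence neither your ``global collision'' tightening nor your $O(\Delta)$ fallback can succeed. The paper's proof has the same hole: it simply reads ``lines~\ref{aMIS:l6} and~\ref{aMIS:l9} are applied at most $\Delta+1$ times each'' off Lemma~\ref{lem:activelyawake}.

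What your plan does establish is the bound counted from round $SP$ on. By then every node has woken once, so a raised flag means that neighbour is actively awake in that very round. The LCM argument then goes through, and the single streak has at most $\Delta+1$ further attempts, i.e.\ $2\Delta+O(1)$ awakenings. To get $O(\Delta)$ over the whole execution you need an extra ingredient. One option is to choose the periods in $[B,2B]$ as in Algorithm~\ref{alg:mm}, so that $SP/\operatorname{p}(\operatorname{id}_u)=O(\Delta)$.
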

\begin{proof}
The computation of a Maximal Independent Set comes from Lemmas~\ref{lem:misremainsmis} and \ref{lem:MIS}. Lemma~\ref{lem:activelyawake} tells us that a node will apply lines~\ref{aMIS:l6} and \ref{aMIS:l9} at most $\Delta+1$ times each and otherwise sleep each time for $SP$ rounds, which gives the awake complexity and awake frequency after convergence. 

By choosing $SP=\Delta\operatorname{p}(M)$, we get a convergence time of $O(\Delta \operatorname{p}(M))$ and an awake frequency after convergence of $1/(\Delta \operatorname{p}(M))$, matching the complexities of Figure~\ref{fig:algs}.
\end{proof}

\section{Maximal Matching}\label{sec:mm}

This section presents an energy-efficient self-stabilizing algorithm for computing a maximal matching.

\subsection{Algorithm}

\begin{algorithm}
\caption{Energy-efficient self-stabilizing maximal matching algorithm}\label{alg:mm}
\DontPrintSemicolon
\textbf{Input:}\\
\quad $\operatorname{id}_u$: identifier of $u$ \\
\quad $M$: a strict upper bound on the largest identifier\\
\quad $\Delta$: the maximum degree of the graph\\
\;
\textbf{Macro:}\\
\quad $B(M)$: the smallest integer such that the interval $ [ B(M), 2B(M) ]$ contains $M$ prime number\\
\quad $\p_M(x)$: the $x$-th prime number in $ [ B(M), 2B(M) ]$\\
\;
\textbf{Shared variables:}\\
\quad $\operatorname{justWokeUp}_u$: True if $u$ just woke up on this round, else False\\
\quad $\operatorname{status}_u \in \{ \operatorname{inMIS}, \operatorname{outMIS}, \operatorname{matched} \}$: the current status of the node\\
\quad $\operatorname{partner}_u \in N(u)$: the partner of $u$, only relevant if $u$ is matched\\
\quad $\operatorname{seducing}_u \in N(u)$: the node seduced by $u$, only relevant if $u$ is inside of the MIS\\
\;
\textbf{Predicates:}\\
\quad $\operatorname{awakeAlone}(u) \equiv \operatorname{justWokeUp}_u \land \forall v \in N(u), \lnot \operatorname{justWokeUp}_v$\\
\quad $\operatorname{canJoinMIS}(u) \equiv \operatorname{status}_u = \operatorname{outMIS} \land \forall v \in N(v), \operatorname{status}_v \neq \operatorname{inMIS}$\\
\quad $\operatorname{mustLeaveMIS}(u) \equiv \operatorname{status}_u = \operatorname{inMIS} \land \exists v \in N(v), \operatorname{status}_v = \operatorname{inMIS}$\\
\quad $\operatorname{married}(u) \equiv \operatorname{status}_u = \operatorname{status}_{\operatorname{partner}_u} = \operatorname{matched} \land \operatorname{partner}_{\operatorname{partner}_u} = u$\\
\quad $\operatorname{waiting}(u) \equiv \operatorname{status}_u = \operatorname{matched} \land \operatorname{status}_{\operatorname{partner}_u} = \operatorname{inMIS} \land \operatorname{seducing}_{\operatorname{partner}_u} = u$\\
\quad $\operatorname{badMatching}(u) \equiv \operatorname{status}_u = \operatorname{matched} \land \lnot \operatorname{married}(u) \land \lnot \operatorname{waiting}(u)$\\
\quad $\operatorname{gotMatch}(u) \equiv \operatorname{status}_u = \operatorname{inMIS} \land \operatorname{status}_{\operatorname{seducing}_u} = \operatorname{matched}\land \operatorname{partner}_{\operatorname{seducing}_u} = u$\\
\quad $\operatorname{gotSeduced}(u) \equiv \operatorname{status}_u = \operatorname{outMIS} \land \exists v \in N(u), \operatorname{status}_v = \operatorname{inMIS} \land \operatorname{seducing}_v = u$\\
\quad $\operatorname{canSeduce}(u) \equiv \operatorname{status}_u = \operatorname{inMIS} \land \operatorname{status}_{\operatorname{seducing}_u} \neq
\operatorname{outMIS}$\\ \hspace{5.43cm}$\land \exists v \in N(u), \operatorname{status}_v = \operatorname{outMIS}$\\

\;
\SetKwBlock{UponWaking}{Upon waking:}{}
\UponWaking{
\nl	\If{$\operatorname{canJoinMIS}(u)$}{
\nl		\If{$\operatorname{justWokeUp}_u$}{
\nl\label{amm:l3}			\lIf{$\operatorname{awakeAlone}(u)$}{
    			$\operatorname{status}_u:= \operatorname{inMIS}$
    		}
\nl    		$\operatorname{justWokeUp}_u:=\operatorname{False}$;
    		Sleep for $ \p_M(\operatorname{id}_u) - 1$ rounds.
    	}\nl\label{amm:l5} \lElse {
        	$\operatorname{justWokeUp}_u:=\operatorname{True}$;
    		Sleep for 1 round.
    	}
	}\nl \Else {
\nl		\lIf {$\operatorname{justWokeUp}_u$} {
        	$\operatorname{justWokeUp}_u:=\operatorname{False}$
        }
\nl		\lIf {$\operatorname{mustLeaveMIS}(u) \lor \operatorname{badMatching}(u)$} {
			$\operatorname{status}_u := \operatorname{outMIS}$
		}\nl \Else {
\nl			\If{$\operatorname{gotMatch}(u)$} {
\nl				$\operatorname{partner}_u := \operatorname{seducing}_u$ ;
				$\operatorname{status}_u := \operatorname{matched}$
    		}
\nl    		\If {$\operatorname{gotSeduced}(u)$} {
\nl\label{amm:l13}    			$\operatorname{partner}_u := \text{argmin}_{v\in N(u)} \{ id_v, \operatorname{status}_v = \operatorname{inMIS} \land \operatorname{seducing}_v = u \}$\\
\nl\label{amm:l14}    			$\operatorname{status}_u := \operatorname{matched}$
    		}
\nl    		\lIf {$\operatorname{canSeduce}(u)$} {
\nl    			$\operatorname{seducing}_u := \text{argmin}_{v \in N(u)} \{ id_v, \operatorname{status}_v = \operatorname{outMIS} \}$
			}
    	}
\nl    	Sleep for $ \p_M(\operatorname{id}_u)$ rounds.
    }
}
\end{algorithm}

The idea of Algorithm~\ref{alg:mm} is inspired by the constructions from~\cite{cohen2023state}. We use Algorithm~\ref{alg:mis} as a subroutine to compute a maximal independent set $IS$, which will be used to perform some mutual exclusion. Each node $u$ in $IS$ attempts to match with an unmatched node $v$ via a seducing step. Either $v$ accepts, and both nodes are matched and ignored by the independent set subroutine, or $v$ accepts another seducing node, and $u$ tries to match itself with another node. If all of its attempts are rejected, it implies that all of its neighbors are matched. A node in $IS$ leaves the independent set once it is matched. After that, the subroutine from Algorithm~\ref{alg:mis} tries to add new nodes in $IS$. That way, an unmatched node will eventually join $IS$ and try to seduce its unmatched neighbors.

To ensure that the seducing process is not too long, we need to ensure that a node from $IS$ does not wait too long for an answer. To that end, we add a restriction to the prime numbers we use. They must belong to a set $[B, 2B]$, for some $B$ large enough (this set must contain at least $M$ prime numbers; the proof of existence of such $B$ can be found in Lemma~\ref{lem:Bexists}). With this restriction, we know that for any pair of neighbors $u$ and $v$, if $u$ (resp. $v$) wakes up every $\p_M(\operatorname{id}_u)$ (resp. $\p_M(\operatorname{id}_v)$) rounds, $v$ will be awake at least once between 3 consecutive awake rounds of $u$.

In this algorithm, each node $u$ wakes up every $\p_M(\operatorname{id}_u)$ rounds. In each of those rounds, we say that the node is \emph{commonly awake}. A node can also be awake in one more round every $\p_M(\operatorname{id}_u)$ rounds, for the $\operatorname{justWokeUp}$ part of the maximal independent set subroutine. This is necessary as computing nodes need feedback from their neighbor. This system ensures some pseudo-synchrony. Contrarily to Algorithms~\ref{alg:coloring} and~\ref{alg:mis}, the frequency of waking up of a node does not change after having finished its computation. This is due to the fact that in maximal matching, an edge can be in the independent set if \textit{both} endpoints are happy about it.

\subsection{Analysis}\label{sec:proof-mm}

\subsubsection{Discussions around synchrony and $B$}

The following lemma is key to our construction: it allows us to ensure some pseudo-synchrony in the nodes' awakening schedules.

\begin{lemma} 
\label{lem:Synchrony}
If a node $u$ is commonly awake during round $r$, 
every neighbor of $u$ will be commonly awake at least once between rounds $r+1$ and $r + 2\p_M(\operatorname{id}_u)-1$.
\end{lemma}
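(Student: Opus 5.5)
The plan is to read off from Algorithm~\ref{alg:mm} the exact spacing between consecutive common awakenings of a node, and then use that all primes $\p_M(\cdot)$ lie in $[B(M),2B(M)]$.

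\textbf{Step 1: spacing of common awakenings.} In every branch of the algorithm, a node $v$ either sleeps exactly $\p_M(\operatorname{id}_v)$ rounds, or performs the two-step join attempt. In the latter case it sleeps $1$ round and then $\p_M(\operatorname{id}_v)-1$ rounds, so the total is again $\p_M(\operatorname{id}_v)$ rounds. The intermediate round, in which $\operatorname{justWokeUp}_v$ is True, is not counted as a common awakening. Hence, once $v$ is commonly awake at some round $s$, its next common awakening is exactly at round $s+\p_M(\operatorname{id}_v)$. Consequently, between two consecutive common awakenings of any node $v$ there are exactly $\p_M(\operatorname{id}_v)-1$ rounds without one.

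The delicate point is the initial configuration, where the clock $cl$ is arbitrary. My plan is to argue that the clock of $v$ after any of its awake rounds is at most $\p_M(\operatorname{id}_v)\le 2B(M)$. If the clocks are not bounded initially, the lemma is meant to apply once every node has been awake at least once; in that case I would state this caveat explicitly, or assume $cl_v\le\p_M(\operatorname{id}_v)$ in $\gamma_0$.

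\textbf{Step 2: a window argument.} Fix a neighbor $v$ of $u$. The window $[r+1,\,r+2\p_M(\operatorname{id}_u)-1]$ contains $2\p_M(\operatorname{id}_u)-1$ consecutive rounds. Since both primes lie in $[B(M),2B(M)]$, we have $\p_M(\operatorname{id}_v)\le 2B(M)\le 2\p_M(\operatorname{id}_u)$. Because both are odd primes (assuming $B(M)>2$), equality $\p_M(\operatorname{id}_v)=2\p_M(\operatorname{id}_u)$ is impossible, so $\p_M(\operatorname{id}_v)\le 2\p_M(\operatorname{id}_u)-1$.

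By Step~1, the common awakenings of $v$ form an arithmetic progression with step $\p_M(\operatorname{id}_v)$. Any interval of $\p_M(\operatorname{id}_v)$ consecutive rounds, taken after $v$'s first common awakening, therefore contains at least one of them. Hence $v$ is commonly awake at some round of the window.

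\textbf{Main obstacle.} The main difficulty is Step~1 at the start of an execution: a corrupted clock may delay $v$'s first awakening, and a node that is in the middle of the $\operatorname{justWokeUp}$ handshake has its common awakenings offset. I would handle the handshake carefully by identifying the common awakening with the round where $\operatorname{justWokeUp}_v$ is set. For the clock, I would rely on the fact that after $v$'s first wake-up every sleep period is bounded by $\p_M(\operatorname{id}_v)$. This is where I expect an additional hypothesis, namely considering rounds $r$ after all nodes have woken once, to be needed.
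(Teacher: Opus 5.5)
Your route is the paper's: common awakenings of a neighbor $v$ recur with period $\p_M(\operatorname{id}_v)$, and $\p_M(\operatorname{id}_v)<2\p_M(\operatorname{id}_u)$ because all assigned primes lie in $[B(M),2B(M)]$. Your parity remark is exactly the justification of the paper's strict inequality $\p_M(\operatorname{id}_v)<2B(M)$.

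The gap is in Step~1: the claim ``next common awakening exactly at $s+\p_M(\operatorname{id}_v)$'' is false. The paper's one-line proof asserts the same fixed period without justification. Write $p_u=\p_M(\operatorname{id}_u)$ and $p_v=\p_M(\operatorname{id}_v)$. The join handshake can be interrupted. Let $v$ be commonly awake in round $s$ with $\operatorname{canJoinMIS}(v)$ true, so it sets $\operatorname{justWokeUp}_v$ and sleeps one round. Meanwhile a neighbor $w$, justly awake in that same round $s$, enters the MIS. This is possible because $\operatorname{awakeAlone}(w)$ is evaluated on $\gamma_{s-1}$, where $\operatorname{justWokeUp}_v$ is still False. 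In round $s+1$, $\operatorname{canJoinMIS}(v)$ fails, $v$ takes the else-branch and sleeps $p_v$ rounds, so its next common awakening is at $s+1+p_v$. Consecutive common awakenings are thus $p_v$ or $p_v+1$ apart. Your window argument then needs $p_v+1\le 2p_u-1$, which parity alone does not give: it fails exactly when $p_u=B(M)$ and $p_v=2B(M)-1$.

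The missing ingredient is the identifier range. Identifiers are at most $M-1$, while $[B(M),2B(M)]$ contains at least $M$ primes. Hence $p_v\le\p_M(M-1)\le\p_M(M)-2\le 2B(M)-3$, and so $p_v+1\le 2B(M)-2<2p_u-1$. Alternatively, widen the window by one round, which is all that Lemma~\ref{lem:toMatched} uses.

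On the initial configuration, your first caveat (take $r$ after $v$ has been awake once) is the right one and is genuinely needed. Assuming $cl_v\le p_v$ in $\gamma_0$ is not sufficient. $\operatorname{justWokeUp}_v$ may be True in $\gamma_0$, in which case $v$'s first awakening is a just one. With $cl_v=p_v$ and $\operatorname{canJoinMIS}(v)$ true, its first common awakening is then at round $2p_v-1$. That lies outside $[2,2p_u]$ when $u$ is commonly awake in round~$1$ and $p_v>p_u$. If instead $v$ has been awake at least once by round $r$, check the type of its last awakening at or before $r$. In either case its next common awakening lies in $[r+1,r+p_v+1]$, and the bound above concludes.
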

\begin{proof} 
Any neighbor $v$ of $u$ is commonly awake with period $\p_M(\operatorname{id}_v)$. 
Since all primes assigned by function $\p_M$ lie in $[ B(M), 2B(M) ]$, we have $\p_M(\operatorname{id}_u) \geq B(M)$ and $\p_M(\operatorname{id}_v) < 2B(M)$, so $\p_M(\operatorname{id}_v) < 2B(M) < 2\p_M(\operatorname{id}_u)$. 
In any interval of $2\p_M(\operatorname{id}_u)$ consecutive rounds, $v$ is awake at least once.
\end{proof}

The following Lemma allows us to argue that $B(M)$ always exists, and gives us some approximation of its value:

\begin{lemma}\label{lem:Bexists}
For any $M$, there exists some $B$ such that the number of prime numbers in $[B,2B]$ is at least $M$.
Moreover, we can find $B$ such as $B=M\log M+o(M\log M)$.
\end{lemma}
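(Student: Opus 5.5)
The plan is to count primes in $[B,2B]$ with the Prime Number Theorem, in the same way Lemma~\ref{lem:mlogm} used it. With $\pi(x)$ the number of primes at most $x$, the number of primes in $[B,2B]$ is at least $\pi(2B)-\pi(B)$. By the Prime Number Theorem,
\[
\pi(2B)-\pi(B)=\frac{2B}{\log 2B}-\frac{B}{\log B}+o\!\left(\frac{B}{\log B}\right).
\]
Since $\log 2B=\log B+O(1)$, we have $\frac{2B}{\log 2B}=\frac{2B}{\log B}(1+o(1))$. Hence
\[
\pi(2B)-\pi(B)=\frac{B}{\log B}+o\!\left(\frac{B}{\log B}\right),
\]
which tends to infinity with $B$. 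This already proves the existence part: for every $M$, some $B$ has at least $M$ primes in $[B,2B]$. Alternatively, Bertrand-type refinements of the Prime Number Theorem give the same conclusion.

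For the size estimate, set $B=\lceil (1+\varepsilon)M\log M\rceil$ for a fixed $\varepsilon>0$. Then $\log B=\log M+\log\log M+O(1)=(1+o(1))\log M$. Substituting,
\[
\frac{B}{\log B}(1+o(1))=(1+\varepsilon)\,\frac{M\log M}{\log M}\,(1+o(1))=(1+\varepsilon+o(1))\,M,
\]
which is at least $M$ once $M$ is large. So $B(M)\le (1+\varepsilon)M\log M$ for all sufficiently large $M$, and every $\varepsilon>0$. The matching lower bound, $B(M)\ge (1-\varepsilon)M\log M$ for large $M$, comes from the same computation: with $B=(1-\varepsilon)M\log M$ the count is $(1-\varepsilon+o(1))M<M$.

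The monotonicity of $B\mapsto\pi(2B)-\pi(B)$ is not needed for the existence claim. The lower bound needs more care, because $B(M)$ is defined as the smallest admissible $B$ and the count is not monotone in $B$. I would handle this by applying the asymptotic uniformly: for every $B'\le (1-\varepsilon)M\log M$, the count is at most $(1-\varepsilon/2)M$ for large $M$, because $x/\log x$ is increasing. Together, the two bounds give $B(M)=M\log M+o(M\log M)$.

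The main obstacle I expect is keeping the $o(\cdot)$ terms honest. The error term of the Prime Number Theorem is relative to $B/\log B$, and it has to be converted into a relative error in $M$. This works cleanly because $\log B\sim\log M$. Since the paper only uses $B=O(M\log M)$ (through Lemma~\ref{lem:Synchrony} and the round complexity), it would be acceptable to prove just the upper bound $B(M)\le (1+o(1))M\log M$ and note that this is all that is used later.
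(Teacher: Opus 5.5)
Your proof is correct. It rests on the same estimate as the paper's, $\pi(2B)-\pi(B)=(1+o(1))\,B/\log B$, but you invert it differently.

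The paper works at the minimizer itself. It writes $M=\pi(2B(M))-\pi(B(M))$, deduces $B(M)=M\log B(M)+o(B(M))$, takes logarithms to get $\log B(M)\sim\log M$, and reads off $B(M)\sim M\log M$. Because this is a relation about $B(M)$ itself, both directions come out at once, and the non-monotonicity of the window count never has to be confronted. The price is that the equality at the minimizer is used without justification. It does hold up to $\pm1$: the window count at the minimizer is exactly $M$, since passing from $B-1$ to $B$ adds at most one prime to the window, namely $2B-1$, as $2B$ is even.

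Your sandwich between the test values $(1\pm\varepsilon)M\log M$ needs no such fact. The lower bound then requires a bound uniform over all $B'\le(1-\varepsilon)M\log M$. You correctly flag this and handle it through the monotonicity of $x/\log x$; values of $B'$ below the threshold where the prime number theorem estimate applies give a bounded count, so they are harmless for large $M$.

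Your existence step, $B/\log B\to\infty$, is also cleaner than the paper's choice $B=M^2$. That computation writes $(2M)^2$ for $2B$ and records only an $O(M^2/\log M)$ bound, i.e.\ an upper bound where a lower bound is needed.

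Finally, you are right that only $B(M)\le(1+o(1))M\log M$ is used later.
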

\begin{proof}
Let $\pi(x)$ be the number of prime numbers smaller than or equal to $x$. 
By the Prime Number Theorem~\cite{hadamard1896distribution}, which states:
$$\pi(x)=\frac{x}{\log x}+o\left(\frac{x}{\log x}\right)$$
Note that in $[B, 2B]$, we have $\pi(2B)-\pi(B)$ prime numbers. 
In particular, for $B=M^2$, we get:
$\pi(2B)-\pi(B)=\frac{(2M)^2}{\log {(2M)^2}}+o\left(\frac{(2M)^2}{\log (2M)^2}\right)-\frac{M^2}{\log {M^2}}-o\left(\frac{M^2}{\log M^2}\right)=O\left(\frac{M^2}{\log M}\right)$.

It implies that such a $B$ indeed exists and has value at least $M$ (as there are fewer than $M$ prime numbers smaller than $M$).
Let $B(M)$ be the smallest value possible of $B$. 
We have:
$M =\pi(2B(M))-\pi(B(M))= \frac{2B(M)}{\log(2B(M))}-\frac{B(M)}{\log(B(M))}+o\left(\frac{B(M)}{\log(B(M))}\right)$\\
$M =
\frac{B(M)}{\log(B(M))}+o\left(\frac{B(M)}{\log(B(M))}\right)$\\

This equality first implies that $B(M)=M\log(B(M))+o(B(M))$.
Secondly, by taking the logarithm of this equality, we obtain:
$\log M=\log(B(M))+o(\log(B(M)))$ which implies that $\log(B(M))=\log M+o(\log(B(M)))$. This allows us to finally deduce:\\
$B(M)=M\log(B(M))+o(B(M))=M\log M+o(M\log M)$.
\end{proof}

\subsubsection{Safe configurations}

Note that in the algorithm, each node $u$ wakes up every $\p_M(\operatorname{id}_u)$ rounds. They can also wake up right after one of those rounds if they enter in line~\ref{amm:l5} of the algorithm. In the first situation (when $\operatorname{justWokeUp}_u=\operatorname{False}$), we say that $u$ is \emph{commonly awake}, and in the second situation, $u$ is \emph{justly awake}.

\begin{lemma}
\label{lem:mustL}
Let $\gamma_0$ be a configuration in which $\operatorname{mustLeaveMIS}(u)$ evaluates to $\operatorname{False}$ for some node $u$. In any configuration $\gamma_r$ with $r\in\mathbb{N}$, $\operatorname{mustLeaveMIS}(u)$ also evaluates to $\operatorname{False}$.
\end{lemma}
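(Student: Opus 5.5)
We argue by contradiction, mirroring the proof of Lemma~\ref{lem:mustLeaveFalse}. (We read the predicates with the evident correction $v\in N(u)$ in place of $N(v)$.) Suppose $\operatorname{mustLeaveMIS}(u)$ is $\operatorname{True}$ in some configuration $\gamma_r$, and take $r$ minimal. Then $\operatorname{mustLeaveMIS}(u)$ is $\operatorname{False}$ in $\gamma_{r-1}$ and $\operatorname{True}$ in $\gamma_r$. So in $\gamma_r$ we have $\operatorname{status}_u=\operatorname{inMIS}$ and some neighbor $v$ with $\operatorname{status}_v=\operatorname{inMIS}$. In $\gamma_{r-1}$, at least one of $u,v$ had a status different from $\operatorname{inMIS}$.

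The first key step is to observe that the only instruction of Algorithm~\ref{alg:mm} that sets a status to $\operatorname{inMIS}$ is line~\ref{amm:l3}. A node $w$ executes it only if, in the configuration before the round, $\operatorname{canJoinMIS}(w)$ holds and $\operatorname{awakeAlone}(w)$ holds. All other assignments set the status to $\operatorname{outMIS}$ or $\operatorname{matched}$.

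We then split into cases according to which of $u,v$ changed status in round $r$.
\begin{itemize}
\item If both changed, then both executed line~\ref{amm:l3} from $\gamma_{r-1}$. This gives $\operatorname{awakeAlone}(u)$ and $\operatorname{awakeAlone}(v)$ simultaneously for two neighbors, which is impossible, exactly as in the Observation preceding Lemma~\ref{lem:activelyawake}.
\item If only $u$ changed, then $\operatorname{status}_v=\operatorname{inMIS}$ in $\gamma_{r-1}$. But $u$ executed line~\ref{amm:l3}, so $\operatorname{canJoinMIS}(u)$ held in $\gamma_{r-1}$, which forbids any neighbor with status $\operatorname{inMIS}$. Contradiction.
\item If only $v$ changed, the symmetric argument applies: $\operatorname{status}_u=\operatorname{inMIS}$ in $\gamma_{r-1}$ contradicts $\operatorname{canJoinMIS}(v)$.
\end{itemize}
The case where $u$ and $v$ were both already $\operatorname{inMIS}$ in $\gamma_{r-1}$ is excluded by the minimality of $r$.

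The main point needing care is to check that no other branch can turn a status into $\operatorname{inMIS}$. Indeed, the $\operatorname{gotMatch}$, $\operatorname{gotSeduced}$ and $\operatorname{canSeduce}$ branches only write $\operatorname{matched}$ or the $\operatorname{seducing}$/$\operatorname{partner}$ pointers. One must also check that the predicates are evaluated on $\gamma_{r-1}$ under synchronous semantics. Once this is verified, the three cases above exhaust the possibilities and the lemma follows.
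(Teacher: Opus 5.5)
Your proof is correct and follows the paper's route: take the first round in which $\operatorname{mustLeaveMIS}(u)$ becomes true, note that only line~\ref{amm:l3} can set a status to $\operatorname{inMIS}$, and derive a contradiction either from $\operatorname{canJoinMIS}$ or from two neighbors both satisfying $\operatorname{awakeAlone}$ in $\gamma_{r-1}$. Your three-case split spells out what the paper compresses into the claim that both $u$ and $v$ must have had status $\operatorname{outMIS}$ in $\gamma_{r-1}$.
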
 
\begin{proof} 
Suppose, for contradiction, that there exists a configuration $\gamma_r$ in which we have $\operatorname{mustLeaveMIS}(u) = \operatorname{True}$. Let choose $r$ to be minimal, i.e., $\operatorname{mustLeaveMIS}(u) = \operatorname{False}$ in $\gamma_{r - 1}$.

There exists some $v\in N(u)$ such that $\operatorname{status}_u^{r} = \operatorname{status}_v^{r} = \operatorname{inMIS}$. Since $\operatorname{mustLeaveMIS}(u) = \operatorname{False}$ in $\gamma_{r - 1}$, either $u$ or $v$ has changed its status to $\operatorname{inMIS}$ in round $r$.
This means that $u$ or $v$ has applied line~\ref{amm:l3} of the algorithm. However, similarly to Algorithm~\ref{alg:mis}, this can happen only if both $u$ and $v$ have status $\operatorname{outMIS}$ in $\gamma_{r-1}$, and it would imply that both had $\operatorname{awakeAlone}$ verified, which is impossible. This is a contradiction.
\end{proof}

\begin{lemma}
\label{lem:married-stays-married} 
Let $\gamma_0$ be a configuration in which $\operatorname{married}(u)$ evaluates to $\operatorname{True}$ for some node $u$. In any configuration $\gamma_r$ with $r\in\mathbb{N}$, $\operatorname{married}(u)$ also evaluates to $\operatorname{True}$.
\end{lemma}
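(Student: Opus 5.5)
By induction on $r$ it suffices to show a single step: if $\operatorname{married}(u)$ holds in $\gamma_{r-1}$, it holds in $\gamma_r$. Let $v=\operatorname{partner}_u$ in $\gamma_{r-1}$. Then $\operatorname{married}(u)$ says $\operatorname{status}_u=\operatorname{status}_v=\operatorname{matched}$, $\operatorname{partner}_u=v$ and $\operatorname{partner}_v=u$. By symmetry, $\operatorname{married}(v)$ also holds in $\gamma_{r-1}$. The predicate involves only the variables $\operatorname{status}$ and $\operatorname{partner}$ of $u$ and $v$. So it is enough to check that neither $u$ nor $v$ modifies these variables in round $r$; the argument for $v$ is identical with the roles swapped.

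I will go through the algorithm's branches for $u$, assuming it is awake in round $r$ (otherwise nothing changes).
\begin{itemize}
\item Since $\operatorname{status}_u=\operatorname{matched}\neq\operatorname{outMIS}$, $\operatorname{canJoinMIS}(u)$ is false, so $u$ enters the else branch. There it may only reset $\operatorname{justWokeUp}_u$, which is irrelevant to the predicate.
\item $\operatorname{mustLeaveMIS}(u)$ requires $\operatorname{status}_u=\operatorname{inMIS}$, so it is false.
\item $\operatorname{badMatching}(u)$ requires $\lnot\operatorname{married}(u)$, so it is false. Hence the status reset to $\operatorname{outMIS}$ is not executed.
\item In the inner branch, $\operatorname{gotMatch}(u)$ and $\operatorname{canSeduce}(u)$ require $\operatorname{status}_u=\operatorname{inMIS}$, and $\operatorname{gotSeduced}(u)$ requires $\operatorname{status}_u=\operatorname{outMIS}$. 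All three are false, so neither $\operatorname{partner}_u$ nor $\operatorname{status}_u$ is written, nor is anything else that matters.
\end{itemize}
Thus $u$ keeps $\operatorname{status}_u=\operatorname{matched}$ and $\operatorname{partner}_u=v$.

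One point needs care: the predicates are evaluated against the neighbours' states in $\gamma_{r-1}$, and all updates in a round are simultaneous. So $u$'s decision depends only on $\gamma_{r-1}$, where $\operatorname{married}(u)$ holds. The same holds for $v$. Nodes other than $u$ and $v$ may change their own variables, but these do not appear in $\operatorname{married}(u)$.

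The main obstacle is essentially bookkeeping. One must confirm that every assignment to $\operatorname{status}$ or $\operatorname{partner}$ in Algorithm~\ref{alg:mm} is guarded by a predicate that is false when the node is married. In particular, line~\ref{amm:l3} is only reachable under $\operatorname{canJoinMIS}$, which fails here, and lines~\ref{amm:l13}--\ref{amm:l14} are only reachable under $\operatorname{gotSeduced}$, which requires $\operatorname{outMIS}$. Once this case check is done, the induction concludes that $\operatorname{married}(u)$ holds in every $\gamma_r$.
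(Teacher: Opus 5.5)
Your proof is correct and takes the same approach as the paper. The paper picks the minimal $r$ with $\operatorname{married}(u)$ false, which is equivalent to your induction step, and notes that in $\gamma_{r-1}$ every predicate guarding a write to $\operatorname{status}$ or $\operatorname{partner}$ ($\operatorname{canJoinMIS}$, $\operatorname{mustLeaveMIS}$, $\operatorname{badMatching}$, $\operatorname{gotMatch}$, $\operatorname{gotSeduced}$) is false at both $u$ and its partner; your explicit remark that updates are simultaneous and depend only on $\gamma_{r-1}$ spells out what the paper leaves implicit.
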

\begin{proof}
Suppose $\operatorname{married}(u)$ evaluates to True in $\gamma_0$, and suppose for contradiction that there exists a configuration $\gamma_r$ in which $\operatorname{married}(u) = \operatorname{False}$. Let choose $r$ to be minimal, meaning that $\operatorname{married}(u) = \operatorname{True}$ in $\gamma_{r-1}$.

In configuration $\gamma_{r-1}$, we have $\operatorname{status}_u^{r-1} = \operatorname{status}_v^{r-1} = \operatorname{matched}$, $\operatorname{partner}_u^{r-1} = v$ and $\operatorname{partner}_v^{r-1} = u$. Hence, predicates $\operatorname{canJoinMIS}$, $\operatorname{mustLeaveMIS}$, $\operatorname{badMatching}$, $\operatorname{gotMatch}$ and $\operatorname{gotSeduced}$ all evaluate to False. A node can modify its status and its partner only through the rules associated with these predicates, which contradicts $\operatorname{married}(u) = \operatorname{False}$ in configuration $\gamma_r$.
\end{proof}

\begin{lemma} 
\label{lem:badM}
Let $\gamma_0$ be a configuration in which $\operatorname{badMatching}(u)$ evaluates to $\operatorname{False}$ for some node $u$. In any configuration $\gamma_r$ with $r\in\mathbb{N}$, $\operatorname{badMatching}(u)$ also evaluates to $\operatorname{False}$.
\end{lemma}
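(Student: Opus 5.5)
The plan is a minimal-counterexample argument, as in Lemma~\ref{lem:mustL} and Lemma~\ref{lem:married-stays-married}. Assume $\operatorname{badMatching}(u)$ holds in some $\gamma_r$ and take $r$ minimal, so that $\operatorname{badMatching}(u)$ is False in $\gamma_{r-1}$. By definition, in $\gamma_{r-1}$ at least one of the following holds:
\begin{itemize}
\item[(a)] $\operatorname{status}_u\neq\operatorname{matched}$;
\item[(b)] $\operatorname{married}(u)$;
\item[(c)] $\operatorname{waiting}(u)$.
\end{itemize}
In each case I show that $\gamma_r$ satisfies $\operatorname{status}_u\neq\operatorname{matched}$, $\operatorname{married}(u)$ or $\operatorname{waiting}(u)$. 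Throughout, a node's move in a round is computed from its guards evaluated in $\gamma_{r-1}$, and a sleeping node changes nothing.

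\textbf{Case (b).} Lemma~\ref{lem:married-stays-married} gives $\operatorname{married}(u)$ in $\gamma_r$ directly.

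\textbf{Case (c).} Let $w=\operatorname{partner}_u$, so in $\gamma_{r-1}$ we have $\operatorname{status}_w=\operatorname{inMIS}$ and $\operatorname{seducing}_w=u$.
\begin{itemize}
\item Node $u$ is matched and not badly matched. Hence $\operatorname{canJoinMIS}(u)$, $\operatorname{mustLeaveMIS}(u)$, $\operatorname{badMatching}(u)$, $\operatorname{gotMatch}(u)$ and $\operatorname{gotSeduced}(u)$ are all False, and $u$ keeps its status and partner.
\item Node $w$ has $\operatorname{canJoinMIS}(w)$ False, so its possible moves are leaving the MIS, $\operatorname{gotMatch}$, or $\operatorname{canSeduce}$.
\item If $w$ sleeps, or if $\operatorname{gotMatch}(w)$ is False, I show $w$ cannot change $\operatorname{seducing}_w$: this needs $\operatorname{canSeduce}(w)$ to be False, which holds once $\operatorname{gotMatch}$ has been handled. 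Then $\operatorname{waiting}(u)$ persists.
\item If $\operatorname{gotMatch}(w)$ fires, $w$ becomes matched with partner $u$. Since $u$ is matched with partner $w$, this yields $\operatorname{married}(u)$ in $\gamma_r$. This step needs $\operatorname{canSeduce}(w)$ not to overwrite $\operatorname{seducing}_w$ in the same step; it holds because $\operatorname{status}_w$ is already $\operatorname{matched}$ when $\operatorname{canSeduce}$ is evaluated.
\end{itemize}

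\textbf{Case (a).} Here $u$ must have set its status to $\operatorname{matched}$ in round $r$, through one of two rules.
\begin{itemize}
\item \emph{Via $\operatorname{gotMatch}(u)$.} Let $v=\operatorname{seducing}_u$. In $\gamma_{r-1}$, $v$ is matched with partner $u$, and $u$ is $\operatorname{inMIS}$ seducing $v$, so $\operatorname{waiting}(v)$ holds. Thus all guards of $v$ are False, exactly as for $u$ in case (c), so $v$ is frozen. After the move, $u$ is matched with partner $v$, giving $\operatorname{married}(u)$ in $\gamma_r$.
\item \emph{Via $\operatorname{gotSeduced}(u)$ (lines~\ref{amm:l13}--\ref{amm:l14}).} The new partner $w$ is $\operatorname{inMIS}$ with $\operatorname{seducing}_w=u$ in $\gamma_{r-1}$. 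Since $\operatorname{status}_u=\operatorname{outMIS}$ there, both $\operatorname{gotMatch}(w)$ and $\operatorname{canSeduce}(w)$ are False. Hence $w$ keeps $\operatorname{seducing}_w=u$ and, unless it leaves the MIS, gives $\operatorname{waiting}(u)$ in $\gamma_r$.
\end{itemize}

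The main obstacle, arising in case (c) and the $\operatorname{gotSeduced}$ subcase of (a), is excluding that the seducer $w$ leaves the MIS in round $r$ via $\operatorname{mustLeaveMIS}(w)$; that move would leave $u$ matched to an $\operatorname{outMIS}$ node. My first attempt is to use Lemma~\ref{lem:mustL}: show that $\operatorname{mustLeaveMIS}(w)$ is already False when the seduction involving $u$ is set up, and is then preserved. The idea is that $w$ could only have started seducing while in the MIS, and Lemma~\ref{lem:mustL} prevents a conflicting MIS neighbour from appearing afterwards. If this cannot be shown from an arbitrary $\gamma_0$, I would restrict the statement to configurations in which $\operatorname{mustLeaveMIS}$ is False everywhere. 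Such configurations are reached after one common awakening of each node, in the spirit of Lemma~\ref{lem:misremainsmis}, and Lemma~\ref{lem:mustL} keeps that condition invariant afterwards.
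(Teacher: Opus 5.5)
Your decomposition is the same as the paper's: minimal $r$, the three ways $\operatorname{badMatching}(u)$ can be False in $\gamma_{r-1}$, the $\operatorname{gotMatch}$/$\operatorname{gotSeduced}$ split when $u$ becomes matched, and freezing the counterpart. However, your ``first attempt'' cannot remove the obstacle you flag at the end, because that obstacle makes the lemma false as stated.

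\textbf{Counterexample.} Let $w$ be adjacent to $u$ and to $x$. In $\gamma_0$ set $\operatorname{status}_u=\operatorname{matched}$, $\operatorname{partner}_u=w$, $\operatorname{status}_w=\operatorname{status}_x=\operatorname{inMIS}$ and $\operatorname{seducing}_w=u$, with $w$ awake in round~1.
\begin{itemize}
\item $\operatorname{waiting}(u)$ holds, so $\operatorname{badMatching}(u)$ is False in $\gamma_0$, and $u$ keeps its status and partner.
\item $\operatorname{mustLeaveMIS}(w)$ holds, and that branch takes priority over $\operatorname{gotMatch}(w)$. So $w$ becomes $\operatorname{outMIS}$, and $\operatorname{badMatching}(u)$ holds in $\gamma_1$.
\end{itemize}
The $\operatorname{gotSeduced}$ sub-case breaks the same way. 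Take $\operatorname{status}_u=\operatorname{outMIS}$ and let $u,w$ wake together: $u$ takes $w$ as partner while $w$ leaves the MIS. The seduction pointer can simply be present in the arbitrary initial configuration. So there is no earlier ``set-up'' moment at which Lemma~\ref{lem:mustL} could be invoked.

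\textbf{What to do instead.} Commit to your fallback: assume $\operatorname{mustLeaveMIS}$ is False in $\gamma_0$ at every node (at the neighbors of $u$ suffices). Lemma~\ref{lem:mustL} keeps it False, and your cases then close.
\begin{itemize}
\item In case (c), $\operatorname{gotMatch}(w)$ always holds, since it is implied by $\operatorname{waiting}(u)$. So either $w$ sleeps and $\operatorname{waiting}(u)$ persists, or $w$ fires $\operatorname{gotMatch}(w)$ and $\operatorname{married}(u)$ holds.
\item In the $\operatorname{gotSeduced}$ sub-case, $w$ is frozen and $\operatorname{waiting}(u)$ holds in $\gamma_r$.
\end{itemize}
Your worry that $\operatorname{canSeduce}(w)$ might overwrite $\operatorname{seducing}_w$ is moot. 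The predicate $\operatorname{married}$ does not mention $\operatorname{seducing}$, and $\operatorname{partner}_w$ is assigned first. This matters, because your justification there contradicts your own convention that guards are read in $\gamma_{r-1}$.

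\textbf{Relation to the paper.} The paper's proof has exactly the hole you spotted. It asserts that the seducer ``can only update the variable $\operatorname{justWokeUp}$'' and that it ``will set $\operatorname{partner}_v$ to $\operatorname{seducing}_v=u$'', overlooking the $\operatorname{mustLeaveMIS}$ branch. The restricted lemma is all that Corollary~\ref{cor:p(M)} needs. Safe configurations are still closed. Convergence just requires each node to wake twice: once to clear $\operatorname{mustLeaveMIS}$ everywhere, and once more to clear $\operatorname{badMatching}$. That takes about $2\p_M(M)$ rounds instead of $\p_M(M)$, and the asymptotic bounds are unchanged.
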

\begin{proof} 

Suppose $\operatorname{badMatching}(u)$ evaluates to False in $\gamma_0$, and suppose for contradiction that there exists a configuration $\gamma_r$ in which $\operatorname{badMatching}(u) = \operatorname{True}$. Let choose $r$ to be minimal, i.e.,  $\operatorname{badMatching}(u) $ switched to $\operatorname{False}$ in round $r$.
In $\gamma_r$, we have $\operatorname{status}_u = \operatorname{matched}$ and both predicates $\operatorname{married}(u)$ and $\operatorname{waiting}(u)$ evaluate to False.

Since $\operatorname{badMatching}(u) = \operatorname{False}$ in $\gamma_{r-1}$, either $\operatorname{status}^{r-1}_u \neq \operatorname{matched}$, $\operatorname{married}^{r-1}(u) = \operatorname{True}$ or $\operatorname{waiting}^{r-1}(u) = \operatorname{True}$.

Let's suppose $\operatorname{status}^{r-1}_u \neq \operatorname{matched}$.
Then $u$ transitions to $\operatorname{matched}$ in round $r$, which can only happen if either $\operatorname{gotSeduced}(u)$ or $\operatorname{gotMatch}(u)$ is verified.
\begin{itemize}
    \item If $\operatorname{gotSeduced}(u)$ is verified, then in round $r$, $u$ sets $\operatorname{partner}_u$ to some neighbor $v$ satisfying $\operatorname{status}_v = \operatorname{inMIS}$ and $\operatorname{seducing}_v = u$. 
    Hence, $v$ can only update the variable
    $\operatorname{justWokeUp}(v)$ during round $r$, contradicting $\operatorname{waiting}(u) = \operatorname{False}$ in $\gamma_r$.
    \item If $\operatorname{gotMatch}(u)$ is verified, then $u$ sets $\operatorname{partner}_u$ to $v = \operatorname{seducing}_u$ and $\operatorname{status}_u$ to $\operatorname{matched}$ in round $r$. 
    By the definition of $\operatorname{gotMatch}$, the predicate $\operatorname{waiting}(v)$ is verified in $\gamma_{r-1}$. 
    Hence, in round $r$, $v$ can only update the variable
    $\operatorname{justWokeUp}_v$,  contradicting $\operatorname{married}(u) = \operatorname{False}$ in $\gamma_r$.
\end{itemize}

Hence $\operatorname{status}^{r-1}_u = \operatorname{matched}$. 
By Lemma~\ref{lem:married-stays-married}, predicate $\operatorname{married}^{r-1}(u)$ must be False, which implies that  $\operatorname{waiting}^{r-1}(u)=\operatorname{True}$.
Let's call $v = \operatorname{partner}^{r-1}_u$.

Since $\operatorname{status}^{r-1}_u = \operatorname{matched}$, predicates $\operatorname{gotMatch}(u)$, $\operatorname{gotSeduced}(u)$ and $\operatorname{canSeduce}(u)$ evaluate to False in $\gamma_{r-1}$, so $\operatorname{partner}^r_u =v$ and $\operatorname{status}^r_u = \operatorname{matched}$.
So, as predicate $\operatorname{waiting}(u)$ to changes between $\gamma_{r-1}$ and $\gamma_r$, node $v$ must change the values of some its variables in round $r$.

By definition of predicate $\operatorname{waiting}$,
$\operatorname{gotMatch}(v)$ is verified in $\gamma_{r-1}$.
Therefore, if $v$ updates its variables in
round $r$, it will set $\operatorname{partner}_v$ to $\operatorname{seducing}_v = u$, and  $\operatorname{status}_v$ to $\operatorname{matched}$, i. e., $u$ and $v$ will end up married in $\gamma_r$, which contradicts $\operatorname{badMatching}(u)= \operatorname{True}$ in that configuration. 
\end{proof}

\begin{definition}
A configuration is safe if both
predicates $\operatorname{mustLeaveMIS}$ and $\operatorname{badMatching}$ evaluate to $\operatorname{False}$ on any node.
\end{definition}
\begin{corollary}
\label{cor:p(M)}
From any configuration, after the $ \p_M(M)$ rounds, a safe configuration is reached.
The set of safe configurations is closed.
\end{corollary}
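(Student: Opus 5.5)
Closure comes directly from Lemmas~\ref{lem:mustL} and~\ref{lem:badM}. A safe configuration has $\operatorname{mustLeaveMIS}(u)=\operatorname{badMatching}(u)=\operatorname{False}$ for every node $u$. Applying both lemmas to each node, with the safe configuration as $\gamma_0$, shows that both predicates remain $\operatorname{False}$ at every node in all later configurations. So every configuration reachable from a safe one is safe.

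For convergence I will treat each node $u$ separately. I will show that in some configuration $\gamma_r$ with $r\le \p_M(M)$ we have $\operatorname{mustLeaveMIS}(u)=\operatorname{False}$, and likewise, possibly at another time, $\operatorname{badMatching}(u)=\operatorname{False}$. The same two lemmas then keep each predicate $\operatorname{False}$ forever. After round $\p_M(M)$ both predicates are therefore $\operatorname{False}$ at every node simultaneously, which is safety.

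To find such a round, I first note that a node never sleeps longer than $\p_M(\operatorname{id}_u)\le \p_M(M)$ rounds. Every sleep instruction in Algorithm~\ref{alg:mm} is for $1$, $\p_M(\operatorname{id}_u)-1$ or $\p_M(\operatorname{id}_u)$ rounds, and $\p_M$ is increasing. I will assume the initial clock value is bounded in the same way, since in the model the clock ranges over the sleep durations the algorithm can set. Hence $u$ is awake in some round $r\le \p_M(M)$. If in $\gamma_{r-1}$ one of the two predicates already evaluates to $\operatorname{False}$, the lemmas finish that predicate. Otherwise $\operatorname{status}_u$ is $\operatorname{inMIS}$ (for $\operatorname{mustLeaveMIS}$) or $\operatorname{matched}$ (for $\operatorname{badMatching}$). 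In either case $\operatorname{canJoinMIS}(u)$ is false, so $u$ enters the else branch and executes $\operatorname{status}_u:=\operatorname{outMIS}$. In $\gamma_r$, $u$ is $\operatorname{outMIS}$, so both $\operatorname{mustLeaveMIS}(u)$ and $\operatorname{badMatching}(u)$ are $\operatorname{False}$, since each requires a status different from $\operatorname{outMIS}$. The two predicates cannot both hold at once, because they require different statuses. So after its first awake round, $u$ has both predicates $\operatorname{False}$, either by the lemmas or directly by this update.

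The main obstacle is the bound on the first wake-up. It needs care about the initial clock value, which is arbitrary in self-stabilization, and about the off-by-one between the round in which $u$ is awake and the configuration index. I will settle this by fixing the convention that $cl$ takes values in $[1,\p_M(\operatorname{id}_u)]$, so the first awake round is at most $\p_M(\operatorname{id}_u)$. I must also check the subtle point that $u$'s own rule fires based on $\gamma_{r-1}$ even if neighbours change simultaneously. This is harmless: once $\operatorname{status}_u=\operatorname{outMIS}$, both predicates are false regardless of what the neighbours do.
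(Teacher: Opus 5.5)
Your proof is correct and follows the paper's argument. Each node wakes within $\p_M(\operatorname{id}_u)\le\p_M(M)$ rounds and resets its status to $\operatorname{outMIS}$ if either predicate holds, and Lemmas~\ref{lem:mustL} and~\ref{lem:badM} then give both persistence and closure. The paper's proof relies implicitly on the initial clock value being at most $\p_M(\operatorname{id}_u)$; you state this assumption explicitly, and it is genuinely needed for any round bound to hold.
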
 
\begin{proof}
Since $\operatorname{id}_u < M$, we have $\p_M(\operatorname{id}_u) \leq \p_M(M)$, so $u$ is awake at least once within the first $\p_M(M)$ rounds. 
If $\operatorname{mustLeaveMIS}(u)$ or $\operatorname{badMatching}(u)$ held at the beginning of this round, $u$ set $\operatorname{status}_u := \operatorname{outMIS}$, making both predicates immediately False ($\operatorname{mustLeaveMIS}$ requires $\operatorname{status}_u = \operatorname{inMIS}$, and $\operatorname{badMatching}$ requires $\operatorname{status}_u = \operatorname{matched}$).
By Lemmas~\ref{lem:mustL} and \ref{lem:badM}, both predicates remain False on every subsequent round.
\end{proof}

\subsubsection{Convergence from a safe configuration}
In this section, we study executions starting from a safe configuration.
Let's define the predicate $\operatorname{isolated}$:
$$\operatorname{isolated}(u) \equiv \operatorname{status}_u = \operatorname{inMIS} \land \forall v \in N(u), \operatorname{status}_v = \operatorname{matched} \land \operatorname{partner}_{v} \neq u$$
We will establish that any execution reaches a configuration where every node is either $\operatorname{married}$ or $\operatorname{isolated}$. We will say (and prove) that such a configuration is \emph{stable}.
\begin{lemma}
\label{lem:closureIsolated}
If in a safe configuration $\gamma$, a node $u$ is isolated, then it stays isolated along any execution from $\gamma$.
\end{lemma}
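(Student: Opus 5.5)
The plan is the same minimal-counterexample argument used for Lemmas~\ref{lem:mustL}, \ref{lem:married-stays-married} and~\ref{lem:badM}. Let $\gamma=\gamma_0$ be safe with $\operatorname{isolated}(u)$ true. By Corollary~\ref{cor:p(M)}, every $\gamma_r$ is safe. Suppose $\operatorname{isolated}(u)$ fails in some $\gamma_r$, and take $r$ minimal, so that $\operatorname{isolated}(u)$ holds in $\gamma_{r-1}$. Then one of three things changed in round $r$: $u$ left status $\operatorname{inMIS}$; or some neighbor $v$ left status $\operatorname{matched}$; or some neighbor $v$ kept status $\operatorname{matched}$ but set $\operatorname{partner}_v := u$.

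\emph{Node $u$.} In $\gamma_{r-1}$ all neighbors are $\operatorname{matched}$, so $\operatorname{canJoinMIS}(u)$ is false because $u$ is $\operatorname{inMIS}$. Also $\operatorname{mustLeaveMIS}(u)$ is false, since no neighbor is $\operatorname{inMIS}$, and $\operatorname{badMatching}(u)$ is false since $u$ is not $\operatorname{matched}$. The only rule that could change $\operatorname{status}_u$ is therefore $\operatorname{gotMatch}(u)$. That rule needs $\operatorname{status}_{\operatorname{seducing}_u}=\operatorname{matched}$ and $\operatorname{partner}_{\operatorname{seducing}_u}=u$. Since $\operatorname{seducing}_u\in N(u)$ and every neighbor has partner $\neq u$, $\operatorname{gotMatch}(u)$ is false. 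Hence $u$ keeps status $\operatorname{inMIS}$; only $\operatorname{seducing}_u$ may change, through $\operatorname{canSeduce}$, and that is harmless.

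\emph{A neighbor $v$.} In $\gamma_{r-1}$, $v$ is $\operatorname{matched}$ and the configuration is safe, so $\operatorname{badMatching}(v)$ is false and $\operatorname{mustLeaveMIS}(v)$ is false. Since $v$ is not $\operatorname{outMIS}$, $\operatorname{canJoinMIS}(v)$ is false. Since $v$ is not $\operatorname{outMIS}$ and not $\operatorname{inMIS}$, both $\operatorname{gotSeduced}(v)$ and $\operatorname{gotMatch}(v)$ are false. Consequently $v$ executes no assignment to $\operatorname{status}_v$ or $\operatorname{partner}_v$ in round $r$, whether or not it is awake; at most $\operatorname{justWokeUp}_v$ is reset. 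So $v$ stays $\operatorname{matched}$ with the same partner, which is $\neq u$.

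Combining the two cases shows that $\operatorname{isolated}(u)$ still holds in $\gamma_r$, a contradiction. The main care point is the neighbor case. One must check, through the exact guards of the algorithm, that a $\operatorname{matched}$ node cannot rewrite its partner; the rules on lines~\ref{amm:l13}--\ref{amm:l14} are guarded by $\operatorname{gotSeduced}$, which requires $\operatorname{outMIS}$. One must also use safety, via Corollary~\ref{cor:p(M)}, to rule out the reset to $\operatorname{outMIS}$ caused by $\operatorname{badMatching}$. Note that this needs only $\operatorname{badMatching}(v)$ false, not $\operatorname{married}(v)$; a waiting $v$ also keeps its partner.
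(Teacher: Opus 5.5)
Your proof is correct and follows the same route as the paper. The paper also uses closure of safe configurations (Corollary~\ref{cor:p(M)}) to rule out $\operatorname{mustLeaveMIS}$ and $\operatorname{badMatching}$, and then concludes in one line that neither $u$ nor its neighbors can change $\operatorname{status}$ or $\operatorname{partner}$. Your guard-by-guard check is a more explicit version of that conclusion. In particular, you show that $\operatorname{gotMatch}(u)$ fails because no neighbor has partner $u$. You leave implicit that $\operatorname{gotSeduced}(u)$ fails trivially because $u$ is $\operatorname{inMIS}$.
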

\begin{proof}
From Corollary~\ref{cor:p(M)}, we know that $\operatorname{mustLeaveMIS}$ and $\operatorname{badMatching}$
are not verified by any node in any configuration reachable from $\gamma$.
We conclude that $u$ and its neighbors will never change their variable $\operatorname{status}$ nor $\operatorname{partner}$.
\end{proof}
\begin{lemma} 
\label{lem:toMatched}
Let $\gamma_0$ be a safe configuration and a node $u$ such that $u$ is commonly awake in round 1, and we have $\operatorname{justWokeUp}^0 = \operatorname{False}$, $\operatorname{status}^0_u = \operatorname{inMIS}$ and $\operatorname{isolated}^0(u) = \operatorname{False}$. 
Let $r=2\p_M(\operatorname{id}_u)+1$. 
There exists some $v\in N(u)$ such that $\operatorname{status}^0_v=\operatorname{outMIS}$ and $\operatorname{status}^r_v=\operatorname{matched}$ (i.e., $v$ changes its status to matched). Moreover, $v$ keeps the status matched in any configuration after $\gamma_r$.
\end{lemma}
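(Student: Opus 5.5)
We follow the seduction protocol through two common wake-ups of $u$ and one common wake-up of the seduced neighbor, invoking Lemma~\ref{lem:Synchrony} to bound the delays. Throughout we are in safe configurations (Corollary~\ref{cor:p(M)}), so $\operatorname{mustLeaveMIS}$ and $\operatorname{badMatching}$ never fire. Consequently $u$ keeps status $\operatorname{inMIS}$ until $\operatorname{gotMatch}(u)$ holds at one of its wake-ups. Likewise, a node of status $\operatorname{matched}$ keeps that status forever, since the only rule leaving $\operatorname{matched}$ is the $\operatorname{badMatching}$ rule. This already yields the final claim (``$v$ keeps the status matched'') once $v$ is shown to become matched.

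\textbf{Step 1 (round 1).} Since $\operatorname{isolated}^0(u)$ is false and $\operatorname{status}_u^0=\operatorname{inMIS}$, $u$ has a neighbor that is either $\operatorname{outMIS}$ or $\operatorname{matched}$ with partner $u$. In the second case, safety forces that neighbor to be $\operatorname{married}$ or $\operatorname{waiting}$. If it is $\operatorname{waiting}$, then $\operatorname{gotMatch}(u)$ holds at round 1, and we would need to argue separately how the statement applies. Otherwise there is an $\operatorname{outMIS}$ neighbor. If $\operatorname{seducing}_u$ is $\operatorname{outMIS}$, it stays; otherwise $\operatorname{canSeduce}(u)$ holds and $u$ sets $\operatorname{seducing}_u:=w$, where $w$ is the minimum-id $\operatorname{outMIS}$ neighbor. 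Either way, after round 1 $u$ is seducing some $\operatorname{outMIS}$ neighbor $w$ with $\operatorname{status}_w^0=\operatorname{outMIS}$, and $u$ sleeps $\p_M(\operatorname{id}_u)$ rounds, next commonly waking at round $1+\p_M(\operatorname{id}_u)$.

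\textbf{Step 2 (the seduced node reacts).} By Lemma~\ref{lem:Synchrony}, $w$ is commonly awake at some round in $[2,2\p_M(\operatorname{id}_u)]$. Until then $w$ either remains $\operatorname{outMIS}$ or changes status. Since $u\in N(w)$ is $\operatorname{inMIS}$, $\operatorname{canJoinMIS}(w)$ is false, so $w$ cannot join the MIS. Its only possible change is to become $\operatorname{matched}$ through $\operatorname{gotSeduced}$, possibly to a different MIS neighbor; this is line~\ref{amm:l13}, with the argmin choice. In every case $w$ becomes $\operatorname{matched}$ by round $2\p_M(\operatorname{id}_u)\le r$, and $\operatorname{status}_w^0=\operatorname{outMIS}$, so we may take $v=w$. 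By permanence of $\operatorname{matched}$ in safe configurations, $\operatorname{status}_v^r=\operatorname{matched}$.

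\textbf{Main obstacle.} The delicate point is the interaction at $u$'s second wake-up at round $1+\p_M(\operatorname{id}_u)$, which may come before $w$ reacts. If $\operatorname{seducing}_u$ is still $\operatorname{outMIS}$, then $\operatorname{canSeduce}(u)$ is false, so $u$ does not retarget, and Step 2 goes through unchanged. The remaining risk is that $w$ became matched elsewhere before round $1+\p_M(\operatorname{id}_u)$. In that case $u$ may retarget, but the conclusion is already met by $w$ itself, so no further tracking is needed. Getting this ``seducing target cannot be stolen without being matched'' invariant right is the step we would check most carefully. The first thing to try is to show that $\operatorname{seducing}_u$ changes only when its current target is not $\operatorname{outMIS}$.
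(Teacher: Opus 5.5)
The case you set aside in Step~1 is a genuine gap, and it cannot be closed: it contains counterexamples to the statement. Suppose some neighbor $x$ of $u$ has $\operatorname{status}_x=\operatorname{matched}$ and $\operatorname{partner}_x=u$. Then $\operatorname{married}(x)$ is impossible, since $\operatorname{status}_u=\operatorname{inMIS}$. Safety therefore forces $\operatorname{waiting}(x)$, i.e.\ $\operatorname{seducing}_u=x$, and $\operatorname{gotMatch}(u)$ holds in $\gamma_0$.

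Take the graph with the single edge $\{u,x\}$, and let $\gamma_0$ have $\operatorname{status}_u=\operatorname{inMIS}$, $\operatorname{seducing}_u=x$, $\operatorname{justWokeUp}_u=\operatorname{False}$, $u$ commonly awake in round~1, $\operatorname{status}_x=\operatorname{matched}$ and $\operatorname{partner}_x=u$. This configuration is safe ($\operatorname{badMatching}(x)$ is false because $\operatorname{waiting}(x)$ holds), and $\operatorname{isolated}(u)$ is false. Yet $u$ has no $\operatorname{outMIS}$ neighbor at all; $u$ simply marries $x$ in round~1.

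Adding an $\operatorname{outMIS}$ neighbor does not help. On the path with edges $\{w,u\}$ and $\{u,x\}$ and the same states, $u$ marries $x$ in round~1, and $w$ is never seduced, hence never matched.

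The paper's proof has exactly this hole. It passes from ``not every neighbor is matched with a partner other than $u$'' to ``some neighbor is $\operatorname{outMIS}$'', overlooking neighbors whose partner is $u$. The lemma needs one of two repairs:
\begin{itemize}
\item the extra hypothesis that no neighbor of $u$ has partner $u$ (equivalently, $\operatorname{gotMatch}(u)$ is false in $\gamma_0$), or
\item the extra disjunct ``or $u$ is married after round~1''.
\end{itemize}
Either suffices for Corollary~\ref{cor:inMIS}, which treats the partner-$u$ case separately anyway.

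Under that extra hypothesis your argument is complete and follows the paper's route. You seduce an $\operatorname{outMIS}$ neighbor in round~1, bound its reaction time with Lemma~\ref{lem:Synchrony}, let $\operatorname{gotSeduced}$ fire, and use closure of safety to make $\operatorname{matched}$ permanent. You are in fact more careful than the paper on two points:
\begin{itemize}
\item The paper asserts that $\operatorname{canSeduce}(u)$ holds in $\gamma_0$. This is false when $\operatorname{seducing}_u$ already points at an $\operatorname{outMIS}$ neighbor, which is your ``it stays'' case.
\item The paper never checks that $u$ still targets $v$ when $v$ wakes.
\end{itemize}

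The invariant you flag as the main obstacle follows directly from the guards. $\operatorname{seducing}_u$ is written only under $\operatorname{canSeduce}(u)$, which requires $\operatorname{status}_{\operatorname{seducing}_u}\neq\operatorname{outMIS}$. In safe configurations, $u$ leaves $\operatorname{inMIS}$ only via $\operatorname{gotMatch}(u)$, which requires its target to be matched already. So while $w$ is $\operatorname{outMIS}$, $u$ stays $\operatorname{inMIS}$ with $\operatorname{seducing}_u=w$, and $\operatorname{canJoinMIS}(w)$ stays false. Hence any wake-up of $w$, common or just, makes it matched.
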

\begin{proof} 
Since $\gamma_0$ is safe, $\operatorname{mustLeaveMIS}(u) = \operatorname{False}$, so no neighbor of $u$ has $\operatorname{status} = \operatorname{inMIS}$. 
Since $\operatorname{isolated}(u) = \operatorname{False}$, not every neighbor is matched with a partner other than $u$, so there exists a neighbor $w$ with $\operatorname{status}_w = \operatorname{outMIS}$.

Consequently $\operatorname{canSeduce}(u)$ holds in $\gamma_0$. As $u$ is awake in round 1, there exists some $v\in N(u)$ such that $\operatorname{seducing}^1_u = v$.
By Lemma~\ref{lem:Synchrony}, $v$ is awake during a round $r'\in[2, 2\p_M(\operatorname{id}_u)+1]$. 
At round $r'$, $\operatorname{gotSeduced}(v)$ holds (since $\operatorname{status}_u = \operatorname{inMIS}$ 
and $\operatorname{seducing}_u = v$). 
Moreover, since $\gamma_0$ is safe, we have $\operatorname{mustLeaveMIS}(v) = \operatorname{badMatching}(v) = \operatorname{False}$, so $v$ executes lines~\ref{amm:l13} and \ref{amm:l14} of the algorithm, setting $\operatorname{partner}_v$ to one of its neighbors (possibly $u$) and $\operatorname{status}_v$ to $\operatorname{matched}$.
As $\gamma_0$ is safe, $\operatorname{status}_v$
stays unchanged after the round $r'$.
\end{proof}

\begin{corollary} 
\label{cor:inMIS}
Let $\gamma_0$ be a safe configuration and a node $u$ such that $\operatorname{status}_u = \operatorname{inMIS}$. Then $u$ is either 
$\operatorname{married}$ or $\operatorname{isolated}$ in any configuration $r$ with $r\ge (2\Delta+2)\p_M(\operatorname{id}_u)+\Delta$.
\end{corollary}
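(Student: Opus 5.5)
We argue by iterating Lemma~\ref{lem:toMatched}, using the fact that each application permanently removes one $\operatorname{outMIS}$ neighbor of $u$. Since $u$ has at most $\Delta$ neighbors, this can only happen $\Delta$ times.

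First we stabilize $u$'s own status. Since $\gamma_0$ is safe, $\operatorname{mustLeaveMIS}(u)$ stays False forever (Corollary~\ref{cor:p(M)}). So $u$ can leave $\operatorname{inMIS}$ only through $\operatorname{gotMatch}(u)$, which changes its status to $\operatorname{matched}$. As long as $u$ is $\operatorname{inMIS}$, no neighbor can become $\operatorname{inMIS}$, since by Lemma~\ref{lem:mustL} that would create a $\operatorname{mustLeaveMIS}$ violation. So every neighbor of $u$ is $\operatorname{outMIS}$ or $\operatorname{matched}$, and a matched neighbor never leaves $\operatorname{matched}$: $\operatorname{badMatching}$ is False forever, and no other rule changes a matched status. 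Isolation is closed by Lemma~\ref{lem:closureIsolated}, and marriage is closed by Lemma~\ref{lem:married-stays-married}. It therefore suffices to show that within the stated time $u$ either becomes isolated or gets married.

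Next we build the phases. Within the first $\p_M(\operatorname{id}_u)+1$ rounds, $u$ has a common awakening, possibly preceded by a just-awake round that resets $\operatorname{justWokeUp}_u$ to False. Here the "$+1$" per phase is meant to absorb the just-awake round, which is where the additive $\Delta$ in the bound comes from. From that common awakening, if $u$ is still $\operatorname{inMIS}$ and not isolated, Lemma~\ref{lem:toMatched} applies. Within $2\p_M(\operatorname{id}_u)+1$ further rounds, some neighbor $v$ that was $\operatorname{outMIS}$ becomes matched and stays matched. We then restart from the next common awakening of $u$. Each phase costs at most $2\p_M(\operatorname{id}_u)+1$ rounds and strictly decreases the number of $\operatorname{outMIS}$ neighbors of $u$. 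Neighbors never return to $\operatorname{outMIS}$, because a matched node stays matched, as shown above.

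After at most $\Delta$ phases, all neighbors of $u$ are matched. If one of them, say $v$, has $\operatorname{partner}_v=u$, then $\operatorname{waiting}(v)$ must hold, since otherwise $\operatorname{badMatching}(v)$ would be true, contradicting safety. Hence $\operatorname{gotMatch}(u)$ holds, and at its next common awakening, within $\p_M(\operatorname{id}_u)$ more rounds, $u$ sets its partner to $v$ and becomes married. Here we would check that $\operatorname{seducing}_u$ equals that $v$. Indeed, $\operatorname{partner}_v=u$ was set while $\operatorname{seducing}_u=v$, and $\operatorname{seducing}_u$ only changes under $\operatorname{canSeduce}(u)$, which requires $\operatorname{status}_{\operatorname{seducing}_u}\neq\operatorname{outMIS}$; once $\operatorname{seducing}_u$ points to a matched node this condition holds, so this point needs care. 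If no neighbor has partner $u$, then $u$ is isolated. Summing the costs gives at most $\Delta(2\p_M(\operatorname{id}_u)+1)+\p_M(\operatorname{id}_u)+\dots\le(2\Delta+2)\p_M(\operatorname{id}_u)+\Delta$ rounds.

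The main obstacle is the final step. We must show that a matched neighbor pointing to $u$ indeed yields $\operatorname{gotMatch}(u)$ with the correct $\operatorname{seducing}_u$. Otherwise $u$ may have re-seduced a different neighbor, possibly one already matched elsewhere, so that $v$ is left waiting on a stale pointer. If this case arises, we would first try to use that $\operatorname{waiting}(v)$ is maintained by safety (Lemma~\ref{lem:badM}) to force $\operatorname{seducing}_u=v$ throughout.
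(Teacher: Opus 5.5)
Your proposal follows the paper's own argument: iterate Lemma~\ref{lem:toMatched} at most $\Delta$ times, using that matched neighbors of $u$ stay matched and that being married or isolated is closed, and conclude that $u$ is either isolated or satisfies $\operatorname{gotMatch}(u)$ at its next awakening. The step you flag as the main obstacle is immediate from the sentence you already wrote: $\operatorname{waiting}(v)$ by definition contains $\operatorname{seducing}_{\operatorname{partner}_v}=v$, so a matched neighbor $v$ with $\operatorname{partner}_v=u$, while $u$ is $\operatorname{inMIS}$, forces $\operatorname{seducing}_u=v$ in every configuration of the execution. Your stale-pointer scenario is exactly $\operatorname{badMatching}(v)$, which Lemma~\ref{lem:badM} rules out forever, so $\operatorname{gotMatch}(u)$ holds, which is what the paper asserts without further comment.
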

\begin{proof} 
As $\gamma_0$ is safe, a node $\operatorname{married}$
(resp. $\operatorname{isolated}$) stays $\operatorname{married}$
(resp. $\operatorname{isolated}$).
If $\operatorname{isolated}(u)$ already holds in $\gamma_0$, the claim is immediate.

Otherwise, we apply Lemma~\ref{lem:toMatched} inductively. 
As long as $u$ remains $\operatorname{inMIS}$ and not isolated, every $2\p_M(\operatorname{id}_u)+1$ rounds (after the first time $u$ was commonly awake), some
neighbor of $u$ sets its status from $\operatorname{outMIS}$ to $\operatorname{matched}$, and keeps this status forever.

Since $u$ has at most $\Delta$ neighbors, after $(2\Delta+1)\p_M(\operatorname{id}_u)+\Delta$ rounds, either all neighbors of $u$ are matched with partner $\neq u$, i.e., $\operatorname{isolated}(u) = \operatorname{True}$, or some neighbor $v$ has $\operatorname{status}_v = \operatorname{matched}$ and $\operatorname{partner}_v = u$, so $\operatorname{gotMatch}(u)$ is verified and $u$ gets married with $v$ during the next awake round.
\end{proof}

\begin{lemma} 
\label{lem:quitOutMS}
Let $\gamma_0$ be a safe configuration and a node $u$ commonly awake in round 1 such that $\operatorname{status}^0_u = \operatorname{outMIS}$, and for each neighbor $v$ of $u$, we have $\operatorname{status}^0_v \neq \operatorname{inMIS}$.

Let $F$ be the set of neighbors of $u$ with $\operatorname{status} = \operatorname{outMIS}$ in $\gamma_0$.
In $\gamma_r$, with $r=2+ \Delta\p_M(\operatorname{id}_u)$, either $u$ or some node in $F$ has a different status than in $\gamma_0$.
\end{lemma}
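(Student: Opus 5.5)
We argue by contradiction, mirroring the mutual-exclusion argument of Lemma~\ref{lem:activelyawake}. Suppose that in every configuration $\gamma_0,\dots,\gamma_r$ with $r=2+\Delta\p_M(\operatorname{id}_u)$, node $u$ and every node of $F$ keep the status they had in $\gamma_0$. First we observe that the neighbors of $u$ not in $F$ have status $\operatorname{matched}$ in $\gamma_0$. Since $\gamma_0$ is safe and safety is closed (Corollary~\ref{cor:p(M)}), $\operatorname{badMatching}$ and $\operatorname{mustLeaveMIS}$ never hold. A matched node can therefore leave status $\operatorname{matched}$ only through these two predicates, so such neighbors stay matched. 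Together with the contradiction hypothesis, this shows that no neighbor of $u$ ever has status $\operatorname{inMIS}$ during the interval. Hence $\operatorname{canJoinMIS}(u)$ holds in every configuration $\gamma_0,\dots,\gamma_r$.

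Next we describe the schedule of $u$. It is commonly awake in round $1$ with $\operatorname{canJoinMIS}(u)$ true and (being commonly awake) $\operatorname{justWokeUp}_u=\operatorname{False}$, so it executes line~\ref{amm:l5} and is justly awake in round $2$. If $\operatorname{awakeAlone}(u)$ held then, $u$ would set its status to $\operatorname{inMIS}$, contradicting the hypothesis. Otherwise it sleeps $\p_M(\operatorname{id}_u)-1$ rounds and the cycle repeats. Thus $u$ is justly awake exactly in rounds $2+k\p_M(\operatorname{id}_u)$, and within the first $r$ rounds this gives the $\Delta+1$ rounds $k=0,\dots,\Delta$. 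In each of these rounds $\operatorname{awakeAlone}(u)$ must fail, so some neighbor $w$ has $\operatorname{justWokeUp}_w=\operatorname{True}$ in the preceding configuration.

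A neighbor $w$ with $\operatorname{justWokeUp}_w=\operatorname{True}$ has just executed line~\ref{amm:l5}, so $\operatorname{canJoinMIS}(w)$ holds and $w\in F$. By pigeonhole over at most $\Delta$ neighbors, some $w\in F$ is justly awake in two of these rounds, say $r_1<r_2\le r_1+\Delta\p_M(\operatorname{id}_u)$. Along the interval, $w$ never becomes $\operatorname{inMIS}$ by the hypothesis. As long as $\operatorname{canJoinMIS}(w)$ holds it runs the same cycle with period $\p_M(\operatorname{id}_w)$. If $\operatorname{canJoinMIS}(w)$ fails at a common wake-up, $w$ sleeps $\p_M(\operatorname{id}_w)$ rounds with $\operatorname{justWokeUp}_w$ reset. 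Either way, its just-awake rounds lie in a single residue class modulo $\p_M(\operatorname{id}_w)$, and likewise those of $u$ lie in a single residue class modulo $\p_M(\operatorname{id}_u)$. Then $r_2-r_1$ is a common multiple of the two distinct primes, so $r_2-r_1\ge\p_M(\operatorname{id}_u)\p_M(\operatorname{id}_w)\ge\p_M(\operatorname{id}_u)B(M)$. Since $B(M)>\Delta$ (it is at least $M$, and $M>\Delta$), this contradicts $r_2-r_1\le\Delta\p_M(\operatorname{id}_u)$.

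The main obstacle is this periodicity claim for $w$. One must check that after the first shared round, $w$'s awake pattern (common and just) stays phase-locked with period $\p_M(\operatorname{id}_w)$ whichever branch it takes, and that $\operatorname{justWokeUp}_w$ is true only in its just-awake slots. This requires that $w$ itself does not change status in the interval, which holds because $w\in F$. If the branch analysis shows a phase shift is possible, we fall back on bounding the number of such shifts and enlarging the pigeonhole count accordingly.
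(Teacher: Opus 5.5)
Your argument follows the paper's route (pigeonhole over the $\Delta+1$ just-awake rounds $2+k\p_M(\operatorname{id}_u)$ of $u$, then a congruence modulo two distinct primes). The step you flag as the main obstacle is a genuine gap. The paper does not close it either: it only asserts ``by construction'' that two just-awake rounds of a node $w$ differ by a multiple of $\p_M(\operatorname{id}_w)$.

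Your case analysis for $w$ omits one case: $\operatorname{canJoinMIS}(w)$ holds when $w$ is commonly awake in some round $t$ (so $w$ sets $\operatorname{justWokeUp}_w$ and wakes again in round $t+1$), but fails in round $t+1$. Then $w$ takes the else-branch in round $t+1$, resets $\operatorname{justWokeUp}_w$, and sleeps $\p_M(\operatorname{id}_w)$ rounds instead of $\p_M(\operatorname{id}_w)-1$. All its later common and just-awake rounds are therefore shifted by one modulo $\p_M(\operatorname{id}_w)$.

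Your contradiction hypothesis does not exclude this, because $\operatorname{canJoinMIS}(w)$ depends on neighbors of $w$ outside $N[u]$. A neighbor $x$ of $w$ not adjacent to $u$ that is justly awake in round $t$ can satisfy $\operatorname{awakeAlone}(x)$, since $\operatorname{justWokeUp}_w$ is still False in $\gamma_{t-1}$, and join the MIS. If $x$ then seduces and is matched with another node, $w$ later resumes its cycle with the new phase while keeping status $\operatorname{outMIS}$. Moreover, $w$ still blocks $u$ in round $t+1$, since $\operatorname{justWokeUp}_w$ is True in $\gamma_t$. With $s$ shifts you only get $m\p_M(\operatorname{id}_u)=k\p_M(\operatorname{id}_w)+s$, which has solutions with $m\le\Delta$ (e.g.\ $3\cdot 13=2\cdot 19+1$), so no contradiction follows.

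Your fallback cannot rescue the statement as given. Since $r=2+\Delta\p_M(\operatorname{id}_u)$ is fixed, $u$ has exactly $\Delta+1$ just-awake rounds in the window, and there is nothing to enlarge. Bounding the shifts (each neighbor of $w$ can become $\operatorname{inMIS}$ at most once from a safe configuration) would at best give a version with a larger $r$. That weaker version would then have to be propagated into Lemma~\ref{lem:outMIS} and Corollary~\ref{cor:term}.

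There are two smaller points. First, the claim that a neighbor with $\operatorname{justWokeUp}_w=\operatorname{True}$ ``has just executed line~\ref{amm:l5}'' fails during the first period. Since $\gamma_0$ is an arbitrary safe configuration, a sleeping neighbor (possibly a matched one outside $F$) may carry $\operatorname{justWokeUp}=\operatorname{True}$ from the initial state. It then blocks $u$ in round $2$, and possibly also in round $2+\p_M(\operatorname{id}_u)$ if its first wake-up comes late enough, without ever being justly awake. This is repaired by assuming every node has been awake at least once before $\gamma_0$, which holds where the lemma is used (after $\p_M(M)$ rounds).

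Second, your hypothesis that statuses are unchanged in every $\gamma_i$ with $i\le r$ is stronger than the negation of the claim, which only concerns $\gamma_r$. You need the remark, made in the paper, that from a safe configuration no node can return to $\operatorname{outMIS}$. Then a node that is $\operatorname{outMIS}$ in both $\gamma_0$ and $\gamma_r$ was $\operatorname{outMIS}$ throughout.
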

\begin{proof} 
This proof is similar to that of Lemma~\ref{lem:activelyawake} for the MIS algorithm. Since $\operatorname{status}_u = \operatorname{outMIS}$ and no neighbor has $\operatorname{status} = \operatorname{inMIS}$, predicate $\operatorname{canJoinMIS}(u)$ holds in $\gamma_0$. 
Suppose, for contradiction, that $u$ and all nodes in $F$ still have status $\operatorname{outMIS}$ in $\gamma_r$. 
Since $\gamma_0$ is safe, no node can change its status
to $\operatorname{outMIS}$; this implies that the status of every node in $F$ is unchanged. 
So $\operatorname{canJoinMIS}(u)$ holds from rounds 1 to $r$.

Before round $r$, $u$ will be justly awake (i.e., awake with $\operatorname{justWokeUp}_u = \operatorname{True}$)  $\Delta+1$ times at rounds $2+i\p_M(\operatorname{id}_u)$ with $0 \leq i \leq \Delta$.
If $\operatorname{awakeAlone}(u)$ holds at any of these $\Delta + 1$ rounds, $u$ joins the MIS, contradicting our assumption.
Hence $\operatorname{awakeAlone}(u) = \operatorname{False}$ every time $u$ is justly awake, i. e., some neighbor of $u$ is also justly awake. 
Since $u$ has at most $\Delta$ neighbors, by the pigeonhole principle, some neighbor $v$ is justly awake simultaneously with $u$ two times. 

Let's call $r_1$ the first round in which both $u$ and $v$ are justly awake at the same time, and $r_2$ the next. Note that $r_1 \leq r_2 \leq r_1 + \Delta  \p_M(\operatorname{id}_u) $.

Note that, by construction of Algorithm~\ref{alg:mm}, if a node $w$ is justly awake in two different rounds $s_1$ and $s_2$, we have that $s_1-s_2$ is a multiple of $\p_M(\operatorname{id}_w)$.

We get that $r_2-r_1$ is both a multiple of $\p_M(\operatorname{id}_u)$ and $\p_M(\operatorname{id}_v)$, which are different prime numbers. Hence, $r_2-r_1=k\p_M(\operatorname{id}_u)\p_M(\operatorname{id}_v)$ for some $k\ge1$. However, we have $\p_M(\operatorname{id}_v)>\Delta$, which brings a contradiction to $r_2 \leq r_1 + \Delta  \p_M(\operatorname{id}_u)$. 
\end{proof}

\begin{lemma} 
\label{lem:outMIS}
Let $\gamma_0$ be a safe configuration and a node $u$ such that $\operatorname{status}_u = \operatorname{outMIS}$ in $\gamma_0$. Let $r=2+(5\Delta+4)\p_M(\operatorname{id}_u)+\Delta$.
In $\gamma_{r}$, either $\operatorname{isolated}(u)$ holds or the number of matched nodes in $N[u]$ has strictly increased.
\end{lemma}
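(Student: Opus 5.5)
We first wait for $u$ to be commonly awake: since $u$ sleeps at most $\p_M(\operatorname{id}_u)$ rounds, there is a round $r_0\le \p_M(\operatorname{id}_u)+1$ in which $u$ is commonly awake. By Corollary~\ref{cor:p(M)} every configuration reached from $\gamma_0$ is safe, so no node ever changes its status to $\operatorname{outMIS}$ and matched nodes stay matched. Hence the number of matched nodes in $N[u]$ is non-decreasing, and it suffices to show that it increases at least once before round $r$, unless $\operatorname{isolated}(u)$ holds. Note that $u$ cannot become isolated while it is $\operatorname{outMIS}$, since $\operatorname{isolated}$ requires status $\operatorname{inMIS}$.

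We distinguish cases at round $r_0$. In the first case, some neighbor $w$ of $u$ has status $\operatorname{inMIS}$. Safety gives $\operatorname{mustLeaveMIS}(w)=\operatorname{False}$ forever. We then apply Corollary~\ref{cor:inMIS} to $w$: within $(2\Delta+2)\p_M(\operatorname{id}_w)+\Delta$ further rounds, $w$ is married or isolated. Since $\p_M(\operatorname{id}_w)<2\p_M(\operatorname{id}_u)$, this bound is at most $(4\Delta+4)\p_M(\operatorname{id}_u)+\Delta$.
\begin{itemize}
\item If $w$ becomes married, then $w$ itself is a newly matched node of $N[u]$.
\item If $w$ becomes isolated, then every neighbor of $w$, in particular $u$, has status $\operatorname{matched}$. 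So $u$ has turned from $\operatorname{outMIS}$ to $\operatorname{matched}$.
\end{itemize}
In both subcases the count in $N[u]$ strictly increases.

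In the second case, no neighbor of $u$ is $\operatorname{inMIS}$ at $r_0$. We apply Lemma~\ref{lem:quitOutMS} from $\gamma_{r_0-1}$: within $2+\Delta\p_M(\operatorname{id}_u)$ rounds, $u$ or some node of $F$ changes status. Since statuses never revert to $\operatorname{outMIS}$, that node goes either to $\operatorname{matched}$, which gives a direct increase, or to $\operatorname{inMIS}$. In the latter situation, $u$ or a neighbor of $u$ is now $\operatorname{inMIS}$.
\begin{itemize}
\item If a neighbor became $\operatorname{inMIS}$, we are back in the first case, at the cost of the same bound from Corollary~\ref{cor:inMIS}.
\item If $u$ itself became $\operatorname{inMIS}$, we apply Corollary~\ref{cor:inMIS} to $u$, with bound $(2\Delta+2)\p_M(\operatorname{id}_u)+\Delta$. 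Either $u$ becomes married, which is an increase, or $u$ becomes isolated, which is the other alternative of the statement.
\end{itemize}

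Summing the three phases gives at most
\[
(\p_M(\operatorname{id}_u)+1)+(2+\Delta\p_M(\operatorname{id}_u))+(4\Delta+4)\p_M(\operatorname{id}_u)+\Delta ,
\]
which is of the order of $2+(5\Delta+4)\p_M(\operatorname{id}_u)+\Delta$. Some care with off-by-one terms is needed, for example by absorbing the initial wait into the $+4$.

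The main obstacle is the first case with an $\operatorname{inMIS}$ neighbor $w$. The constants in Corollary~\ref{cor:inMIS} depend on $\p_M(\operatorname{id}_w)$, and we must convert them into $\p_M(\operatorname{id}_u)$ using the $[B,2B]$ window, which costs the factor $2$. We must also make sure that $w$ is married to a node that was not matched before, so that the count really increases. We would handle this by observing that $w$ was itself $\operatorname{inMIS}$ and becomes $\operatorname{matched}$; this already increases the count in $N[u]$, regardless of who its partner is.
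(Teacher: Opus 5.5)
Your argument follows the paper's proof step for step. You use the same case split on whether $u$ has an $\operatorname{inMIS}$ neighbor, and Corollary~\ref{cor:inMIS} with the married/isolated dichotomy, where an isolated neighbor forces $u$ itself to be matched. You then use Lemma~\ref{lem:quitOutMS}, followed by a second application of Corollary~\ref{cor:inMIS} to whichever node entered the MIS, with $\p_M(\operatorname{id}_w)<2\p_M(\operatorname{id}_u)$ to convert the bound. You treat the subcase where $u$ itself joins the MIS and ends isolated more explicitly than the paper does.

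The gap is the final count, and it is not an off-by-one.
\begin{itemize}
\item Your three phases add up to $(5\Delta+5)\p_M(\operatorname{id}_u)+\Delta+3$. This exceeds the stated $r=(5\Delta+4)\p_M(\operatorname{id}_u)+\Delta+2$ by $\p_M(\operatorname{id}_u)+1$.
\item There is no slack to absorb this into. The $4\p_M(\operatorname{id}_u)$ in $r$ is exactly what the doubling $(2\Delta+2)\p_M(\operatorname{id}_w)<(4\Delta+4)\p_M(\operatorname{id}_u)$ uses up.
\item The paper reaches $r$ exactly as $2+\Delta\p_M(\operatorname{id}_u)+(4\Delta+4)\p_M(\operatorname{id}_u)+\Delta$. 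It invokes Lemma~\ref{lem:quitOutMS} directly from $\gamma_0$ and never pays for an initial wait. That tacitly assumes the hypothesis ``$u$ is commonly awake in round 1'', which you correctly noticed is not given.
\end{itemize}
So, as written, your proposal does not prove the stated constant. You must either adopt the paper's accounting and inherit its implicit assumption, or state the lemma with $r+\p_M(\operatorname{id}_u)+1$. The weaker bound is harmless downstream: Corollary~\ref{cor:term} becomes $\p_M(M)[\Delta(5\Delta+5)+1]+\Delta(\Delta+3)$, which is still $O(\Delta^2\p_M(M))$.

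One smaller point: before its first commonly awake round $r_0$, $u$ may already change status. It can join the MIS or get seduced on a justly awake round if $\operatorname{justWokeUp}_u$ is initially True. Your case analysis at $r_0$ should therefore allow $\operatorname{status}_u\neq\operatorname{outMIS}$ there. Both outcomes fall into subcases you already handle: $u$ matched gives an increase, and $u$ joining the MIS is covered by Corollary~\ref{cor:inMIS} together with Lemma~\ref{lem:closureIsolated}.
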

\begin{proof}
Assume that, in $\gamma_r$, $u$ is not isolated and that the number of matched nodes in $N[u]$ has not increased. Since $\gamma_0$ is safe, no matched node in $\gamma_0$ has changed its status in $\gamma_r$. Hence, no node in $N[u]$ takes the status $\operatorname{matched}$ before $\gamma_r$.

Suppose some neighbor $v$ has $\operatorname{status}_v = \operatorname{inMIS}$ in $\gamma_0$.
By applying Corollary~\ref{cor:inMIS}, $v$ is either isolated or married in $\gamma_r$.
\begin{itemize}
	\item If $v$ is married: $\operatorname{status}_v = \operatorname{matched}$, so the number of matched nodes has increased. Contradiction.
	\item If $v$ is isolated: every neighbor of $v$ (including $u$) has $\operatorname{status} = \operatorname{matched}$ with a partner $\neq v$. In particular, $u$ is matched, so the number of matched nodes has increased. Contradiction.
\end{itemize}
Hence, no neighbor of $u$ has $\operatorname{status} = \operatorname{inMIS}$  in $\gamma_0$.

By Lemma~\ref{lem:quitOutMS}, by round $r_1=2+ \Delta \p_M(\operatorname{id}_u)$, a node $v\in N[u]$ changed its status from $\operatorname{outMIS}$. 
If $v$ has taken the status $\operatorname{matched}$, then the number of matched nodes has increased.
Hence, $v$ has taken the status $\operatorname{inMIS}$.
By applying Corollary~\ref{cor:inMIS} to $v$, we reach the same contradictions as previously after $(2\Delta+2)\p_M(\operatorname{id}_v)+\Delta$ more rounds. 
As $\p_M(\operatorname{id}_v) < 2\p_M(\operatorname{id}_u)$,
in round $r > 2+ \Delta \p_M(\operatorname{id}_u)+(2\Delta+2)\p_M(\operatorname{id}_v) + \Delta$, the number of matched nodes must have increased.
\end{proof}

\begin{corollary}\label{cor:term}
After at most $r=\p_M(M)[\Delta(5\Delta+4)+1]+\Delta(\Delta+2) $ rounds, any node is either isolated or married.
\end{corollary}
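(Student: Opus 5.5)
We combine Corollary~\ref{cor:p(M)} with a potential argument based on Lemma~\ref{lem:outMIS} and Corollary~\ref{cor:inMIS}. First, by Corollary~\ref{cor:p(M)}, after $\p_M(M)$ rounds the configuration is safe, and it stays safe. All later lemmas assume a safe starting configuration, so from now on we restart the round count at that point. This accounts for the additive $\p_M(M)$ in the bound, i.e., the ``$+1$'' inside the bracket multiplying $\p_M(M)$.

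Next, fix a node $u$ and track the quantity $m(u)$, the number of matched nodes in $N[u]$. In a safe configuration no matched node ever loses its status, since $\operatorname{badMatching}$ stays False and only $\operatorname{mustLeaveMIS}$ or $\operatorname{badMatching}$ can move a node to $\operatorname{outMIS}$. Hence $m(u)$ is non-decreasing and bounded by $\Delta+1$. We then case on the status of $u$.
\begin{itemize}
\item If $u$ is $\operatorname{matched}$ in a safe configuration, $\operatorname{badMatching}(u)$ is False. So $u$ is married, or it is waiting, in which case its partner has $\operatorname{gotMatch}$ and marries it at its next awake round, within $\p_M(M)$ rounds.
\item If $u$ is $\operatorname{inMIS}$, Corollary~\ref{cor:inMIS} gives married or isolated within $(2\Delta+2)\p_M(M)+\Delta$ rounds.
\item If $u$ is $\operatorname{outMIS}$, we apply Lemma~\ref{lem:outMIS} repeatedly, each time with $\p_M(\operatorname{id}_u)\le \p_M(M)$. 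Every window of $2+(5\Delta+4)\p_M(M)+\Delta$ rounds either makes $u$ isolated or strictly increases $m(u)$. The latter can happen at most $\Delta$ times before $u$ itself must be matched or $\operatorname{inMIS}$, since $m(u)$ counts at most the $\Delta$ neighbors while $u$ stays $\operatorname{outMIS}$. Once $u$ leaves $\operatorname{outMIS}$, the previous cases finish it.
\end{itemize}
Summing, $\Delta$ windows give $\Delta(5\Delta+4)\p_M(M)+\Delta(\Delta+2)$, which matches the claimed bound. Closure of married (Lemma~\ref{lem:married-stays-married}) and of isolated in safe configurations (Lemma~\ref{lem:closureIsolated}) ensures the property persists.

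Lemma~\ref{lem:outMIS} requires the start to be safe but is otherwise re-applicable at any round, so iterating it is legitimate. I would take care that an $\operatorname{outMIS}$ node that is ``stuck'' (not isolated, since isolated needs $\operatorname{inMIS}$) cannot exhaust the $\Delta$ increments without itself changing status. If all neighbors become matched with others, $\operatorname{canJoinMIS}(u)$ holds, Lemma~\ref{lem:quitOutMS} forces $u$ into $\operatorname{inMIS}$, and it is then immediately isolated.

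\textbf{Main obstacle.} The main obstacle is the exact constant bookkeeping. The extra $\Delta+2$ per window and the tail rounds for the final $\operatorname{inMIS}$/waiting steps must fit inside the stated $\Delta(\Delta+2)$ and $\p_M(M)$ slack. I would first try absorbing the final transition into the last window, using that Lemma~\ref{lem:outMIS}'s window already contains a full application of Corollary~\ref{cor:inMIS}.
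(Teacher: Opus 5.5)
Your route is the paper's: safety after $\p_M(M)$ rounds (Corollary~\ref{cor:p(M)}), Corollary~\ref{cor:inMIS} for $\operatorname{inMIS}$ nodes, and at most $\Delta$ applications of Lemma~\ref{lem:outMIS}. This is exactly where $\p_M(M)+\Delta(2+\Delta+(5\Delta+4)\p_M(M))$ comes from, and you treat the end states (waiting, and $\operatorname{outMIS}$ with every neighbor matched) more carefully than the paper does. The bookkeeping you flag is nonetheless a genuine hole. The paper leaves it open too: it only asserts that once $u$ is matched or all its neighbors are matched, $u$ is married or isolated within the same budget. Your proposed absorption does not fill it. It only works when $u$ itself enters the MIS inside the last window, because then the embedded application of Corollary~\ref{cor:inMIS} is to $u$. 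Now take $\deg(u)=\Delta$, and let each of the first $\Delta-1$ windows match one neighbor of $u$ to some node other than $u$. In window $\Delta$, the last unmatched neighbor $x$ wins the mutual exclusion against $u$, joins the MIS, and marries a node outside $N[u]$. The lemmas you cite only guarantee this marriage by the end of the window. At that point $u$ is still $\operatorname{outMIS}$, hence not isolated, and it still needs a Lemma~\ref{lem:quitOutMS} phase of about $2+\Delta\p_M(M)$ rounds, which is outside your budget.

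One way to close it is to tighten the window length. Once every period is bounded by $\p_M(M)$, the step $\p_M(\operatorname{id}_v)<2\p_M(\operatorname{id}_u)$ in the proof of Lemma~\ref{lem:outMIS} is wasteful, and the same proof gives windows of length at most $W'=2+\Delta+(3\Delta+2)\p_M(M)$. Then run your potential argument with $\Delta+1$ windows:
\begin{itemize}
\item In a safe configuration a status can only move from $\operatorname{outMIS}$ to $\operatorname{inMIS}$ or $\operatorname{matched}$, and from $\operatorname{inMIS}$ to $\operatorname{matched}$. So while $u$ is $\operatorname{outMIS}$, your count is at most $\deg(u)$.
\item If the count reaches $\deg(u)$ with $u$ still $\operatorname{outMIS}$, the next window must make $u$ isolated, since $u$ can no longer be seduced.
\item If $u$ becomes matched or $\operatorname{inMIS}$ earlier, one further window suffices. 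A waiting $u$ is married at its partner's next awake round, and Corollary~\ref{cor:inMIS} fits in $W'$.
\end{itemize}
The inequality $(\Delta+1)W'\le\Delta(5\Delta+4)\p_M(M)+\Delta(\Delta+2)$ is equivalent to $\Delta+2\le(2\Delta^2-\Delta-2)\p_M(M)$, which holds for $\Delta\ge 2$. This recovers the stated bound; the degenerate cases $\Delta\le 1$ must be checked directly.
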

\begin{proof}
After $\p_M(M)$ rounds, by Corollary~\ref{cor:p(M)}, a safe configuration has been reached.
Let us assume that in $\gamma_{\p_M(M)}$, there exists some node $u$ that is neither isolated nor married. 
If $\operatorname{status}_u=\operatorname{inMIS}$, by Corollary~\ref{cor:inMIS}, $u$ is isolated or married in $\gamma_r$.
Otherwise, we can apply Lemma~\ref{lem:outMIS} iteratively (at most $\Delta$ times) until either:
\begin{itemize}
    \item $u$ gets matched
    \item All the neighbors of $u$ get matched.
\end{itemize}
In both scenarios, $u$ ends up married or isolated, and this happens before $\gamma_r$, as $\p_M(\operatorname{id}_u)\le\p_M(M)$.
\end{proof}

\subsubsection{Stable configurations}

A configuration is said to be stable if every node is either isolated or married.
\begin{lemma}\label{lem:stable-closure}
The set of stable configurations is closed.
\end{lemma}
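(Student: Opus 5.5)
We need to show that if every node is isolated or married in $\gamma$, then the same holds in every configuration reachable from $\gamma$. The natural route is to reuse the two closure results already proved, Lemma~\ref{lem:married-stays-married} and Lemma~\ref{lem:closureIsolated}. The latter requires the configuration to be safe, so the first step is to show that every stable configuration is safe.

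\textbf{Step 1: a stable configuration is safe.}

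First, $\operatorname{mustLeaveMIS}(u)$ is False for every $u$.
\begin{itemize}
\item If $u$ is married, its status is matched, so the predicate fails immediately.
\item If $u$ is isolated, its status is inMIS, but every neighbor has status matched. Hence no neighbor is inMIS, and the predicate is False.
\end{itemize}
Second, $\operatorname{badMatching}(u)$ is False for every $u$.
\begin{itemize}
\item If $u$ is married, $\operatorname{married}(u)$ holds, so $\operatorname{badMatching}(u)$ fails.
\item If $u$ is isolated, its status is inMIS rather than matched, so the predicate fails.
\end{itemize}
Thus $\gamma$ is safe.

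\textbf{Step 2: closure.}

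Take any execution from $\gamma$ and any node $u$.
\begin{itemize}
\item If $u$ is married in $\gamma$, Lemma~\ref{lem:married-stays-married} gives that $\operatorname{married}(u)$ holds in every subsequent configuration. This lemma needs no safety hypothesis.
\item If $u$ is isolated in $\gamma$, then since $\gamma$ is safe by Step~1, Lemma~\ref{lem:closureIsolated} gives that $u$ stays isolated along the whole execution.
\end{itemize}
Hence every configuration reachable from $\gamma$ is stable.

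\textbf{Main obstacle.}

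The only delicate point is checking that Lemma~\ref{lem:closureIsolated} really applies here. Its proof uses Corollary~\ref{cor:p(M)} to say that safety is preserved, so statuses can only move away from outMIS and never back to it. It also needs that the partner fields of $u$'s neighbors do not change.

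If that argument seems too thin, I would verify it directly. A matched neighbor $v$ of an isolated $u$ never changes its status or partner, because every rule that modifies these fields is guarded by a predicate that is False for $v$:
\begin{itemize}
\item $\operatorname{canJoinMIS}(v)$, $\operatorname{gotSeduced}(v)$ and $\operatorname{gotMatch}(v)$ all require a status other than matched.
\item $\operatorname{badMatching}(v)$ is False by safety.
\end{itemize}
Likewise, $u$ itself keeps status inMIS. Indeed, $\operatorname{mustLeaveMIS}(u)$ is False by safety. Also, $\operatorname{gotMatch}(u)$ would require $\operatorname{partner}_{\operatorname{seducing}_u}=u$, which is excluded because every neighbor's partner is different from $u$. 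This completes the closure argument.
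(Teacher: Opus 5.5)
Your proof is correct, but it takes a different route from the paper's. The paper argues directly from the definitions. In a stable configuration every node is married or isolated, and one checks that $\operatorname{canJoinMIS}$, $\operatorname{mustLeaveMIS}$, $\operatorname{badMatching}$, $\operatorname{gotMatch}$, $\operatorname{gotSeduced}$ and $\operatorname{canSeduce}$ are all False at every node. Hence no awake node can modify $\operatorname{status}$, $\operatorname{partner}$ or $\operatorname{seducing}$; only $\operatorname{justWokeUp}$ may be reset. The next configuration therefore has the same values of the variables that define married and isolated, and closure follows by induction on rounds.

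You instead show stable $\Rightarrow$ safe and then invoke the per-node closure results, Lemma~\ref{lem:married-stays-married} and Lemma~\ref{lem:closureIsolated}. This is modular and records the useful fact that stable configurations are safe. The cost is that your proof now depends on the chain Lemma~\ref{lem:closureIsolated} $\to$ Corollary~\ref{cor:p(M)} $\to$ Lemmas~\ref{lem:mustL} and~\ref{lem:badM}. The direct check needs only the definitions, and it also gives the ``quasi-terminality'' of stable configurations that Theorem~\ref{theo:stable-mm} later appeals to.

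That chain is more delicate than it looks. Lemma~\ref{lem:badM}, stated per node, can fail. Take $u$ with status outMIS that accepts the seduction of a neighbor $v$ in the same round that $v$, with $\operatorname{mustLeaveMIS}(v)$ True, switches to outMIS. Then $u$ ends up matched to an outMIS node, so $\operatorname{badMatching}(u)$ becomes True. What survives is the global closure of safe configurations, which is all you actually use, so your argument stands.

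Your fallback verification in the last paragraph is essentially the paper's argument restricted to isolated nodes and their neighbors. There you do not even need safety to be preserved. A matched neighbor $v$ of an isolated node in a stable configuration cannot itself be isolated, so it is married and $\operatorname{badMatching}(v)$ is False outright. ($\operatorname{mustLeaveMIS}(v)$, which you left off your list, is trivially False because $v$ is matched.) An induction on the invariant ``stable'' alone therefore closes the argument.
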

\begin{proof}
In a  stable configuration, the predicates $\operatorname{canJoinMIS}$, $\operatorname{mustLeaveMIS}$, $\operatorname{badMatching}$, $\operatorname{gotMatch}$, 
$\operatorname{gotSeduced}$ and $\operatorname{canSeduce}$ evaluate at $\operatorname{False}$ on every node.
\end{proof}

\begin{theorem}\label{theo:stable-mm}
In a stable configuration, the set of edges $\{u, v\}$ such that $\operatorname{status}_u = \operatorname{status}_v = \operatorname{matched}$, $\operatorname{partner}_u = v$, and $\operatorname{partner}_v = u$ defines a maximal matching.
\end{theorem}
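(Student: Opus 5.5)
Let $E_M$ denote the set of edges described in the statement. We must show that $E_M$ is a matching and that it is maximal. We use only the definition of a stable configuration, namely that every node is either $\operatorname{married}$ or $\operatorname{isolated}$.

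\emph{Matching.} Suppose $\{u,v\}$ and $\{u,w\}$ both belong to $E_M$. The definition of $E_M$ then gives $\operatorname{partner}_u = v$ and $\operatorname{partner}_u = w$. Since $\operatorname{partner}_u$ is a single variable, $v = w$. Hence every node lies on at most one edge of $E_M$, and $E_M$ is a matching.

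\emph{Maximality.} Suppose $E_M$ is not maximal. Then some edge $\{u,v\}$ has both endpoints unmatched by $E_M$.

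A married node $x$ has $\operatorname{status}_x = \operatorname{status}_{\operatorname{partner}_x} = \operatorname{matched}$ and $\operatorname{partner}_{\operatorname{partner}_x} = x$. So $\{x, \operatorname{partner}_x\} \in E_M$, and $x$ is matched by $E_M$. Since the configuration is stable, an unmatched endpoint cannot be married, so both $u$ and $v$ are isolated.

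An isolated node has status $\operatorname{inMIS}$, so $\operatorname{status}_u = \operatorname{inMIS}$. But $\operatorname{isolated}(v)$ requires every neighbor of $v$, in particular $u$, to have status $\operatorname{matched}$. This contradicts $\operatorname{status}_u = \operatorname{inMIS}$. Therefore $E_M$ is maximal.

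The only points needing care are the two uses of the definitions. First, one must check from the definition of $\operatorname{married}$ that a married node really lies on an edge of $E_M$, i.e.\ that its partner satisfies the reciprocal conditions. Second, one must observe that two isolated nodes can never be adjacent, because isolation forces all neighbors to be $\operatorname{matched}$ while the isolated node itself is $\operatorname{inMIS}$. Neither step is a real obstacle; the argument is a direct unfolding of the predicates.
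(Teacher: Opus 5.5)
Your proof is correct and follows essentially the same route as the paper. An edge with both endpoints unmatched forces both endpoints to be non-married, hence isolated in a stable configuration, and two adjacent nodes cannot both be isolated. You also explicitly check that $E_M$ is a matching (via the single-valued $\operatorname{partner}$ variable) and spell out why adjacent isolated nodes are impossible; the paper leaves both points implicit.
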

\begin{proof}
Suppose that it is possible to add one more edge $\{ u, v \}$ to the matching so that it remains a legal matching. Hence, both $u$ and $v$ are not matched yet so $\operatorname{married}(u) = \operatorname{married}(v) = \operatorname{False}$. Since the configuration is terminal, predicates $\operatorname{isolated}(u)$ and $\operatorname{isolated}(v)$ must evaluate to $\operatorname{True}$. But by definition, two neighbors cannot simultaneously verify the predicate isolated. Contradiction. 
\end{proof}

\begin{theorem}
Algorithm~\ref{alg:mm} computes a Maximal Matching in 
$O(\Delta^2 M\log M)$ rounds. 
Each node is awake at most $O(\Delta^2)$ times
during the convergence phase. 
The awake frequency after convergence is $O(1/(M\log M))$.
\end{theorem}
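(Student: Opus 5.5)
The plan is to assemble the final theorem from Corollary~\ref{cor:term}, Lemma~\ref{lem:stable-closure}, Theorem~\ref{theo:stable-mm} and Lemma~\ref{lem:Bexists}. The proof splits into three parts: correctness, round complexity, and the two awake-frequency statements.

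\textbf{Correctness and round complexity.} By Corollary~\ref{cor:term}, from any configuration every node is isolated or married after at most $\p_M(M)[\Delta(5\Delta+4)+1]+\Delta(\Delta+2)$ rounds, so a stable configuration is reached within this bound. By Lemma~\ref{lem:stable-closure} the stable configurations form a closed set. By Theorem~\ref{theo:stable-mm} every stable configuration encodes a maximal matching. Hence the stable configurations serve as the legitimate configurations.

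For the bound itself, every prime returned by $\p_M$ lies in $[B(M),2B(M)]$, so $\p_M(M)\le 2B(M)$. Lemma~\ref{lem:Bexists} gives $B(M)=M\log M+o(M\log M)$, hence $\p_M(M)=O(M\log M)$. Substituting yields a round complexity of $O(\Delta^2 M\log M)+O(\Delta^2)=O(\Delta^2M\log M)$.

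\textbf{Awake complexity before convergence.} Each node $u$ is commonly awake exactly once every $\p_M(\operatorname{id}_u)\ge B(M)$ rounds. In addition, it is justly awake at most once after each common awakening, through line~\ref{amm:l5} followed by a sleep of $\p_M(\operatorname{id}_u)-1$. So $u$ is awake at most $2$ times per window of $\p_M(\operatorname{id}_u)$ rounds, up to an additive constant for the initial clock value. Over the convergence time $R=O(\p_M(M)\Delta^2)$, $u$ is therefore awake at most $2\lceil R/\p_M(\operatorname{id}_u)\rceil+O(1)$ times. Since $\p_M(M)\le 2B(M)\le 2\p_M(\operatorname{id}_u)$, this ratio is $O(\Delta^2)$. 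The essential point is that all primes lie within a factor $2$ of each other, which is exactly why the interval $[B,2B]$ is used.

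\textbf{Awake frequency after convergence.} In a stable configuration every node is married or isolated, so $\operatorname{canJoinMIS}$ is false at every node. Consequently no node ever takes line~\ref{amm:l5} again, and after at most one wake-up that resets $\operatorname{justWokeUp}$, each node $u$ sleeps exactly $\p_M(\operatorname{id}_u)\ge B(M)$ rounds between activations. Its activation frequency is thus at most $1/B(M)=O(1/(M\log M))$.

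\textbf{Main obstacle.} The delicate step is the awake count. The awake count before convergence is meant to be taken up to the first legitimate configuration, and each node wakes at most twice per period of length at least $B(M)$, whatever the path of execution; this justifies charging wake-ups against elapsed rounds rather than analysing them through the lemmas. I would state this per-period bound explicitly before dividing.
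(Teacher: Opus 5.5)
Your proposal is correct and follows the paper's own proof. It assembles Corollary~\ref{cor:term}, Lemma~\ref{lem:stable-closure}, Theorem~\ref{theo:stable-mm} and Lemma~\ref{lem:Bexists} in the same way, and your bound of at most two wake-ups per window of $\p_M(\operatorname{id}_u)$ rounds is the same factor-$2$ argument that the paper states as ``at most four wake-ups per $\p_M(M)$ rounds''; your remark that $\operatorname{canJoinMIS}$ is false everywhere in a stable configuration, so that the period becomes exactly $\p_M(\operatorname{id}_u)$, justifies the after-convergence frequency a bit more explicitly than the paper does.
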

\begin{proof}

A stable configuration is reached after 
$O( \p_M(M)\Delta^2)$ rounds according to corollary~\ref{cor:term}.
In a stable configuration, a Maximal Matching is computed, thanks to Lemma~\ref{lem:stable-closure} and Theorem~\ref{theo:stable-mm}, forever.
By definition, $p(M) \leq 2B$. We have $B = O(M \log M)$ (Lemma~\ref{lem:Bexists}).
By the choice of $B$, a node is awake at least once and at most four times in $\p_M(M)$ rounds, so the awake complexity of a node before convergence of the system is at most $ \operatorname{O} (\Delta^2)$, and the awake frequency after convergence, for each node, is $O(1/\p_M(M))$.
\end{proof}

\section{Two Automatic Transformers}\label{sec:trans}

In this section, we introduce two ways to transform a self-stabilizing algorithm to make it more energy-efficient by lowering the frequency with which a node is awake. Algorithm~\ref{alg-sleeping-simple} simply forces each node to be awake once every $SP$ rounds, for some constant $SP$. Algorithm~\ref{alg-sleeping-fast} is more elaborate and tries to converge faster when a node detects that some computation is required. In this section, we will denote the initial algorithm by $\InAlg$. 

\subsection {A Simple Transformer: Algorithm~\ref{alg-sleeping-simple}}
Algorithm~\ref{alg-sleeping-simple} is a simple transformer that converts a self-stabilizing algorithm into a self-stabilizing SLEEPING Algorithm. 
The transformed algorithm, denoted by $\InAlg$, only needs to converge under the weakly fair scheduler.
If $\InAlg$ converges in $T$ rounds then the obtained algorithm converges
in $T.SP$ synchronous rounds, for any arbitrary $SP$.
The wake-up frequency is always the same $1/SP$.
Therefore, during the convergence time, each node is awake during $T$ synchronous rounds.
\begin{algorithm}
$~$
\BlankLine
 \textbf{Inputs:} \\
 \quad $\InAlg$ :  a self-stabilizing algorithm under the weakly fair scheduler\\
 \quad $\text{SP}$ :  an integer being the awake frequency\\
 
$~$
\BlankLine
\SetKwBlock{Begin}{Upon waking:}{} 
\Begin{
\nl $u$ executes an action of $\InAlg$ if it is enabled;
 Sleep for $SP$ rounds;\\
}

$~$
\BlankLine
\caption{Simple transformer to make a self-stabilizing algorithm energy efficient}
\label{alg-sleeping-simple}
\end{algorithm}
\subsection {A More Elaborate Transformer: Algorithm~\ref{alg-sleeping-fast}}
Algorithm~\ref{alg-sleeping-fast} is a transformer that allows converting a silent self-stabilizing algorithm under the synchronous scheduler into a silent self-stabilizing SLEEPING Algorithm. 
If $\InAlg$ converges in $T$ rounds under the synchronous scheduler, then the obtained algorithm converges in $(3D+2).SP+3T$ synchronous rounds, where $D$ is the diameter of the graph. 

The idea of the algorithm is to (1) detect that a node has not converged, (2) broadcast the information to the whole graph, (3) have $T$ synchronous rounds where every node is active to converge, and (4) go back to sleeping with a $1/SP$ awake frequency (where every node actually wake up at the same time every $SP$ rounds).

More formally, the first node, $u$, to locally detect that $\InAlg$ has diverged stays awake for $D.SP+T$ consecutive rounds. More specifically, $u$ sets its variable $ect$ (elapsed convergence time) to $1$. The first period of $D.SP$ rounds is used to wake up all nodes.
In the second period of $T$ rounds, every node executes the algorithm $\InAlg$.
By the definition of $T$, a terminal configuration of $\InAlg$ is reached at the end of this period.
During the final period of at most $D.SP+T$ rounds, every node ends up setting its $ect$ variable to $0$ to reach a silent configuration.

\begin{algorithm}[htb]
$~$
\BlankLine
 \textbf{Inputs:} \\
 \quad $\InAlg$ :  a self-stabilizing algorithm under the weakly fair scheduler\\
 \quad $\text{SP}$ : an integer being the awake frequency after convergence\\
 \quad $\text{D}$ : an upper bound on the diameter of the graph\\
  \quad $\text{T}$ : an upper bound on the maximum number of rounds needed for the convergence of $\InAlg$\\

\BlankLine
 \textbf{Local variable:} \\
    \quad $\sleepingDuration_u$ : integer taking value in $\{1, SP\}$\\

\BlankLine
 \textbf{Shared variable:} \\
    \quad $\countVar_u$ : integer taking value in $[0, D.SP+T]$\\
    
\BlankLine
 \textbf{macro "Elapsed Convergence Time":} 
 $$\maxCount_u := \left\{
    \begin{array}{ll}
        0 & \mbox{if  } \forall v \in N[u], \countVar_v = 0 \\
        1+ \text{min of }\{~\countVar_v ~|~v \in N[u] \text{ and } \countVar_v > 0 \} & \mbox{otherwise}
    \end{array}
\right.$$

\BlankLine
\SetKwBlock{Begin}{Upon waking:}{} 
\Begin{
\tcc{Regularly wake up to detect an anomaly.}
\nl \lIf 
{$u$ is disabled in $\InAlg$ and $\maxCount_u = 0$} 
{ 
$\sleepingDuration_u := SP$ \label{a2:l1}
}
\nl \Else  {
\nl \lIf  {$u$ is enabled in $\mathcal A$} 
{
$u$ executes an action of $\mathcal A$
}
\nl \If {$\maxCount_u < D.SP+T$}
 { 
\nl $\countVar_u := \max(1,\maxCount_u$); \label{a2:l5}
 $\sleepingDuration_u := 1$;
} 
\nl \lElse { \label{a2:l6}
 $\countVar_u := 0$; $\sleepingDuration_u := SP$  } }
\nl Sleep for $\sleepingDuration_u$ rounds.
 }

$~$
\BlankLine
\caption{Elaborate transformer to make a self-stabilizing algorithm energy efficient}
\label{alg-sleeping-fast}
\end{algorithm}

To prove that our transformer works, we first need the following predicate that ensures that nodes are sleeping only if their variable $\countVar$ is set to 0, and prove that this predicate is and remains true after most $SP$ rounds.

$$\text{Predicate }\pseudoCorrect(u): \sleepingDuration_u = SP \Leftrightarrow \countVar_u= 0$$
\begin{lemma}
\label{lem:pseudo-correct}
After $SP$ rounds, every node verifies the predicate $\pseudoCorrect$.
The predicate $\pseudoCorrect$ is closed.
\end{lemma}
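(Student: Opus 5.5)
The plan is to read the algorithm line by line and track how the pair $(\sleepingDuration_u,\countVar_u)$ is assigned every time $u$ wakes up. The key observation is that every awake round ends by setting both variables together, through one of exactly three branches.

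\begin{itemize}
\item Line~\ref{a2:l1}: $\sleepingDuration_u := SP$ and $\countVar_u$ is left untouched.
\item Line~\ref{a2:l5}: $\countVar_u := \max(1,\maxCount_u)\ge 1$ and $\sleepingDuration_u := 1$.
\item Line~\ref{a2:l6}: $\countVar_u := 0$ and $\sleepingDuration_u := SP$.
\end{itemize}

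Branches~\ref{a2:l5} and~\ref{a2:l6} establish $\pseudoCorrect(u)$ at once. For line~\ref{a2:l5} we also note that $SP\neq 1$, which we assume, as the paper implicitly does. The whole difficulty is therefore branch~\ref{a2:l1}, where $\countVar_u$ is not rewritten. Here I would use the guard. It requires $\maxCount_u=0$, and by definition of the macro this means $\countVar_v=0$ for all $v\in N[u]$, in particular $\countVar_u=0$ at the moment of the action. Since $u$ does not modify $\countVar_u$ in this branch, $\countVar_u=0$ afterwards, matching $\sleepingDuration_u=SP$. Hence every awake round of $u$ leaves $u$ satisfying $\pseudoCorrect(u)$.

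For the convergence part, I would argue that $u$ wakes up at least once within the first $SP$ rounds. Its clock $cl$ is decremented each round, and I would assume that its initial value is at most $SP$, which is the maximal sleeping duration. Some awake round of $u$ therefore occurs in rounds $1,\dots,SP$, and right after it $\pseudoCorrect(u)$ holds by the case analysis above. If the clock domain is not bounded a priori, I would note this as the step needing care: one must assume the clock ranges over $[1,SP]$, as the statement implicitly does.

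For closure, I would observe that $\sleepingDuration_u$ and $\countVar_u$ are both written only by $u$, and only during its own awake rounds. Neighbors may change their own $\countVar$ values, but this cannot affect the predicate, which involves only $u$'s variables. So between two awake rounds of $u$ the predicate is preserved trivially. At an awake round the case analysis re-establishes it regardless of its previous value, so once true it remains true forever. Thus the only genuine obstacle is the branch~\ref{a2:l1} argument, which hinges on the macro $\maxCount_u$ ranging over the closed neighborhood $N[u]$, so that $u$'s own $\countVar_u$ is included.
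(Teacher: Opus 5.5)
Your proposal is correct and follows the paper's argument. Every node wakes within $SP$ rounds, and each awake round leaves $\sleepingDuration_u$ and $\countVar_u$ in agreement, which gives both convergence and closure; you simply make explicit what the paper's two-sentence proof leaves implicit, namely the three-branch case analysis (with line~\ref{a2:l1} relying on $u\in N[u]$ in $\maxCount_u$) and the tacit assumptions $SP>1$ and an initial clock value at most $SP$.
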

\begin{proof}
The variable $\sleepingDuration$ takes values in $\{1,SP\}$, meaning that every node is active at least once every $SP$ rounds.
An awake node updates its two variables $\sleepingDuration$ and $\countVar$ in such a way that
it verifies the predicate $\pseudoCorrect$ at the end of the current round.
\end{proof}

We introduce, for any configuration $\gamma$, $\ectMinStrictPositif_{\gamma}$ as the minimum value of $\countVar$ among all the nodes. Formally,
$$\ectMinStrictPositif_{\gamma}=
\left\{
    \begin{array}{ll}
        0 & \mbox{if  } \forall u \in V, \countVar_u = 0 \\
        \text{min of }\{~\countVar_u ~|~ \countVar_u > 0 \} & \mbox{otherwise}
    \end{array}
\right.
$$

\subsubsection{Executions from a pseudoCorrect configuration}
A configuration is said to be pseudoCorrect if all nodes verify the predicate $\pseudoCorrect$. In this section, we show that from such a configuration, a terminal configuration of $\InAlg$ is reached.

\begin{lemma}
\label{lem:ronde-sync-v2}
Let $\gamma_0$ be a pseudoCorrect configuration.
Let us assume that there exists some node $u$ such that $\countVar_u^0 \le (D-1)SP$.

In round $\gamma_{SP}$, 
we have $\countVar_v^{SP} \in [1,\countVar_u^0+SP]$ 
if $v$ is $u$ or a neighbor of $u$.
\end{lemma}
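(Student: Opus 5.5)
The plan is an induction on the rounds $t=0,\dots,SP$ that tracks $u$ first and then each neighbor $v$. Write $c=\countVar_u^0$. I will assume $c\ge 1$, which is the meaningful reading of the statement: for $v=u$ the claimed interval $[1,c+SP]$ forces $\countVar_u\ge 1$, and with $c=0$ node $u$ could stay at $\countVar_u=0$. The key numeric fact is that every bound $c+t$ with $t\le SP$ satisfies $c+t\le D.SP<D.SP+T$. Hence whenever an awake node $w$ has $\maxCount_w\le c+t$, it takes the branch of line~\ref{a2:l5}, not the reset of line~\ref{a2:l6}. It then sets $\countVar_w:=\max(1,\maxCount_w)\ge 1$ and $\sleepingDuration_w:=1$.

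First, for $u$ itself, I will prove by induction that for every $t\in[0,SP]$ we have $\countVar_u^t\in[1,c+t]$, and that $u$ is awake in every round. At $t=0$, $\countVar_u^0=c\ge1$, and by $\pseudoCorrect(u)$ (closed by Lemma~\ref{lem:pseudo-correct}) we get $\sleepingDuration_u=1$, so $u$ is awake in round $1$. Suppose $\countVar_u^{t-1}\in[1,c+t-1]$. Since $u\in N[u]$ has a positive $\countVar$, the macro gives $\maxCount_u\neq 0$. This puts $u$ in the else branch, and moreover $\maxCount_u\le 1+\countVar_u^{t-1}\le c+t$. By the key numeric fact, $\countVar_u^t\in[1,c+t]$ and $u$ sleeps for only $1$ round. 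The argument never depends on whether $u$ is enabled in $\InAlg$.

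Next, fix a neighbor $v$. Since $\sleepingDuration_v\in\{1,SP\}$, node $v$ is awake at some round $t_0\in[1,SP]$. I will show that from its first awakening $t_0$ on, $\countVar_v^{t}\in[1,c+t]$ for all $t\in[t_0,SP]$, and that $v$ is awake in every such round. At round $t$, node $v$ sees $u\in N[v]$ with $\countVar_u^{t-1}\in[1,c+t-1]$, so $\maxCount_v\neq0$ and $v$ enters the else branch. Also $\maxCount_v\le 1+\countVar_u^{t-1}\le c+t$, because the minimum over positive values is at most $\countVar_u^{t-1}$. By the key numeric fact, $\countVar_v^t\in[1,c+t]$ and $\sleepingDuration_v=1$, so $v$ is awake again in round $t+1$ and the induction continues up to $SP$. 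Taking $t=SP$ gives $\countVar_v^{SP}\in[1,c+SP]$.

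The main obstacle is that the upper bound for $v$ must come from $u$'s value, not from $v$'s own possibly large initial value. This is why the argument uses the definition of $\maxCount$ as $1+$ the \emph{minimum} positive value in the closed neighborhood, together with the fact that $u$ is awake every round and so its bound $c+t-1$ is valid at every step. A second care point is ruling out the reset branch, which is handled by $c+SP\le D.SP<D.SP+T$. This inequality is exactly where the hypothesis $c\le (D-1)SP$ is used.
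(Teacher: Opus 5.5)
Your proof is correct and follows essentially the same route as the paper's. You bound $u$'s own counter by $\countVar_u^0+t$ while it stays positive, then let each neighbor $v$ wake once within $SP$ rounds, pick up $\maxCount_v\le 1+\countVar_u$, and remain awake with a positive counter that never exceeds $\countVar_u^0+t$. You are also more careful than the paper in two places: you state the hypothesis $\countVar_u^0\ge 1$, which the paper's proof uses silently and without which the statement fails (e.g.\ from a legitimate configuration), and you check that the reset of line~\ref{a2:l6} is never triggered, which is exactly where $\countVar_u^0\le (D-1)SP$ is needed.
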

\begin{proof}
As long as $\countVar_u < D.SP+T$, the value $\countVar_u$ can only increase by at most one by Algorithm~\ref{alg-sleeping-fast} before reaching (if ever) the value 0. Hence, for any configuration $\gamma_l$  with $l\in[1, SP]$, we get that $1\le \countVar_u^l\le \countVar_u^0+l$.

Let $v\in N(u)$. 
There exists $i\in[0,SP-1]$  such that $v$ is awake in configuration $\gamma_i$. 
We have $\maxCount_v^i\in[2,\countVar_u^i+1]$. 
Hence, $\countVar_v^{i+1}=\maxCount_v^i$. 
As $\maxCount_{v}$ remains positive and increases by at most $1$ during $[i, SP]$, 
we can conclude
that $\countVar_v^{SP}\in[1,\countVar_u^0+SP]$.
\end{proof}
\begin{lemma}
\label{lem:ronde-sync}
Let $\gamma_0$ be a pseudoCorrect configuration.
Let us assume that there exists some node $u$ such that $\countVar_u=1$. 
All nodes are active during the $T$ consecutive rounds from $\gamma_{D.SP}$ to $\gamma_{D.SP+T}$ 
and $\gamma_{D.SP+T}$ is a terminal configuration of $\InAlg$.
\end{lemma}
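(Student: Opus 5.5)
The plan is to propagate the ``wave'' of positive $\countVar$ values outward from $u$ by repeated use of Lemma~\ref{lem:ronde-sync-v2}, and then argue that during a window of $T$ rounds every node is forced to be awake at every round because its $\maxCount$ is strictly between $0$ and $D.SP+T$.

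\textbf{Step 1: layers.} Let $V_k$ be the set of nodes at distance at most $k$ from $u$. By induction on $k\le D$, I will show that in $\gamma_{k.SP}$ every $v\in V_k$ has $\countVar_v\in[1,1+k.SP]$. The case $k=0$ is the hypothesis $\countVar_u^0=1$. For the inductive step, pick any $w\in V_k$; by induction $\countVar_w^{k.SP}\le 1+k.SP\le (D-1)SP$ as long as $k\le D-1$ (for small $SP$ this needs care; I would absorb the boundary case by noting the statement of Lemma~\ref{lem:ronde-sync-v2} only uses the bound to ensure the value stays below $D.SP+T$, which also holds here). Since pseudoCorrectness is closed (Lemma~\ref{lem:pseudo-correct}), apply Lemma~\ref{lem:ronde-sync-v2} from $\gamma_{k.SP}$ to $w$: all of $N[w]$ has $\countVar\in[1,1+(k+1)SP]$ at $\gamma_{(k+1)SP}$. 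This covers $V_{k+1}$. At $k=D$ every node has $\countVar\in[1,1+D.SP]$ in $\gamma_{D.SP}$.

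\textbf{Step 2: everyone stays awake.} By pseudoCorrectness, $\countVar_v>0$ implies $\sleepingDuration_v=1$, so every node is awake in round $D.SP+1$. Then I show by induction on $t\in[0,T-1]$ that at $\gamma_{D.SP+t}$ every node has $\countVar\in[1,1+D.SP+t]$ and is awake next round. Indeed, an awake node with positive $\countVar$ has $\maxCount_v\ge 2>0$, so it does not take line~\ref{a2:l1}; and $\maxCount_v\le 1+\min_{N[v]}\countVar\le$ its own value $+1$. Here I need $\min$ rather than max: $\maxCount_v\le \countVar_v+1\le 2+D.SP+t$, and for $t\le T-2$ this is $<D.SP+T$, so line~\ref{a2:l5} applies, keeping $\countVar$ positive with sleep duration $1$. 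Hence all nodes are active in each of the $T$ rounds $D.SP+1,\dots,D.SP+T$. The main obstacle is the off-by-one at the last round ($t=T-1$), where a node could reach $D.SP+T$ and execute line~\ref{a2:l6}; but that line is only reached in the else-branch after executing $\InAlg$'s action, so the node is still active in that round, which is what matters.

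\textbf{Step 3: $\InAlg$ converges.} In each of those $T$ rounds, every node is awake and executes an action of $\InAlg$ if enabled, i.e. the $\InAlg$-projection of the execution is exactly a synchronous execution of $\InAlg$ for $T$ steps (variables of $\InAlg$ are only modified via its actions). Since $\InAlg$ converges within $T$ rounds under the synchronous scheduler from any configuration, $\gamma_{D.SP+T}$ is terminal for $\InAlg$.
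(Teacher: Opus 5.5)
Your route is the paper's: induction over distance layers with Lemma~\ref{lem:ronde-sync-v2}, then the fact that a positive $\countVar$ grows by at most one per round, then $T$ synchronous steps of $\InAlg$ (which the paper leaves implicit). The gap is in Step~2. The inequality $2+D.SP+t<D.SP+T$ fails at $t=T-2$, where the two sides are equal, so line~\ref{a2:l5} is only guaranteed for $t\le T-3$. A node may therefore execute line~\ref{a2:l6} in round $D.SP+T-1$ and then be asleep in round $D.SP+T$. Your ``main obstacle'' remark deals with a reset during round $D.SP+T$, which is harmless, but not with this earlier one.

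This really happens. Take two adjacent nodes ($D=1$, $T,SP\ge 2$) with $\countVar_u^0=1$, $\countVar_v^0=0$, and $v$ first waking in round $SP$. Then $\countVar_u^r=r+1$, and $v$ joins with the same value. Both hold $D.SP+T-1$ in $\gamma_{D.SP+T-2}$, both see $\maxCount=D.SP+T$ in round $D.SP+T-1$ and reset, and nobody is awake in round $D.SP+T$. So the window $D.SP+1,\dots,D.SP+T$ contains only $T-1$ rounds in which everyone is awake. That is not enough to invoke the $T$-round synchronous convergence of $\InAlg$ in Step~3, and the ``active'' part of the statement, read literally, is false.

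The repair is to shift the window one round earlier. Your Step~1 already implies that every node is awake in round $D.SP$. If $\countVar_v^{D.SP}>0$, let $r_0\le D.SP$ be the last round in which $v$ woke up (it wakes at least once every $SP$ rounds). That awakening cannot have used line~\ref{a2:l1}, which is only taken when $\maxCount_v=0$, hence $\countVar_v=0$. Nor can it have used line~\ref{a2:l6}, which sets $\countVar_v$ to $0$. So it took line~\ref{a2:l5} with a one-round sleep, and maximality forces $r_0=D.SP$. This is also the right justification for being awake, since $\pseudoCorrect$ constrains $\sleepingDuration$, not the clock.

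With the corrected range $t\le T-3$ in Step~2, every node is then awake in the $T$ rounds $D.SP,\dots,D.SP+T-1$. Hence $\gamma_{D.SP+T-1}$ is terminal for $\InAlg$, and so is $\gamma_{D.SP+T}$, which is all that is used later. Your patch for the last layer in Step~1 is fine once $T\ge 2$.

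The paper's own proof has the same off-by-one. It is hidden in the assertion that the values in $\gamma_{D.SP}$ lie in $[1,D.SP]$, whereas Lemma~\ref{lem:ronde-sync-v2} started from $\countVar_u^0=1$ only gives your (correct) bound $[1,1+D.SP]$.
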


\begin{proof}
First, we observe that, if $\countVar_v=k>0$ for some node $v$, then this value increases by at most 1 before reaching $D.SP+T$. Hence, $v$ will be awake for at least $D.SP+T-k$ consecutive rounds.

Let's prove by induction on $d$ that nodes at distance $d$ from $u$ are active in configurations $\gamma_{d.SP}$ to $\gamma_{D.SP+T}$.
For $d=0$, it is immediate, as $u$ is the only node at distance 0 from $u$. By applying Lemma~\ref{lem:ronde-sync-v2}, we immediately get the inductive step.

Since all nodes are at a distance of at most $D$ from $u$, all nodes are active in round $D.SP$ with their $\countVar$ variable at values in $[1,D.SP]$. From this round, they will all be active for the next $T$ consecutive rounds.
\end{proof}
\begin{lemma}
\label{lem:strict-crois-B}
Let $\gamma_0$ be a pseudoCorrect configuration of Algorithm~\ref{alg-sleeping-fast}.
Let $u$ be a node.
Assume that  $\countVar_u^1\neq 1$.
After a round from $\gamma_0$,
 either $\countVar_u^1=0$ or the value of $\countVar_u^1 > \ectMinStrictPositif^0$.
\end{lemma}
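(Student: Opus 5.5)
The argument is a case analysis on what $u$ does during the round from $\gamma_0$ to $\gamma_1$. It rests on the macro $\maxCount_u$ and on the definition of $\ectMinStrictPositif$. The key observation is that every strictly positive value of $\countVar$ in $\gamma_0$ is at least $\ectMinStrictPositif^0$. Hence whenever $\maxCount_u^0 > 0$ we have
\[
\maxCount_u^0 = 1+\min\{\countVar_v^0 \mid v\in N[u],\ \countVar_v^0>0\} \ge 1+\ectMinStrictPositif^0 > \ectMinStrictPositif^0 .
\]

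First, suppose $u$ is asleep in that round. Then $\countVar_u^1=\countVar_u^0$. Since $\gamma_0$ is pseudoCorrect, the clock and sleep duration tell us something about $\countVar_u^0$. If $\sleepingDuration_u=SP$, then $\countVar_u^0=0$, so $\countVar_u^1=0$ and we are done. If $\sleepingDuration_u=1$, the node sleeps only one round after each activation, so it is awake in every round and cannot be asleep here. This step needs a small justification: the clock value is consistent with $\sleepingDuration_u$, except possibly just after an arbitrary initial clock. If needed, I will interpret pseudoCorrect as holding from round $SP$ on, in line with Lemma~\ref{lem:pseudo-correct}, where clocks are consistent.

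Second, suppose $u$ is awake. If line~\ref{a2:l1} applies, $\countVar_u$ is untouched. By pseudoCorrect and the fact that $\sleepingDuration_u$ was set to $SP$, I expect to argue that the value of $\countVar_u$ is then $0$. Alternatively, if $\countVar_u^0>0$, then $\maxCount_u^0>0$ because $u\in N[u]$, so this branch is not taken. So in this branch $\countVar_u^0=0$ and $\countVar_u^1=0$. If line~\ref{a2:l6} applies, $\countVar_u^1=0$ directly. If line~\ref{a2:l5} applies, $\countVar_u^1=\max(1,\maxCount_u^0)$. Since by hypothesis $\countVar_u^1\neq 1$, we get $\countVar_u^1=\maxCount_u^0\ge 2$, so $\maxCount_u^0>0$, and the displayed inequality gives $\countVar_u^1>\ectMinStrictPositif^0$.

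The main obstacle is the sleeping case. I must make sure that a node sleeping with $\countVar_u^0>0$ cannot happen in a pseudoCorrect configuration, via the correspondence between $\sleepingDuration_u=1$ and being awake in every round. If an adversarial initial clock breaks this, I would instead note that a sleeping node with positive $\countVar_u^0$ keeps that value. Its value is $\ge\ectMinStrictPositif^0$, and it is strictly greater unless it attains the minimum. I would then strengthen the hypothesis to configurations reached after $SP$ rounds to rule this out.
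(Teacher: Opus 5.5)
Your proof is correct and follows the paper's own case analysis: line~5, line~6, line~1, and the asleep case. In two places you are more careful than the paper: in the line~5 case you use the hypothesis $\countVar_u^1\neq 1$ to force $\maxCount_u^0\ge 2$, and in the line~1 case you deduce $\countVar_u^0=0$ from $u\in N[u]$.

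Your concern about the asleep case is legitimate. The paper simply invokes $\pseudoCorrect$ there, which implicitly assumes the clock agrees with $\sleepingDuration_u$. That agreement holds once every node has woken at least once, i.e., from the configuration given by Lemma~\ref{lem:pseudo-correct} onward, which is the setting in which the paper actually applies this lemma.
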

\begin{proof}
We study the behavior of $u$ during the round from $\gamma_0$.
Assume that $u$ executes line~\ref{a2:l5} during the round from $\gamma_0$.
As  $\maxCount_u^1 \neq 0$ we have $\maxCount_u^1 \geq \ectMinStrictPositif^{0}+1$.
Hence, the value of $\countVar_u^1 > \ectMinStrictPositif^{0}$.

Assume that $u$ executes line~\ref{a2:l6} during the round from $\gamma_0$.
After the round,  $\countVar_u = 0$.

Assume that $u$ is awake and does not execute line~\ref{a2:l5} or~\ref{a2:l6} during the round from $\gamma_0$. 
So,  $\maxCount_u^0=0$ and the value of $\countVar_u$ is unchanged during this round. 

Assume that $u$ is not awake. 
As the predicate $\pseudoCorrect$ is verified in $\gamma_0$;
after the round, $\countVar_u = 0$.
\end{proof}
$$\text{Predicate }\correct(u): \pseudoCorrect(u)~\&~ \countVar_u= 0$$
\begin{lemma}
\label{lem:ronde-cons}
Let $\epsilon$ be an execution from $\gamma_0$, a pseudoCorrect configuration.
Assume during the first $D.SP +T$ configurations of $\epsilon$, 
there is not  node $u$  such that 
$\countVar_u=1$. 
There is a configuration  among the first $D.SP +T$ configurations of $\epsilon$
such that every node verifies the predicate $\correct$. 
\end{lemma}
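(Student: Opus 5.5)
The plan is a potential argument on $\ectMinStrictPositif$, driven by Lemma~\ref{lem:strict-crois-B}. Since no configuration among $\gamma_0,\dots,\gamma_{D.SP+T}$ has a node with $\countVar=1$, the hypothesis of Lemma~\ref{lem:strict-crois-B} holds at every step in this window. Every configuration reachable from a pseudoCorrect one is pseudoCorrect, by the closure part of Lemma~\ref{lem:pseudo-correct}. So for each $i < D.SP+T$ and every node $u$, the lemma gives $\countVar_u^{i+1}=0$ or $\countVar_u^{i+1} > \ectMinStrictPositif^{i}$.

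The first step is to turn this into a claim about the minimum. Whenever $\ectMinStrictPositif^{i+1}>0$, it is attained by some node, so $\ectMinStrictPositif^{i+1} \ge \ectMinStrictPositif^{i}+1$ as long as $\ectMinStrictPositif^{i}>0$. A small care point is the case $\ectMinStrictPositif^{i}=0$: then every $\countVar$ is $0$ in $\gamma_i$, so $\maxCount_u^i=0$ for all $u$. A node can then obtain a positive value only through line~\ref{a2:l5} with $\max(1,0)=1$. That would produce $\countVar_u=1$, which the hypothesis forbids, so all values stay $0$.

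The second step is the counting. By induction, as long as the minimum is positive we have $\ectMinStrictPositif^{i} \ge \ectMinStrictPositif^{0}+i \ge 1+i$. But $\countVar$ takes values in $[0,D.SP+T]$, so the positive minimum cannot exceed $D.SP+T$. Hence within the first $D.SP+T$ rounds (reaching $\gamma_{D.SP+T}$ at the latest) there is a configuration with $\ectMinStrictPositif=0$, that is, every $\countVar_u=0$. Combined with pseudoCorrect, this gives $\sleepingDuration_u=SP$, so every node satisfies $\correct$.

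The main obstacle is the off-by-one at the boundary, namely whether $D.SP+T$ or $D.SP+T+1$ configurations are needed. I would handle it by noting that a node reaching $D.SP+T$ through $\maxCount$ executes line~\ref{a2:l6} and resets to $0$, so the maximal value is never retained. I would also use $\ectMinStrictPositif^0\ge 1$, with $\ectMinStrictPositif^0 \ge 2$ when positive, since value $1$ is forbidden. This gains the needed slack.
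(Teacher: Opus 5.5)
Your proposal is correct and follows the same route as the paper: you use closure of $\pseudoCorrect$ and apply Lemma~\ref{lem:strict-crois-B} repeatedly to show that the positive minimum $\ectMinStrictPositif$ strictly increases each round until it can no longer be positive, at which point every $\countVar$ is $0$ and $\correct$ holds everywhere. Your boundary bookkeeping is tighter than the paper's loose claim that the minimum ``increases to reach $D.SP+T+1$'': using $\ectMinStrictPositif^0\ge 2$, you only need the hypothesis on $\gamma_1,\dots,\gamma_{D.SP+T-1}$, which settles the off-by-one within the stated window.
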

\begin{proof}
Any reached configuration of $\epsilon$ is $\pseudoCorrect$ (Lemma~\ref{lem:pseudo-correct}).
So it is enough to establish that
a configuration where every $\countVar$ variable has the value $0$
is reached.

If in  $\gamma_0$, every node $u$ verifies $\countVar_u= 0$ then 
$\gamma_0$ is the requested configuration.

Assume that in $\gamma_0$ a node verifies $\countVar_u> 0$.
So, $\ectMinStrictPositif^{0} \geq 2$ by hypothesis.
We assume that  during the first $D.SP +T$ configurations of $\epsilon$, there is not  node $u$  such that $\countVar_u=1$. 
So, the repeated use of lemma~\ref{lem:strict-crois-B}
establishes that during the 
 first $D.SP +T-\ectMinStrictPositif^{0}+1$ rounds, 
 the value $\ectMinStrictPositif$ increases to reach $D.SP +T+1$.
After $D.SP +T-\ectMinStrictPositif^{0}+1$ rounds,
 on every node, $\countVar$ has the value $0$.
We conclude that every node verifies the predicate $\correct$.
\end{proof}

\begin{lemma}
\label{lem:ronde-T}
Let $\epsilon$ be an execution from $\gamma_0$ a $\pseudoCorrect$ configuration.
After $(D+1).SP +T$ rounds, a terminal configuration of $\InAlg$ is reached or 
 there is a node $u$ such that  $\countVar_u=1$ during one of the first $(D+1).SP+T$ configurations of $\epsilon$. 
\end{lemma}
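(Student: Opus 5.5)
The plan is a case split on whether some node has $\countVar=1$ in one of the first $(D+1).SP+T$ configurations of $\epsilon$. If such a node exists, the second alternative of the statement holds and there is nothing to prove. So assume that during these configurations no node has $\countVar_u=1$. The goal is then to show that a terminal configuration of $\InAlg$ is reached.

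First, apply Lemma~\ref{lem:ronde-cons}. Its hypothesis only concerns the first $D.SP+T$ configurations, which lie within our window. It gives a configuration $\gamma_k$ with $k\le D.SP+T$ in which every node satisfies $\correct$, so $\countVar_v=0$ and $\sleepingDuration_v=SP$ for all $v$. By Lemma~\ref{lem:pseudo-correct}, all configurations from $\gamma_k$ on remain pseudoCorrect.

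Next, I study what happens from $\gamma_k$. In $\gamma_k$ every $\maxCount$ is $0$. Suppose first that no node ever writes a positive $\countVar$ afterwards. A node awake from such a configuration that is enabled in $\InAlg$ would take the else branch. Since $\maxCount_u=0<D.SP+T$, it would then execute line~\ref{a2:l5} and set $\countVar_u:=\max(1,0)=1$. So, as long as all $\countVar$ are zero, any node that wakes up while enabled in $\InAlg$ produces $\countVar_u=1$. Every node wakes at least once within $SP$ rounds, because $\sleepingDuration\le SP$. Hence in the window $[k,k+SP]\subseteq[0,(D+1).SP+T]$ one of two things happens. Either some node wakes up enabled in $\InAlg$ and sets $\countVar=1$, contradicting our assumption. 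Or every node wakes while disabled in $\InAlg$.

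The main obstacle is this second case: I must argue that no $\InAlg$ action happens between $\gamma_k$ and the end of the window. Only then does "disabled when awake" imply "disabled in the final configuration". I would argue this by induction over rounds after $k$. While all $\countVar$ are $0$, an awake node either takes line~\ref{a2:l1}, which performs no $\InAlg$ action, or sets $\countVar=1$, which is excluded. So no node changes its $\InAlg$ state, and the $\InAlg$-part of the configuration is frozen from $\gamma_k$ to $\gamma_{k+SP}$. Enabledness in $\InAlg$ depends only on the $\InAlg$ variables of the closed neighborhood. Therefore each node, having been observed disabled at some awake round in this interval, is disabled in $\gamma_{k+SP}$. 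So $\gamma_{k+SP}$ is a terminal configuration of $\InAlg$, reached within $D.SP+T+SP=(D+1).SP+T$ rounds.

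Finally, I check that the bound matches the statement. The $D.SP+T$ rounds come from Lemma~\ref{lem:ronde-cons} and the extra $SP$ rounds ensure every node has been awake once.
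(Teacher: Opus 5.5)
Your proof is correct and takes the paper's route: apply Lemma~\ref{lem:ronde-cons} to get an all-$\correct$ configuration within the first $D.SP+T$ configurations, then note that any node waking up enabled in $\InAlg$ from such a configuration sets $\countVar$ to $1$, and every node wakes within $SP$ further rounds. Your explicit induction showing that the $\InAlg$ part of the configuration stays frozen until then makes rigorous a step the paper leaves implicit.
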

\begin{proof}
Assume that no node $u$ verifies $\countVar_u =1$ during 
the first $D.SP +T$ configurations of $\epsilon$. 
According to Lemma~\ref{lem:ronde-cons}, there is a configuration  among the first $D.SP +T$ configurations of $\epsilon$
such that every node verifies the predicate $\correct$. 
Let $\gamma$ be that configuration.

Assume that $\gamma$ is not a terminal configuration of $\InAlg$.
Let $r$ be  the first round along $\epsilon$ from $\gamma$ 
where an enabled node for $\InAlg$, $u$, is awake.
We get that $\countVar_u^r = 1$: $u$ has executed line~\ref{a2:l5} when all nodes verify the predicate $\correct$.
Since each node wakes up with a frequency greater than or equal to 1/SP, $r \leq SP$.
\end{proof}

\subsubsection{Executions from a terminal pseudoCorrect configuration}

We say that a configuration is \emph{legitimate} when every node $u$ verifies the predicate $\correct(u)$ 
and $u$ is disabled for $\InAlg$. We show in this section that a legitimate configuration is always reached.

\begin{lemma}
\label{lem:strict-crois}
Let $\gamma_0$ be a $\pseudoCorrect$ configuration of Algorithm~\ref{alg-sleeping-fast} 
where a terminal configuration of $\InAlg$ is reached.
Let $u$ be a node.
After a round from $\gamma_0$,
 either $\countVar_u=0$ or the value of $\countVar_u > \ectMinStrictPositif^{0}$.
\end{lemma}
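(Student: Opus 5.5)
We follow the case analysis of Lemma~\ref{lem:strict-crois-B}, examining what $u$ does during the round from $\gamma_0$. The only new ingredient is that $\gamma_0$ is terminal for $\InAlg$. We use it to rule out the single case that Lemma~\ref{lem:strict-crois-B} had to exclude by hypothesis: $u$ setting $\countVar_u$ to $1$ through the $\max(1,\cdot)$ in line~\ref{a2:l5}.

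First we note that $\InAlg$ has no enabled node in $\gamma_0$, and the shared variables of $\InAlg$ are untouched by the other branches. Hence during the round every node is disabled for $\InAlg$ and executes no action of $\InAlg$.

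The cases for $u$ are as follows.
\begin{itemize}
\item If $u$ is asleep, then $\sleepingDuration_u = SP$ (it cannot be sleeping with duration $1$). By $\pseudoCorrect(u)$ in $\gamma_0$ we get $\countVar_u^0=0$, and this value is unchanged, so $\countVar_u^1=0$.
\item If $u$ is awake and $\maxCount_u^0=0$, then $u$ is disabled in $\InAlg$, so line~\ref{a2:l1} applies. Then $\countVar_u$ is unchanged. Moreover $\maxCount_u^0=0$ means every $v\in N[u]$, in particular $u$, has $\countVar_v^0=0$. Thus $\countVar_u^1=0$.
\item If $u$ is awake and $\maxCount_u^0\neq 0$, we enter the else branch, where no action of $\InAlg$ is executed.
  \begin{itemize}
  \item If line~\ref{a2:l6} applies, then $\countVar_u^1=0$.
  \item Otherwise line~\ref{a2:l5} sets $\countVar_u^1=\max(1,\maxCount_u^0)=\maxCount_u^0$, because $\maxCount_u^0\ge 1$. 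By definition, $\maxCount_u^0 = 1+\min\{\countVar_v^0 \mid v\in N[u], \countVar_v^0>0\}$. This is at least $1+\ectMinStrictPositif^{0}$, since the global minimum of positive values is at most the local one. So $\countVar_u^1>\ectMinStrictPositif^{0}$.
  \end{itemize}
\end{itemize}

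The main subtlety is the second case. Unlike in Lemma~\ref{lem:strict-crois-B}, the value $\countVar_u$ can only stay unchanged when $\maxCount_u=0$, and that forces $\countVar_u^0=0$. This is exactly what the terminality of $\InAlg$ buys us: without it, an enabled node with $\maxCount_u=0$ would set $\countVar_u:=1$, which would break the strict inequality. We must also check that a node sleeping in $\gamma_0$ with $\sleepingDuration_u=1$ is impossible. This holds because a duration of $1$ means it wakes in the very next round, so it falls under an awake case. No pseudoCorrect-closure argument beyond Lemma~\ref{lem:pseudo-correct} is needed.
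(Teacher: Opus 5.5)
Your proof is correct and follows the paper's case analysis essentially verbatim. In both: a sleeping node has $\countVar_u=0$ by $\pseudoCorrect$; an awake node with $\maxCount_u^0=0$ only executes line~\ref{a2:l1}, so $\countVar_u$ stays $0$; line~\ref{a2:l6} gives $0$; and line~\ref{a2:l5} gives $\countVar_u^1=\maxCount_u^0\ge \ectMinStrictPositif^0+1$, with terminality of $\InAlg$ ruling out the $\max(1,0)=1$ case. Your explicit remark that a sleeping node cannot have $\sleepingDuration_u=1$ tacitly assumes that $u$'s clock was last set from $\sleepingDuration_u$. This holds once every node has woken at least once, e.g.\ after the first $SP$ rounds, which is how the lemma is applied, and the paper relies on the same implicit assumption.
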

\begin{proof}
Let $u$ be a node. We study the behavior of $u$ during the round from $\gamma_0$.
Assume that $u$ executes line~\ref{a2:l5} during the round from $\gamma_0$.
By hypothesis, in $\gamma_0$, $u$ is not enabled for $\InAlg$; 
so $\maxCount_u \neq 0$ in $\gamma_0$. 
Thus $\maxCount_u \geq \ectMinStrictPositif^{0}+1$. 
Hence, the value of $\countVar_u > \ectMinStrictPositif^{0}$, 
after the  round from $\gamma_0$.

Assume that $u$ executes line~\ref{a2:l6} during the round from $\gamma_0$.
After the round,  $\countVar_u = 0$.

Assume that $u$ is awake and does not execute line~\ref{a2:l5} or~\ref{a2:l6} during the round from $\gamma_0$. 
So,  $\maxCount_u=0$ in $\gamma_0$ and the value of $\countVar_u$ is unchanged during this round. 
Therefore, the value of $\countVar_u$ is still zero after the round. 

Assume that $u$ is not awake. 
As the predicate $\pseudoCorrect$ is verified in $\gamma_0$;
after the round,  $\countVar_u = 0$.
\end{proof}

\begin{corollary}
\label{cor:strict-crois}
Let $\gamma_0$ a $\pseudoCorrect$ configuration of Algorithm~\ref{alg-sleeping-fast} 
where a terminal configuration of $\InAlg$ is reached.
After at most $D.SP+T$ rounds, a legitimate configuration is reached.
The set of legitimate configurations is closed.
\end{corollary}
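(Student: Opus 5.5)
The plan is to iterate Lemma~\ref{lem:strict-crois} round by round, using $\ectMinStrictPositif$ as a potential function. Closure of terminality and of $\pseudoCorrect$ comes first.

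\textbf{Step 1: invariants.} $\pseudoCorrect$ is closed by Lemma~\ref{lem:pseudo-correct}. Since $\InAlg$ is terminal in $\gamma_0$, no node is enabled for $\InAlg$. The transformer changes the variables of $\InAlg$ only by executing an action of $\InAlg$. Hence every later configuration is still terminal for $\InAlg$, and also $\pseudoCorrect$. Lemma~\ref{lem:strict-crois} therefore applies at every round, not only the first.

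\textbf{Step 2: the potential strictly increases.} Suppose $\ectMinStrictPositif^{t}>0$. By Lemma~\ref{lem:strict-crois}, every node $u$ has either $\countVar_u^{t+1}=0$ or $\countVar_u^{t+1}>\ectMinStrictPositif^{t}$. Thus either all $\countVar$ values are $0$ in $\gamma_{t+1}$, or $\ectMinStrictPositif^{t+1}\ge \ectMinStrictPositif^{t}+1$. Also, a node with $\countVar_u=0$ can become positive only through line~\ref{a2:l5}. That would require $u$ to be enabled in $\InAlg$ or $\maxCount_u\neq 0$. The first is excluded. The second yields a value $\geq \ectMinStrictPositif+1$, which is consistent with the above, and in particular no value $1$ is ever created. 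Once all $\countVar$ are $0$, every $\maxCount$ is $0$, so no node executes line~\ref{a2:l5} and the all-zero state persists.

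\textbf{Step 3: counting rounds.} $\countVar$ takes values in $[0,D.SP+T]$, and $\ectMinStrictPositif\ge 1$ whenever it is positive. So after at most $D.SP+T$ rounds the potential can no longer increase, and all $\countVar$ must be $0$. The boundary case is a node with $\maxCount_u=D.SP+T$. Such a node executes line~\ref{a2:l6} and resets to $0$, so values never exceed the range. In the resulting configuration every node has $\countVar_u=0$ and, by $\pseudoCorrect$, $\sleepingDuration_u=SP$. Every node is also disabled for $\InAlg$. This is exactly a legitimate configuration.

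\textbf{Step 4: closure.} From a legitimate configuration, each awake node is disabled in $\InAlg$ and sees $\maxCount_u=0$, so it executes line~\ref{a2:l1} and nothing changes except clocks. Sleeping nodes change nothing either. Hence the set of legitimate configurations is closed.

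The main obstacle is making the off-by-one count in Step 3 exactly $D.SP+T$. This depends on the initial potential being at least $1$ and on the cap being reached precisely when line~\ref{a2:l6} fires. I would handle it by explicitly tracking that $\ectMinStrictPositif^{t}\ge t+1$ as long as it is positive.
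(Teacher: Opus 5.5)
Your proposal is correct and follows the paper's proof. Both use $\ectMinStrictPositif$ as a potential that strictly increases each round by Lemma~\ref{lem:strict-crois} and is bounded by $D.SP+T$, and both get closure from the fact that awake nodes in a legitimate configuration only execute line~\ref{a2:l1}. Your version is slightly more careful in two places: you check explicitly that terminality of $\InAlg$ and $\pseudoCorrect$ persist, so the lemma can be iterated, and your invariant $\ectMinStrictPositif^{t}\ge t+1$ corrects the paper's count of $D.SP+T-\ectMinStrictPositif^{0}$ rounds, which is off by one but still within the stated $D.SP+T$ bound.
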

\begin{proof}
Let $\gamma_0$ be a configuration.
If $\ectMinStrictPositif^{0} = 0$ then $\gamma_0$ is a legitimate configuration - 
every node $u$ verifies $\countVar_u =0$ and $\sleepingDuration_u := SP$.
Only line~\ref{a2:l1} is executed by the awake nodes. 
So, the reached configuration is legitimate.

Assume that $\ectMinStrictPositif^{0} > 0$.
Let $\gamma_1$ be the configuration reached from $\gamma_0$.
According to lemma~\ref{lem:strict-crois}, 
$\ectMinStrictPositif^{1} > \ectMinStrictPositif^{0}$ or $\gamma_1$ is a legitimate configuration. 
As $\ectMinStrictPositif^{\gamma}$ is bounded by $D.SP+T$. 
After $D.SP+T - \ectMinStrictPositif^{0}$ rounds, 
every node $u$ verifies $\countVar_u = 0$
and $\sleepingDuration_u := SP$.
\end{proof}

\begin{theorem}
Let $\epsilon= \gamma_0, \gamma_1, \cdots$ be an execution of Algorithm~\ref{alg-sleeping-fast}.
After $3(D.SP +T)+2SP$ rounds, a legitimate configuration is reached.
\end{theorem}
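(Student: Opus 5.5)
We chain the lemmas of this section in three phases, following the time budget $3(D.SP+T)+2SP = SP + \bigl((D+1).SP+T\bigr) + \bigl(D.SP+T\bigr) + \bigl(D.SP+T\bigr)$ (up to the rounding of one $SP$ term).

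\textbf{Phase 1.} By Lemma~\ref{lem:pseudo-correct}, after at most $SP$ rounds the configuration is $\pseudoCorrect$, and it stays so forever. We restart the analysis from this configuration, call it $\gamma_{SP}$.

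\textbf{Phase 2.} We apply Lemma~\ref{lem:ronde-T} to the suffix starting at $\gamma_{SP}$. Within $(D+1).SP+T$ further rounds, one of two things happens.
\begin{itemize}
\item[(a)] A terminal configuration of $\InAlg$ is reached, still $\pseudoCorrect$. Corollary~\ref{cor:strict-crois} then yields a legitimate configuration within $D.SP+T$ more rounds, and we are done.
\item[(b)] Some node $u$ has $\countVar_u=1$ in a configuration $\gamma_s$, with $s \le SP+(D+1).SP+T$. Lemma~\ref{lem:ronde-sync} applies from $\gamma_s$, which is $\pseudoCorrect$. It gives that $\gamma_{s+D.SP+T}$ is a terminal configuration of $\InAlg$, still $\pseudoCorrect$. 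Corollary~\ref{cor:strict-crois} then gives legitimacy after another $D.SP+T$ rounds.
\end{itemize}
Summing in case (b) gives $SP + (D+1).SP + T + 2(D.SP+T) = 3(D.SP+T) + 2SP$, which is the claimed bound. Case (a) is smaller.

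\textbf{Main obstacle.} Case (b) relies on the fact that once a terminal configuration of $\InAlg$ has been reached, it persists. This holds because $\InAlg$ is silent: no node is enabled for $\InAlg$, so Algorithm~\ref{alg-sleeping-fast} only modifies $\countVar$ and $\sleepingDuration$, never the $\InAlg$ variables.

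A subtler point is the use of Lemma~\ref{lem:ronde-sync}: the node with $\countVar_u=1$ may appear in the middle of an ongoing wave in which other nodes already carry large $\countVar$ values. I would check that the inductive argument in that lemma (via Lemma~\ref{lem:ronde-sync-v2}) only needs $u$'s own value, which bounds its neighbours' values from above. Because of this, every node stays awake for $T$ consecutive synchronous rounds by time $s+D.SP+T$, and these rounds simulate the synchronous scheduler, under which $\InAlg$ converges in $T$ rounds.

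If the off-by-one constants do not match exactly, I would absorb them by noting that Lemma~\ref{lem:ronde-T} already contains the $SP$ slack of Lemma~\ref{lem:pseudo-correct}, and adjust the phase boundaries accordingly.
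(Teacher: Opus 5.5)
Your proposal is correct and matches the paper's proof step for step. You reach a $\pseudoCorrect$ configuration within $SP$ rounds (Lemma~\ref{lem:pseudo-correct}), then within $(D+1).SP+T$ further rounds either Lemma~\ref{lem:ronde-T} gives a terminal configuration of $\InAlg$ directly, or a node with $\countVar_u=1$ appears and Lemma~\ref{lem:ronde-sync} gives one $D.SP+T$ rounds later, after which Corollary~\ref{cor:strict-crois} adds $D.SP+T$ rounds. Your sum $SP+(D+1).SP+T+2(D.SP+T)$ equals $3(D.SP+T)+2SP$ exactly, so no off-by-one absorption is needed (the paper's last line misprints it as $+SP$), and persistence of a terminal configuration of $\InAlg$ follows from terminality itself, not from silence.
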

\begin{proof}
From $\gamma_0$, a configuration $\gamma_r$ where every node verifies the predicate $\pseudoCorrect$ is reached after at most $SP$ rounds.

First case. There is a node $u$ such that $\countVar_u =1$ at $\gamma_{r'}$
where $\gamma_{r'}$ is one of the first $(D+1).SP +T$ configurations of $e$ after reaching $\gamma_{r}$.
According to Lemma~\ref{lem:ronde-sync}, 
$\gamma_{r''}$ with $r'' = r' + D.SP+T \leq 2(D.SP +T +SP) $ is a terminal configuration of $\InAlg$.

Second case. No node $u$ verifies $\countVar_u =1$ during 
the first $(D+1).SP +T$ configurations of $e$ after reaching $\gamma_{r}$.
According to Lemma~\ref{lem:ronde-T}, 
$\gamma_{r'}$ is a terminal configuration of $\InAlg$ with $r' = r +(D+1)SP+T \leq 
D.SP +T +2SP$.

To conclude, after $s= 2(D.SP+SP+T)$ rounds, 
a terminal configuration of $\InAlg$ is reached. 
According to Corollary~\ref{cor:strict-crois},
$\gamma_{s'}$ with $s' = s + D.SP+T \leq 3(D.SP +T)+SP$ is a silent legitimate configuration.
\end{proof}

\section{Conclusion}

In this paper, we have introduced a new model that incorporates energy efficiency into self-stabilization by allowing nodes to be inactive. We have introduced several algorithms to solve classical problems, with new techniques that revolve in particular around the use of prime numbers to simulate asynchrony, mutual exclusion, and pseudo-synchrony. We have also implemented two transformers to automatically make any self-stabilizing algorithm more energy-efficient, at the cost of some time-complexity overhead.

Upcoming work can focus on developing new techniques to improve the complexities of the algorithms we provide and solve other problems, such as leader election or spanning trees.

\newpage


\end{document}